\newcommand{\ket}[1]{|#1\rangle}
\newcommand{\braket}[2]{\langle #1|#2\rangle}
\newcommand{\cent}[0]{\mbox{\textcent}}
\newcommand{\dollar}[0]{\$}
\newcommand{\blank}{\#}
\newtheorem{lemma}{Lemma}
\newtheorem{theorem}{Theorem}
\newtheorem{corollary}{Corollary}
\newtheorem{fact}{Fact}
\theoremstyle{definition}
\newcommand{\weak}{\mathsf{weak}\mbox{-}}
\newcommand{\IP}[1]{\mathsf{IP(#1)}}
\newcommand{\perIP}[1]{\mathsf{IP_{1}(#1)}}
\newcommand{\perunIP}[1]{\mathsf{IP_{1,<1}(#1)}}
\newcommand{\IPw}[1]{\weak\mathsf{IP(#1)}}
\newcommand{\perIPw}[1]{\weak\mathsf{IP_{1}(#1)}}
\newcommand{\AM}[1]{\mathsf{AM(#1)}}
\newcommand{\perAM}[1]{\mathsf{AM_{1}(#1)}}
\newcommand{\AMw}[1]{\weak\mathsf{AM(#1)}}
\newcommand{\qAM}[1]{\mathsf{qAM(#1)}}
\newcommand{\perqAM}[1]{\mathsf{qAM_{1}(#1)}}
\newcommand{\perqAMw}[1]{\weak\mathsf{qAM_{1}(#1)}}
\newcommand{\DTIME}[1]{\mathsf{DTIME( #1 )}}
\newcommand{\ATIME}[1]{\mathsf{ATIME( #1 )}}
\newcommand{\DSPACE}[1]{\mathsf{DSPACE( #1 )}}
\newcommand{\NSPACE}[1]{\mathsf{NSPACE( #1 )}}
\newcommand{\ASPACE}[1]{\mathsf{ASPACE( #1 )}}
\newcommand{\qASPACE}[1]{\mathsf{qASPACE( #1 )}}
\newcommand{\ptime}{\mathsf{P}}
\newcommand{\logspace}{\mathsf{L}}
\newcommand{\alogspace}{\mathsf{AL}}
\newcommand{\qalogspace}{\mathsf{qAL}}
\newcommand{\pspace}{\mathsf{PSPACE}}
\newcommand{\apspace}{\mathsf{APSPACE}}
\newcommand{\qapspace}{\mathsf{qAPSPACE}}
\newcommand{\exptime}{\mathsf{EXPTIME}}
\newcommand{\expspace}{\mathsf{EXPSPACE}}
\newcommand{\Next}{\mathtt{next}}
\newcommand{\subsetsum}{\mathtt{SUBSET\mbox{-}SUM}}
\title{Turing-equivalent automata using a fixed-size quantum memory\thanks{A preliminary report on some contents of this paper was \cite{YS11C}.}}
\author{Abuzer Yakary{\i}lmaz\thanks{The author was partially supported by FP7 FET-Open project QCS.}\\
\small University of Latvia, Faculty of Computing, Raina bulv. 19, R\={\i}ga, LV-1586, Latvia 
\\
\small \texttt{abuzer@lu.lv}
\\ \\
\today
}
\date{\small Keywords: 
quantum complexity theory, 
Turing-equivalence,
space-bounded computation,
interactive proof systems, 
Arthur-Merlin games, 
alternation, 
complexity classes, 
finite state automata}
\begin{document}

\maketitle

\thispagestyle{empty}

\begin{abstract}

In this paper, we introduce a new public quantum interactive proof system and the first quantum alternating Turing machine: qAM proof system and qATM, respectively. Both are obtained from their classical counterparts (Arthur-Merlin proof system and alternating Turing machine, respectively,) by augmenting them with a fixed-size quantum register. We focus on space-bounded computation, and obtain the following surprising results: Both of them with constant-space are Turing-equivalent. More specifically, we show that for any Turing-recognizable language, there exists a constant-space weak-qAM system, (the nonmembers do not need to be rejected with high probability), and we show that any Turing-recognizable language can be recognized by a constant-space qATM even with one-way input head.
		
For strong proof systems, where the nonmembers must be rejected with high probability, we show that the known space-bounded classical private protocols can also be simulated by our public qAM system with the same space bound. Besides, we introduce a strong version of qATM: The qATM that must halt in every computation path. Then, we show that strong qATMs (similar to private ATMs) can simulate deterministic space with exponentially less space. This leads to shifting the deterministic space hierarchy exactly by one-level. The method behind the main results is a new public protocol cleverly using its fixed-size quantum register. Interestingly, the quantum part of this public protocol cannot be simulated by any space-bounded classical protocol in some cases.

\end{abstract}

\newpage

%%%%%%%%%%%%%%%%%%%%%%%%%%%%%%%%%%%%%%%%%%%%%%%%%%%%%%%%%%%%%%%%%%%%%%%%%%%%%%%%%%%%%%%%%%%
%%%%%%%%%%%%%%%%%%%%%%%%%%%%%%%%%%%%%%%%%%%%%%%%%%%%%%%%%%%%%%%%%%%%%%%%%%%%%%%%%%%%%%%%%%%
%%%%%%%%%%%%%%%%%%%%%%%%%%%%%%%%%%%%%%%%%%%%%%%%%%%%%%%%%%%%%%%%%%%%%%%%%%%%%%%%%%%%%%%%%%%
\section{Introduction}
\label{sec:intro}
%%%%%%%%%%%%%%%%%%%%%%%%%%%%%%%%%%%%%%%%%%%%%%%%%%%%%%%%%%%%%%%%%%%%%%%%%%%%%%%%%%%%%%%%%%%
%%%%%%%%%%%%%%%%%%%%%%%%%%%%%%%%%%%%%%%%%%%%%%%%%%%%%%%%%%%%%%%%%%%%%%%%%%%%%%%%%%%%%%%%%%%
%%%%%%%%%%%%%%%%%%%%%%%%%%%%%%%%%%%%%%%%%%%%%%%%%%%%%%%%%%%%%%%%%%%%%%%%%%%%%%%%%%%%%%%%%%%

\pagenumbering{arabic}

Anne Condon, in her famous PhD thesis \cite{Co89}, introduced a general computational model,
i.e. \textit{probabilistic game automaton},
that unifies many important computational models and concepts:
\textit{Alternation} of Chandra, Kozen, and Stockmeyer \cite{CKS81}, 
\textit{private alternation} of Reif \cite{Re84},
\textit{Arthur-Merlin games} of Babai \cite{Ba85},
\textit{interactive proof systems} of Goldwasser, Micali, and Rackoff \cite{GMR89},
\textit{game against nature} of Papadimitriou \cite{Pa85}, etc.
In this framework, Arthur-Merlin (AM) proof systems and alternation are the ``weakest",
since both are the games 
with \textit{complete information}.
In this paper, we introduce two new games by augmenting these two models with a fixed-size\footnote{The size of the register does not depend on the length of the input.} quantum register,
namely \textit{qAM} and \textit{q-alternation}, respectively.
We focus our attention to space-bounded computation, and obtain the following surprising results:
Both new games with constant space are \textit{Turing-equivalent}.

Interactive proof (IP) systems and AM proof systems
were introduced by  Goldwasser, Micali, and Rackoff \cite{GMR85}
and Babai \cite{Ba85}, respectively.
In time-bounded computation, it was shown that the class of languages having a polynomial-time IP or AM system
is identical to $ \pspace $ \cite{Sha92}.
In space-bounded computation, IP systems are more powerful than AM systems
for any space-bound \cite{Co89,Co91,DS92}, e.g.
the class of languages having a logarithmic-space AM system is identical to $ \ptime $, and
the class of languages having a logarithmic-space IP system is a superset of $ \exptime $.
It was also shown that \cite{CL89}
for any Turing-recognizable language, there exits a constant-space \textit{weak}\footnote{
	The verifier does not need to halt with high probability for the nonmembers of the corresponding language.
}-IP system.

There are many different definitions of quantum interactive proof (QIP) systems \cite{Kni96,Kit99,Wat99B,AN02,MW05}.
\begin{comment}
The notion of quantum proof was introduced by Knill \cite{Kni96}.
Quantum Merlin-Arthur (QMA) system, in which the prover provides a quantum proof to the verifier,
was introduced by Kitaev \cite{Kit99}.
If the prover provides a classical proof, then QMA system is called MQA (\cite{Wa09}) or QCMA, 
proposed by Aharonov and Naveh \cite{AN02}.
Quantum Arthur-Merlin (QAM) systems, 
in which the verifier sends classical random bits and the prover response with a quantum proof,
was introduced by Marriott and Watrous \cite{MW05}.
Quantum interactive proof systems (QIP), in which both parties can process and send/receive quantum information,
was introduced by Watrous \cite{Wat99B},
\end{comment}
In time-bounded computation, similar to the classical case,
the class of languages having a polynomial-time QIP system were shown to be identical to $ \pspace $ \cite{JJUW11}.
In space-bounded computation, 
the only published work belongs to Nishimura and Yamakami \cite{NY09}.
Their results, unfortunately, are model-dependent, and so do not reflect the full power of QIP systems
since they use some restricted quantum automaton models as the verifiers.

Our qAM proof system is the first public space-bounded QIP system,
and we present the first non-trivial results on space-bounded QIP systems.
We show how a fixed-size quantum register 
leads to unexpected increase in the computational power of a public proof system.
Our main result on the qAM system is that
there exists a constant-space weak-qAM protocol for any Turing-recognizable language.
In the classical case, a similar result is known for constant-space weak private protocols \cite{CL89}.
However, our protocol is not only public but also has perfect-completeness.
Our second result is that for any known $ s(n) $ space-bounded private protocol, 
there exists an equivalent $ s(n) $ space-bounded qAM protocol,
where $ s(n) \in O(1) \cup \Omega(\log(n)) $ is space-constructible.
Therefore, we can say that logarithmic-space is sufficient for qAM systems
for any language in $ \exptime $.

Alternation was introduced independently by Chandra and Stockmeyer \cite{CS76} and Kozen \cite{Ko76}
as a generalization of nondeterminism. It was shown that 
alternation shifts the deterministic hierarchy 
\[
	\logspace \subseteq \ptime \subseteq \pspace \subseteq \exptime \subseteq \expspace
\]
by exactly one level \cite{CKS81}. 
On the other hand, the class of languages recognized by 
alternating finite automata is still the regular languages \cite{CKS81}.
Reif \cite{Re84} introduced private alternation by assuming that
universal player can hide some information from the existential player,
and showed that private alternation shifts the deterministic space hierarchy
\[
	\logspace \subsetneq \pspace \subsetneq \expspace
\]
by exactly one level.

Our q-alternation is \textit{the first definition} of alternation in the domain of quantum computation.
Our main result on q-alternation is that
one-way q-alternating finite automata can recognize any Turing-recognizable language.
In the classical case, the class of languages recognized by any space-bounded (private) ATMs
is a proper subset of decidable languages \cite{Re84}.
Since q-alternating machines may not halt the computation in every path, 
we also introduce the strong version of q-alternation by forbidding infinite computations.
Then, we show that strong q-alternation, similar to private alternation,
shifts the deterministic space hierarchy by exactly one level.

%%%%%%%%%%%%%%%%%%%%%%%%%%%%%%%%%%%%%%%%%%%%%%%%%%%%%%%%%%%%%%%%%%%%%%%%%%%%%%%%%%%%%%%%%%%
%%%%%%%%%%%%%%%%%%%%%%%%%%%%%%%%%%%%%%%%%%%%%%%%%%%%%%%%%%%%%%%%%%%%%%%%%%%%%%%%%%%%%%%%%%%
%%%%%%%%%%%%%%%%%%%%%%%%%%%%%%%%%%%%%%%%%%%%%%%%%%%%%%%%%%%%%%%%%%%%%%%%%%%%%%%%%%%%%%%%%%%
\section{Preliminaries}
\label{sec:pre}
%%%%%%%%%%%%%%%%%%%%%%%%%%%%%%%%%%%%%%%%%%%%%%%%%%%%%%%%%%%%%%%%%%%%%%%%%%%%%%%%%%%%%%%%%%%
%%%%%%%%%%%%%%%%%%%%%%%%%%%%%%%%%%%%%%%%%%%%%%%%%%%%%%%%%%%%%%%%%%%%%%%%%%%%%%%%%%%%%%%%%%%
%%%%%%%%%%%%%%%%%%%%%%%%%%%%%%%%%%%%%%%%%%%%%%%%%%%%%%%%%%%%%%%%%%%%%%%%%%%%%%%%%%%%%%%%%%%

For any string $ x $, $ |x| $ is the length of $ x $ and 
$ x[j] $ is its $ j^{th} $ symbol, where $ 1 \leq j \leq |x| $.
``$ \# $" is the blank symbol.
Moreover, we represent $ O(1) $ with $ \mathsf{1} $ and $ O(\log(n)) $ with $ \mathsf{log} $.

We assume that the reader is familiar with deterministic, nondeterministic, and 
alternating Turing machines, (DTM, NTM, and ATM, respectively,)\footnote{
	We refer the reader to \cite{Re84} for the details of private ATMs although it is not necessary 
	to follow the content.	
} and
their time- and space-bounded complexity classes
$ \mathcal{X}\mathsf{TIME} $ and $ \mathcal{X}\mathsf{SPACE} $, where $ \mathcal{X} $ is
``$ \mathsf{D} $", ``$ \mathsf{N} $", and ``$ \mathsf{A} $", respectively; and,
the following standard classes:
\begin{itemize}
	\item $ \ptime = \cup_{k > 0} \DTIME{n^{k}} $
		and
		$ \exptime = \cup_{k > 0} \DTIME{2^{O(n^{k})}} $;
	\item $ \logspace =  \DSPACE{log} $,
		$ \pspace = \cup_{k > 0} \DSPACE{n^{k}} $, and
		$ \expspace = \cup_{k > 0} \DSPACE{2^{O(n^{k})}} $; 
	\item $ \alogspace =  \ASPACE{log} $ and $ \apspace = \cup_{k > 0} \ASPACE{n^{k}} $.
\end{itemize}

In the following part, we provide the necessary background, based on \cite{DS92,Co93A}, 
for the proof systems.
For a detailed survey on space-bounded interactive proof systems,
we refer the reader to \cite{Co93A}.
An interactive proof system (IPS)  consists of a prover ($ P $) and a verifier ($ V $).
The verifier is a (resource-bounded) probabilistic Turing machine having a read-only input tape, 
a read/write work tape, and a source of random bits.
Each head has two-way access to its tape.
The states of the verifier are partitioned into reading, communication, and halting (accepting or rejecting) states,
and it has a special communication 
cell for communicating
with the prover,
where the capacity of the cell is finite.

The one-step transitions of the verifier can be described as follows.
When in a reading state, the verifier firstly flips an unbiased coin and then
determines its next configuration based on the symbol under the tape heads, 
the state, and the outcome of the coin flip.
When in a communication symbol, 
The verifiers writes a symbol on the communication cell with respect to the current state.
Then, in response, the prover writes a symbol in the cell.
Based on the state and the symbol written by prover, the verifier defines the next state of the verifier.

The prover $ P $ is specified by a prover transition function, 
which determines the response of the prover to the verifier based on the input and 
the verifier's communication history until then.
Note that this function does not need to be \textit{computable}.

For a given input $ x $, the probability that $ (P,V) $ accepts (rejects) $ x $ is the
cumulative accepting (rejecting) probabilities taken over all branches of the verifier.
The prover-verifier pair $ (P,V) $ is an IPS for $ L $
with error probability $ \epsilon < \frac{1}{2} $ if
\begin{enumerate}
	\item[1.] for all $ x \in L $, the probability that $ (P,V) $ accepts $ w $ is greater than $ 1-\epsilon $,
	\item[2.] for all $ x \notin L $, and all provers $ P^* $, the probability that $ (P^*,V) $ rejects $ x $
		is greater than $ 1-\epsilon $.
\end{enumerate}
These conditions are known as completeness and soundness, respectively.
Now, we define some variants of IP systems by restricting and/or relaxing the above conditions.

The prover-verifier pair $ (P,V) $ is an \textit{weak-IPS for $ L $ with error probability}
$ \epsilon < \frac{1}{2} $ if we relax the soundness condition (2) as follows:
\begin{enumerate}
	\item[2$ ' $.] for all $ x \notin L $, and all provers $ P^* $, the probability that $ (P^*,V) $ accepts $ x $
		is at most $ \epsilon $.
\end{enumerate}

The prover-verifier pair $ (P,V) $ is a \textit{(weak or not) IPS having perfect completeness} for $ L $
if we restrict the completeness condition (1) as follows:
\begin{enumerate}
	\item[1$ ' $.] for all $ x \in L $, the probability that $ (P,V) $ accepts $ x $ is exactly equal to 1.
\end{enumerate}

An Arthur-Merlin (AM) proof system (or a public-coin IPS) is a special case of IPS such that after each coin toss, 
the outcome is automatically written on the communication cell, and so the prover can have
complete information about the computation of the verifier.

We will use $ \IP{\cdot} $ and $ \AM{\cdot} $ to represent the space-bounded complexity classes
for IP and AM systems, respectively.
Note that the space bound is always defined on the verifiers.
The ones having perfect-completeness will be shown by 
$ \perIP{\cdot} $ and $ \perAM{\cdot} $, respectively.
We will use prefix ``$ \weak $" to represent their ``weak" versions.
%The classes defined by unbounded-error proof systems having perfect completeness
%will be shown by $ \perunIP{\cdot} $ and $ \perunAM{\cdot} $, respectively.

Some known facts, related to our results, on AM and IP systems as given below.
(We also refer the reader to Appendix \ref{app:review-private-protocols} for the details of some private protocols.)
\begin{fact}
	\label{fact:AM}
	\cite{CL88,Co89,DS92}
	For any space-constructible $ s(n) = \Omega(\log n) $, 
	\[
		\AMw{s(n)} = \AM{s(n)} = \ASPACE{s(n)}.
	\]
	Moreover, $ \AMw{1} \subsetneq \AMw{log} = \AM{log} = \ptime $.	
\end{fact}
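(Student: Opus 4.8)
The plan is to reduce the entire statement to the single equivalence $ \AM{s(n)} = \ASPACE{s(n)} $ for space-constructible $ s(n) = \Omega(\log n) $, together with the Chandra--Kozen--Stockmeyer characterization $ \ASPACE{s(n)} = \DTIME{2^{O(s(n))}} $ \cite{CKS81}; the displayed corollaries then drop out. One inclusion of the central equivalence is free: if a proof system rejects every $ x \notin L $ with probability exceeding $ 1-\epsilon $, then it accepts such $ x $ with probability at most $ \epsilon $, so strong soundness implies weak soundness and $ \AM{s(n)} \subseteq \AMw{s(n)} $. Hence it suffices to establish the cycle $ \ASPACE{s(n)} \subseteq \AM{s(n)} \subseteq \AMw{s(n)} \subseteq \ASPACE{s(n)} $, which leaves exactly two substantive inclusions. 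This framing has the pleasant feature that the collapse of the weak and strong versions becomes automatic, rather than requiring a separate soundness-amplification argument.

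For the lower bound $ \ASPACE{s(n)} \subseteq \AM{s(n)} $ I would pass through the deterministic-time form $ \DTIME{2^{O(s(n))}} $ and build, for a machine running in time $ T = 2^{O(s(n))} $, a public-coin verifier using only $ O(s(n)) $ workspace. The prover commits to the computation tableau (which has $ 2^{O(s(n))} $ cells, so a cell index fits in $ O(s(n)) $ bits, precisely where $ \Omega(\log n) $ is used). The verifier then checks local step-to-step consistency by sampling random positions and by maintaining random-coefficient fingerprints of the rows, so that any incorrect computation is exposed with constant probability while a correct one is always accepted. This gives perfect completeness, genuine (halting) soundness, and therefore also membership in $ \AM{s(n)} $. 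I expect this direction to be the main obstacle: verifying an exponentially long computation with only $ O(s(n)) $ space and public coins, while preserving a constant soundness gap, is exactly where the fingerprinting and consistency-checking must be engineered with care.

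For the upper bound $ \AMw{s(n)} \subseteq \ASPACE{s(n)} $ I would again use $ \ASPACE{s(n)} = \DTIME{2^{O(s(n))}} $ and show $ \AMw{s(n)} \subseteq \DTIME{2^{O(s(n))}} $ by computing the optimal acceptance probability of the game. Since the verifier has $ N = 2^{O(s(n))} $ configurations, the game against an optimal prover is a finite Markov decision process on these configurations: the optimal value of reaching an accepting halting configuration satisfies a Bellman recurrence (maximum at communication configurations, expectation at coin-flip configurations, $ 0/1 $ at halting configurations), it is rational with denominator bounded in terms of $ N $, and it is computable exactly by linear programming in time polynomial in $ N $, i.e. in $ 2^{O(s(n))} $ time. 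Comparing this value against the acceptance threshold decides membership, using the promise that a (weak or strong) proof system separates the value from the threshold by a constant; this single argument covers both $ \AM{s(n)} $ and $ \AMw{s(n)} $.

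Finally I would read off the corollaries. Specializing to $ s(n) = \log n $ and invoking $ \ASPACE{log} = \DTIME{2^{O(\log n)}} = \ptime $ gives $ \AMw{log} = \AM{log} = \ptime $. The strict inclusion $ \AMw{1} \subsetneq \AMw{log} $ is the one place where the machinery above genuinely fails, since both the index-counting in the lower bound and the configuration-enumeration in the upper bound require $ \Omega(\log n) $ space; I would therefore argue it separately. Containment is immediate from the extra space, and strictness follows from the known limitations of constant-space, i.e. finite-state, public-coin verifiers: such verifiers cannot capture all of $ \ptime $, so one exhibits a language in $ \ptime = \AMw{log} $ admitting no constant-space weak-AM protocol, as established in \cite{CL88,DS92}.
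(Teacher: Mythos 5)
The paper never proves this statement---it is quoted as a known Fact from \cite{CL88,Co89,DS92}---so your proposal has to stand on its own, and its lower-bound direction does not: the inclusion $\ASPACE{s(n)} \subseteq \AM{s(n)}$ as you argue it is broken. Your overall decomposition (the cycle $\ASPACE{s(n)} \subseteq \AM{s(n)} \subseteq \AMw{s(n)} \subseteq \ASPACE{s(n)}$, which makes the weak/strong collapse automatic) and your upper bound via the finite Markov-decision-process / linear-programming computation of the optimal acceptance value are sound. But the tableau-plus-fingerprint verifier cannot exist in a public-coin system. By the paper's definition of an AM system, every coin outcome is written on the communication cell the instant it is flipped, so the prover knows all randomness the verifier has committed to so far. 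Since the verifier has only $O(s(n))$ space it cannot store the $2^{O(s(n))}$-cell tableau and choose its randomness afterwards; it must fix the fingerprint coefficients (or the positions it will sample) while the tableau is being streamed. Consequently, when the prover sends a row it already knows the evaluation point $r$ and the target fingerprint value, and, being computationally unbounded, it can fabricate an incorrect but well-formed row with a colliding fingerprint (such collisions abound by pigeonhole, since the fingerprint range must fit in $O(s(n))$ bits), or make the tableau locally consistent exactly at the pre-announced sampled positions and corrupt it elsewhere. No constant soundness gap survives. Verifying a prover-streamed computation by random spot-checks or fingerprints is inherently a \emph{private-coin} technique: it is precisely how Facts \ref{fact:weakIPconst} and \ref{fact:IPconst} are obtained, and the paper stresses that no classical \emph{public} protocol of any space bound can perform such a check---that impossibility is exactly what motivates the quantum register in Theorem \ref{thm:weak-qAM}.

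The proofs in the cited works avoid your obstacle by never asking the prover for configurations at all. Because $s(n) = \Omega(\log n)$, one simulates the $\ASPACE{s(n)}$ machine directly rather than its $\DTIME{2^{O(s(n))}}$ equivalent: a configuration of the alternating machine occupies only $O(s(n))$ bits, so the verifier keeps the current configuration on its own work tape, out of the prover's reach, and nothing about configurations ever needs verification. The prover supplies only the existential choices; the public coins resolve the universal ones. The genuine difficulty is then that a fixed rejecting leaf is reached with probability only about $2^{-b}$ after $b$ binary universal branchings, and it is handled by rescaling: at an accepting leaf the verifier accepts only with probability roughly $4^{-b}$ (implemented incrementally, restarting the round otherwise), so that a single rejecting path dominates the cumulative accepting probability of up to $2^{b}$ accepting paths, giving perfect completeness and strong soundness simultaneously. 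This is the Reif-style simulation which the paper recounts and itself adapts in the proof of Theorem \ref{thm:strong-qAM-const}, where the extra register state $q_5$ plays exactly this rescaling role. With that substitution your cycle closes, and your MDP upper bound together with the separation $\AMw{1} \subsetneq \AMw{log}$ cited from \cite{DS92} completes the Fact.
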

\begin{fact}
	\label{fact:weakIPconst}
	\cite{CL89}
	$ \mbox{Any Turing recognizable language is in } \IPw{1} $.
\end{fact}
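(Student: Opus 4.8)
The plan is to reduce membership in a Turing-recognizable language $L$ to the verification of an accepting computation history, and to let the prover supply that history while a constant-space verifier performs only local, randomized consistency checks. Fix a deterministic Turing machine $M$ with $L(M)=L$, so that $M$ halts and accepts exactly on the inputs in $L$ (and may run forever otherwise). On input $x$ the verifier $V$ asks the prover to write, symbol by symbol through the communication cell, a purported accepting computation $C_0 \# C_1 \# \cdots \# C_t$ of $M$ on $x$, where each $C_i$ encodes a full configuration (tape contents, head position, control state). The verifier must be convinced of three things: that $C_0$ is the genuine initial configuration determined by $x$, that $C_t$ is an accepting configuration, and that each consecutive pair satisfies $C_i \vdash_M C_{i+1}$. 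The first and last checks are easy for a finite-state verifier, since it can scan its read-only input tape against $C_0$ and pattern-match an accepting $C_t$ using $O(1)$ states.

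The core issue is checking step consistency with only constant space while the configurations grow unboundedly. Here I would exploit the locality of Turing transitions: $C_{i+1}$ agrees with $C_i$ cell by cell except in the $O(1)$ cells around the head, where the change is dictated by the transition function of $M$. So if the prover is required to present the two configurations of a step \emph{interleaved}, as $a_1 b_1 a_2 b_2 \cdots a_\ell b_\ell$ with $a_j$ (resp.\ $b_j$) the $j$-th symbol of $C_i$ (resp.\ $C_{i+1}$), then a finite-state verifier can verify both the verbatim copying away from the head and the correct application of the transition near the head, because it only ever compares adjacent symbols. The danger is that nothing so far forces the string the prover calls $C_{i+1}$ while checking step $i$ to be the \emph{same} string it later calls $C_i$ while checking step $i+1$: the prover could rewrite history between steps, and the verifier cannot store a whole configuration to carry it across.

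The main obstacle, therefore, is a constant-space \emph{string-equality} (equivalently, position-matching) subroutine: given two long strings presented at different moments, the verifier must detect any discrepancy with probability bounded below by a positive constant, even though it cannot count to, or remember, a position. This is exactly where private randomness is essential, consistent with the fact that public constant-space protocols are weaker (Fact \ref{fact:AM}). I would have the verifier use its coins to single out one random cell index and compare the corresponding symbols in the two presentations, the index being fixed implicitly through an interactive position-matching dialogue with the prover rather than stored explicitly; any mismatch is then caught with constant probability. Because a single round catches a cheating prover with only constant probability, the gap must be amplified; since the verifier has unbounded time but only constant space, it cannot count rounds, so I would repeat the whole verification and use a coin-flip continuation rule (accept only after passing a geometrically distributed number of independent rounds) to drive the acceptance probability of every cheating prover below $\epsilon$ while keeping the honest prover's acceptance probability above $1-\epsilon$. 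The \emph{weak} soundness condition is precisely what makes this possible: when the prover stalls or cheats on an input not in $L$, the verifier is allowed to loop forever rather than reject, so no halting guarantee is needed on non-members, and the construction can even be arranged to have perfect completeness. Designing the randomized position-matching check so that it is simultaneously constant-space and catches every inconsistency with constant probability, and combining the rounds without counting, is the step I expect to be the hardest.
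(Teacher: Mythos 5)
There is a genuine gap, and it sits exactly where you predicted it would: the constant-space randomized string-equality (position-matching) subroutine does not exist in the form your argument needs. A finite-state verifier cannot select, store, or later re-locate a cell index in a configuration of unbounded length, so the comparison of ``the same'' position across two presentations has no implementation: an ``interactive position-matching dialogue with the prover'' places the alignment in the hands of the adversary, destroying soundness, and the only known private-coin workaround --- Dwork and Stockmeyer's trick of comparing symbols at a fixed distance using the \emph{input head} as a physical marker \cite{DS92} --- is unavailable here because a recognizer for an arbitrary Turing-recognizable language uses space unbounded in $|x|$, so configuration lengths are not tied to anything the verifier can measure. Moreover, even granting a uniformly random aligned position, a discrepancy confined to one cell of an $\ell$-cell configuration is caught with probability about $1/\ell$, not with constant probability as you claim; since $\ell$ is unbounded, no length-independent continuation rule (geometric or otherwise) can then push every cheating prover's acceptance probability below a fixed $\epsilon$. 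Finally, your parenthetical claim that the construction ``can even be arranged to have perfect completeness'' is provably false: by Theorem \ref{thm:char-of-perfectIP}, $\perIPw{\mathsf{1}} \subseteq \perIPw{\mathsf{log}} \subseteq \ASPACE{2^{2^{O(\log n)}}}$, a class of decidable languages, so a perfect-completeness classical weak-IPS with a constant-space verifier cannot capture all Turing-recognizable languages; two-sided error is necessary (this is precisely the obstruction that the paper's quantum protocol circumvents).

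The actual proof of \cite{CL89}, as reviewed in Appendix \ref{app:review-private-protocols}, avoids your alignment problem by changing \emph{what is simulated}: instead of a Turing machine with arbitrary tape configurations, one simulates a two-way finite automaton with two counters, which is Turing-equivalent \cite{Mi67} and whose configurations are essentially unary encodings of the counter values. The successor-check then degenerates from general string equality to comparing lengths of unary blocks (equal, or off by one), and \emph{that} is a task a constant-space private-coin verifier can do with bounded error via Freivalds' coin-flipping statistics for $\mathtt{FRE}$ \cite{Fr81}: the verifier flips a coin per symbol and compares the resulting low-probability events, so no position ever needs to be remembered or re-found, and the per-round rejection probability caused by a defect is calibrated to dominate the per-round acceptance probability regardless of block length. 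Your interleaving idea correctly handles the local transition check, and your use of weak soundness (looping forever on stalling provers) matches the cited construction, but without the reduction to unary counters --- or some substitute for it --- the cross-step consistency check at the heart of your protocol cannot be carried out by a constant-space verifier.
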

\begin{fact}
	\label{fact:IPconst}
	\cite{DS92} 	
	$ \DTIME{2^{O(n)}} = \ASPACE{O(n)} \subseteq \perIP{1} $.
\end{fact}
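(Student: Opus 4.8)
The plan is to prove the statement in two parts. The equality $\DTIME{2^{O(n)}} = \ASPACE{O(n)}$ is an instance of the alternation theorem of Chandra, Kozen, and Stockmeyer \cite{CKS81}, which gives $\ASPACE{s(n)} = \DTIME{2^{O(s(n))}}$ for every space-constructible $s(n) = \Omega(\log n)$; setting $s(n) \in O(n)$ yields the claim, so nothing new is required here. The substantive part is the containment $\ASPACE{O(n)} \subseteq \perIP{1}$, asserting that every linearly-space-bounded alternating computation can be checked by a \emph{constant}-space verifier interacting with an all-powerful prover, and moreover with perfect completeness.

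For the containment I would fix $L \in \ASPACE{O(n)}$ recognized by an ATM $M$ using space $cn$, and simulate its computation as a two-player game in which the prover plays the existential moves and the verifier's coin tosses play the universal moves. On input $x$, the prover is asked to stream a single root-to-leaf path $C_0, C_1, \ldots, C_T$ of configurations through the communication cell, where at existential nodes the prover chooses the successor and at universal nodes the verifier's coin does. The verifier, which cannot store even one length-$cn$ configuration in its $O(1)$ work cells, instead performs three \emph{one-sided} checks: (i) $C_0$ is the initial configuration for $x$, verified by scanning the prover's claimed $C_0$ against the input with the two-way input head; (ii) $C_T$ is accepting; and (iii) every step $C_t \rightarrow C_{t+1}$ is a legal move of $M$. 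The only resource available for addressing the $cn$ columns of a configuration is the position of the input head, which ranges over $\Theta(n)$ values and thereby supplies the counter that the $O(1)$ work tape cannot; a constant register tracks the $O(1)$ overflow so that all of $\{1,\ldots,cn\}$ is reachable.

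The decisive verification is (iii). Here I would have the verifier fix a single random column $j \in \{1,\ldots,cn\}$ and \emph{monitor column $j$ throughout the entire streamed computation}, using head-counting to recognize when the stream is at column $j$ of each configuration. Because a move of $M$ is \emph{local}, the value $C_{t+1}[j]$ is determined by $C_t[j-1], C_t[j], C_t[j+1]$ together with the head/state data, so the verifier need only remember those $O(1)$ symbols as they pass and reject if the update to column $j$ is ever illegal. Any false computation must contain an illegal update at some column $j_0$, which is caught whenever $j = j_0$, i.e. with probability $\Omega(1/n)$. Perfect completeness follows because, when $x \in L$, the existential player has a winning strategy; an honest prover streams a genuine accepting path, every local check and the initial/final checks pass deterministically for all coin outcomes, and the verifier accepts with probability exactly $1$. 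Soundness holds because, when $x \notin L$, the coins realize a universal winning strategy with positive probability, forcing the prover either to a rejecting leaf (caught by (ii)) or to an illegal transition (caught by (iii)).

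The main obstacle I expect is \emph{amplifying the soundness gap while preserving the $O(1)$ space bound and perfect completeness}. A single monitored column detects a corrupted transition only with probability $\Omega(1/n)$, which does not meet the requirement that non-members be rejected with probability $> \tfrac{1}{2}$. Since the verifier is only space-bounded and may run for exponentially many steps, the remedy is to repeat the monitoring over many re-streamed transcripts, using probabilistic counting so the input head can locate and tally columns reliably with constant memory. The genuine difficulty is that a cheating prover, having full memory of the interaction, may send a different transcript in each round; making the repeated checks robustly independent so that the prover cannot exploit the verifier's tiny memory to hide its inconsistency between rounds is the crux, and is exactly where the constant-space interactive-proof machinery of \cite{DS92} (building on \cite{CL88,CL89}) does the real work.
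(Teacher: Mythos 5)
Your skeleton matches the Dwork--Stockmeyer protocol as reviewed in Appendix \ref{app:review-private-protocols}: the prover streams root-to-leaf computation paths of a normalized linear-space ATM, the prover resolves existential branchings, the verifier's coins resolve universal branchings, and the successor-checks are randomized \emph{private} spot-checks that use the two-way input head as an $\Theta(n)$-valued ruler. Up to that point you are reconstructing the right proof. The genuine gap is your acceptance rule: accepting \emph{with probability $1$} at an accepting leaf destroys soundness. Fix $x \notin L$ and any prover strategy that streams defect-free transcripts; its existential subtree has at least one rejecting leaf, but a path of the normalized ATM contains $2^{\Theta(n)}$ universal branchings (the machine is padded to halt after exactly $2^{c|x|}$ branching steps), so the verifier's coins reach a rejecting leaf only with probability on the order of $2^{-2^{\Theta(n)}}$. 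With all the remaining probability the coins land on an accepting leaf, and your verifier then accepts outright, giving acceptance probability $1-2^{-2^{\Theta(n)}} \gg \frac{1}{2}$. Repetition cannot repair this, since every repetition again accepts with overwhelming probability; so the obstacle is not, as you suggest, making repeated checks independent, and the $\Omega(1/n)$ detection probability is not what needs amplifying.

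What \cite{DS92} actually does --- and what this paper spells out as the ``third modification'' in the proof of Theorem \ref{thm:strong-qAM-const} --- is to crush the per-round \emph{acceptance} probability instead: the accepting decision at a leaf is taken only with a probability scaled down by a factor of roughly $4^{-b}$ after $b$ branchings (otherwise the round restarts), while rejection events (a caught defect, or a rejecting leaf) are not scaled down this way. Then a \emph{single} rejecting leaf, reached with probability $2^{-2^{c|x|}}$, contributes a per-round rejection probability that dominates the cumulative acceptance probability of all $2^{2^{c|x|}}-1$ accepting leaves. Since the protocol loops until some decision occurs, the overall acceptance probability for a nonmember is bounded by the per-round ratio of acceptance to rejection, and this conditional bound holds round by round against adaptive provers with no independence machinery at all. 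Perfect completeness survives because for members the honest prover's subtree contains no rejecting leaf and no defect, so the rejection probability is zero in every round and the looping verifier accepts with probability exactly $1$. Without this ratio/domination mechanism your protocol is not sound, so the proposal as written does not establish $\ASPACE{O(n)} \subseteq \perIP{1}$.
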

\begin{fact}
	\label{fact:perfectIP}
	\cite{Co89,CL89,DS92} 	
	For any space-constructible $ s(n) = \Omega(\log(n)) $,
	\[
		\DTIME{2^{2^{O(s(n))}}} = \ASPACE{2^{O(s(n))}} \subseteq \perIP{s(n)}.
	\]
\end{fact}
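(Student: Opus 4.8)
The plan is to split the statement into its two assertions and handle them independently: the purely classical equality $ \DTIME{2^{2^{O(s(n))}}} = \ASPACE{2^{O(s(n))}} $, and the space-efficient simulation $ \ASPACE{2^{O(s(n))}} \subseteq \perIP{s(n)} $. The first is an instance of a standard characterization; the second is where the real work lies.

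For the equality I would instantiate the Chandra--Kozen--Stockmeyer characterization $ \ASPACE{f(n)} = \DTIME{2^{O(f(n))}} $ (valid whenever $ f(n) = \Omega(\log n) $) with $ f(n) = 2^{O(s(n))} $, using $ 2^{O(2^{O(s(n))})} = 2^{2^{O(s(n))}} $. To keep the account self-contained I would recall both inclusions. For $ \ASPACE{f} \subseteq \DTIME{2^{O(f)}} $: an $ f $-space alternating machine has only $ 2^{O(f)} $ configurations (since $ f = \Omega(\log n) $ forces $ n = 2^{O(f)} $), so one builds the configuration graph and evaluates its induced AND/OR acceptance condition by a backward fixed-point iteration, in time polynomial in the number of nodes, i.e. $ 2^{O(f)} $. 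For the converse $ \DTIME{2^{O(f)}} \subseteq \ASPACE{f} $, the point to get right is that one must \emph{not} store whole configurations: given a deterministic machine with time bound $ T = 2^{O(f)} $, the alternating machine certifies a single cell of the computation tableau by walking \emph{backward} in time, at each step existentially guessing the three cells at time $ t-1 $ on which the current cell depends, checking the local transition rule, and then universally branching to re-certify each of those predecessors, reusing the same $ O(f) $ work cells at every level. The tableau has height $ T $, but any single path stores only one cell-address and $ O(1) $ symbols, hence $ O(f) $ space.

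The substantive half is $ \ASPACE{2^{O(s(n))}} \subseteq \perIP{s(n)} $, and the idea is to make the prover carry the exponentially large objects the verifier cannot hold. Given $ L \in \ASPACE{2^{O(s(n))}} $ witnessed by an alternating machine $ M $, the protocol has the prover stream the accepting computation of $ M $ on $ x $ as a sequence of configurations $ C_0, C_1, \dots $, each of length $ 2^{O(s(n))} $. The verifier, with only $ s(n) $ work cells, never stores an entire configuration; since transitions are \emph{local}, to test that $ C_{t+1} $ legally follows from $ C_t $ it suffices to fix one random cell index $ j \in [\,2^{O(s(n))}\,] $ --- storable in $ O(s(n)) $ bits, exactly the budget --- and verify that the window around position $ j $ of $ C_t $, together with the head and state it records while $ C_t $ streams by, produces the symbol at position $ j $ of $ C_{t+1} $. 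The alternation is mirrored by the two players: at an existential configuration the prover chooses the successor, at a universal configuration the verifier chooses it by a fresh coin, and the verifier accepts iff the followed branch reaches an accepting leaf with no failed local check. \textbf{Perfect completeness} is then immediate: on $ x \in L $ the honest prover replays a genuine accepting computation, every local check passes with certainty, and every universal successor the verifier might pick is itself accepting, so acceptance has probability exactly $ 1 $.

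The hard part will be \textbf{soundness within the space bound}. The computation of $ M $ may last $ 2^{2^{O(s(n))}} $ steps, so there are doubly-exponentially many transitions and universal branch points, yet the verifier can neither maintain a global step counter (that alone needs $ 2^{O(s(n))} $ bits) nor union-bound a per-step error over so many steps. I would therefore make the auditing \emph{memoryless and probabilistic}: at each streamed step the verifier independently commits, with a carefully tuned probability, either to lock onto and audit the current transition via the single random cell index above or to continue, so that a prover inserting even one illegal transition is caught with probability bounded away from $ 0 $ while honest transitions never trigger rejection. The chief technical burden --- and the place where I would follow Condon and Dwork--Stockmeyer most closely --- is the error analysis: showing that the random universal-branch selection together with this local auditing drives the soundness error below $ \tfrac12 $ (and then amplifies it to any desired $ \epsilon $) without ever disturbing perfect completeness, all while the verifier uses only $ O(s(n)) $ work cells. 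The constant-space case of exactly this phenomenon is Fact~\ref{fact:IPconst}, and the present statement is its uniform extension to $ s(n) = \Omega(\log n) $.
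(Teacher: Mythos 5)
You should first know how the paper itself treats this statement: it is a cited Fact (Dwork--Stockmeyer, plus Condon and Condon--Lipton), not proved in the body; the paper only reviews the underlying ``strong protocol'' in Appendix~\ref{app:review-private-protocols} and notes that the $s(n)$ case is the same protocol with the simulated ATM scaled up to $2^{O(s(n))}$ space. Your overall architecture is exactly that protocol --- the Chandra--Kozen--Stockmeyer equality $\ASPACE{f} = \DTIME{2^{O(f)}}$ instantiated at $f = 2^{O(s(n))}$ (your sketch of both inclusions is fine), plus a streamed-configuration protocol in which the prover supplies existential choices, the verifier's coins supply universal choices, and successor-checks are done at privately chosen random positions. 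So the route is the same; the problem is that the one mechanism that makes soundness possible within the space bound is missing from your plan, and what you put in its place provably cannot work.

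The gap is in the acceptance rule. You have the verifier ``accept iff the followed branch reaches an accepting leaf with no failed local check,'' and then hope to push the soundness error below $\tfrac12$ and amplify. But for $x \notin L$ a cheating prover can stream a \emph{perfectly legal} computation tree of $\mathcal{A}$ --- every local audit passes --- whose subtree has a single rejecting leaf among $2^{2^{O(s(n))}}$ leaves; your verifier then accepts unless its universal coins happen to select that one path, so the per-round soundness error is $1 - 2^{-2^{O(s(n))}}$. No amount of local auditing helps (there are no defects to catch), and sequential repetition would need roughly $2^{2^{O(s(n))}}$ rounds to bring this below $\tfrac12$, which an $O(s(n))$-space verifier cannot count --- so the ``then amplify'' step is simply unavailable. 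The mechanism in Dwork--Stockmeyer (mirrored in the paper's Theorem~\ref{thm:strong-qAM-const} by the extra register state $q_5$) is probability damping rather than amplification: an accepting leaf triggers acceptance only with probability about $4^{-2^{c|x|}}$ (e.g.\ the verifier flips two coins per branching step, incrementally, in $O(1)$ extra space), while a rejecting leaf or a caught defect triggers rejection with a probability that is doubly-exponentially larger; the protocol loops forever without any counter, and then a \emph{single} rejecting path dominates the cumulative acceptance probability over all $2^{2^{c|x|}}-1$ accepting paths and all rounds. Two related corrections: a single illegal transition is caught with probability about $2^{-O(s(n))}$ (one random position out of a configuration of length $2^{O(s(n))}$), not ``bounded away from $0$,'' and soundness survives only because the acceptance probability is tuned to be smaller still; and the stalling problem is handled not by your probabilistic lock-on auditing but by normalizing $\mathcal{A}$ to halt after exactly $2^{c|x|}$ branching steps via a counter kept \emph{inside its own configurations}, so that a prover who never sends a halting configuration is forced to produce infinitely many auditable defects and is rejected with probability $1$.
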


\begin{fact}
	\label{fact:IPupperbound}
	\cite{CL89} 
	For any space-constructible $ s(n) = \Omega(\log(n)) $,
	\[
		\perIP{s(n)} \subseteq \IP{s(n)} \subseteq \ATIME{2^{2^{2^{O(s(n))}}}}.
	\]
	%Moreover, $ \perIP{1} \subseteq \IP{1} \subseteq \ATIME{2^{2^{O(n)}}} $.
\end{fact}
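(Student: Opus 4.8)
The first inclusion requires no work. A perfect-completeness IP system satisfies the stronger completeness condition $1'$ (acceptance probability exactly $1$ for $x\in L$), which trivially implies condition $1$ for every $\epsilon$, while leaving the soundness condition untouched; hence such a system is already an IP system in the ordinary sense, with the same verifier space bound, and therefore $\perIP{s(n)}\subseteq\IP{s(n)}$.

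For the second inclusion, fix an IP system $(P,V)$ for $L$ whose verifier runs in space $s(n)=\Omega(\log n)$ with error $\epsilon<\tfrac12$. Counting the state, the $s(n)$ work-tape cells, the two head positions, and the communication cell, $V$ has $C=2^{O(s(n))}$ configurations on an input of length $n$. Writing $\mathrm{val}(x)=\sup_{P^*}\Pr[(P^*,V)\text{ accepts }x]$, completeness and soundness guarantee $\mathrm{val}(x)>1-\epsilon$ when $x\in L$ and $\mathrm{val}(x)<\epsilon$ when $x\notin L$, so the two cases are separated by the constant gap $1-2\epsilon$ and it suffices to decide on which side of $\tfrac12$ the value $\mathrm{val}(x)$ lands. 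The plan is to turn the interaction into a finite game an alternating machine can evaluate. The obstruction to a naive simulation is twofold: $V$'s coin tosses demand \emph{averaging}, whereas alternation offers only the $\max$/$\min$ of existential and universal moves; and the prover sees only the transcript, not $V$'s private coins or configuration. I would resolve both at once by carrying, in place of a single configuration, the full subnormalized distribution over the $C$ configurations that is consistent with the transcript so far. A reading step updates this length-$C$ vector deterministically from the known $\tfrac12$/$\tfrac12$ branching, while a communication round becomes a \emph{single} existential move of the prover applied uniformly to every configuration sharing the transcript --- which is exactly the mechanism that enforces the privacy of $V$'s coins. Under this reformulation $\mathrm{val}(x)$ is the value of a one-player optimization game on distribution vectors, with the prover steering probability mass toward accepting configurations.

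To make the game finite without ever flipping the comparison with $\tfrac12$, I would round the vector entries to a bounded precision and bound the number of rounds, then have an alternating machine search the resulting finite game graph: it existentially guesses the prover's responses, deterministically propagates the distribution vector, and finally compares the accumulated accepting mass against $\tfrac12$. The size of the computation is governed by three nested layers: the $C=2^{O(s(n))}$ configurations; the precision and horizon needed so that neither the unhalted mass nor the accumulated rounding error exceeds $\tfrac12(1-2\epsilon)$; and the number of distinct discretized distribution vectors that the alternating search must traverse. Stacking these layers --- configurations, then precision/horizon, then the size of the induced game --- yields the bound $\ATIME{2^{2^{2^{O(s(n))}}}}$; the precise accounting is carried out in \cite{CL89}, and I would follow it rather than re-derive the exponents, since this is a (deliberately non-tight) upper bound.

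The step I expect to be the real obstacle is the quantitative \emph{separation (gap) lemma} underlying the truncation and the discretization: one must lower-bound, uniformly over all adversarial provers, how far $\mathrm{val}(x)$ stays from $\tfrac12$ and how slowly the accepting mass can converge, so that finite precision and a finite horizon provably never change the side of $\tfrac12$ on which the computed value falls. This is the quantitative heart of the Condon--Lipton analysis and is exactly what justifies the choice of the doubly exponential precision/horizon and hence the triply exponential size of the game. By comparison, the remaining ingredients --- the configuration count, the belief-state reformulation enforcing privacy, and the alternating bookkeeping of the vector updates --- are routine.
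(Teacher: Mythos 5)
The paper does not prove this statement at all; it is imported verbatim as a fact from \cite{CL89}, so the only ``proof'' to compare against is that citation. Your proposal is consistent with this: the first inclusion is indeed immediate from the definitions, and your sketch of the second inclusion (belief-state reformulation over the $2^{O(s(n))}$ verifier configurations, discretization and truncation of the resulting one-player game, alternating evaluation of the truncated game tree) is a faithful outline of the Condon--Lipton argument, with the quantitative gap/truncation lemma deferred to \cite{CL89} exactly as the paper itself defers the entire fact.
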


As seen from the above facts, private protocols are more powerful than public ones under the same space bounds.
In case of weak-soundness,
the power of the private protocols with finite-state verifiers becomes Turing-equivalent,
which is never possible for a public protocol with any given space bound.

We will shortly show that the public protocols using a fixed-size quantum register
can also implement some private protocols under the same space bounds.
More specifically, the results presented in Facts \ref{fact:IPconst} and \ref{fact:perfectIP} can also be obtained
for qAM systems. 
Moreover, in case of weak-soundness,
our new public protocol can also be Turing-equivalent even restricting to perfect-completeness,
which can never be a case for private protocols with any given space bound 
due to Theorem \ref{thm:char-of-perfectIP} (see below).

\begin{theorem}
	\label{thm:char-of-perfectIP}
	For any space-constructible $ s(n) \in \Omega(\log(n)) $,
	\[
		\perIP{s(n)} \subseteq \perIPw{s(n)} \subseteq \ASPACE{2^{2^{O(s(n))}}}.
	\]
\end{theorem}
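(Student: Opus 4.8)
The plan is to prove the two inclusions separately. The first inclusion, $\perIP{s(n)} \subseteq \perIPw{s(n)}$, should be essentially immediate from the definitions: a perfect-completeness IP system satisfies the strong soundness condition (2), which rejects nonmembers with high probability, and this trivially implies the weak soundness condition $(2')$ that nonmembers are \emph{accepted} with probability at most $\epsilon$. Since the completeness condition $(1')$ is the same in both cases, any $\perIP{s(n)}$ protocol is already a $\perIPw{s(n)}$ protocol, giving the inclusion for free.

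The real content is the second inclusion, $\perIPw{s(n)} \subseteq \ASPACE{2^{2^{O(s(n))}}}$. The approach I would take is to simulate a perfect-completeness weak-IP verifier by an alternating Turing machine, exploiting the crucial structural advantage that perfect completeness gives: for an input $x \in L$, there is an \emph{optimal prover} strategy that makes the verifier accept with probability exactly $1$, meaning \emph{every} reachable computation branch accepts. This converts the ``there exists a good prover'' quantifier into something an ATM's existential states can guess, while the universal states can verify that all branches accept. The standard technique, following the analysis behind Facts~\ref{fact:IPupperbound} and~\ref{fact:perfectIP}, is to track the verifier's behavior via its \emph{configuration distributions}: a verifier using space $s(n)$ has $2^{O(s(n))}$ distinct configurations, so probability vectors over configurations live in a space of dimension $2^{O(s(n))}$. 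First I would argue that, because of perfect completeness, membership is decided by whether the prover can force the verifier to reach an accepting halt on all branches, and this reachability-under-an-adversarial/cooperative-prover question is an alternating computation over the configuration graph. The key quantitative step is that the relevant objects the ATM must maintain (reachable sets of configurations, or coarsened probability information sufficient to distinguish ``accepts with probability $1$'' from ``accepts with probability at most $\epsilon$'') can be encoded in space polynomial in the number of configurations, i.e. $2^{O(s(n))}$, and the alternating simulation over these objects runs in space $2^{O(s(n))}$ as well, but the doubly-exponential bound $2^{2^{O(s(n))}}$ arises because the ATM must reason about the space of all such configuration-sets or distributions, whose cardinality is $2^{2^{O(s(n))}}$.

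The main obstacle I expect is handling the soundness side correctly under the weak condition. Under perfect completeness, the accept/reject gap for members is clean, but for nonmembers the weak condition only guarantees acceptance probability at most $\epsilon$ while allowing the verifier to run forever (never halting) on the remaining branches. I would need to show that this ``may not halt'' behavior does not cost more than the claimed resources: the ATM cannot simply simulate an unbounded computation, so I must replace the infinite run with a finite decision procedure. The standard remedy is to observe that on the finite configuration graph, any acceptance probability that can be forced is determined by a finite system of linear (or min/max) equations over the $2^{O(s(n))}$ configurations, so whether the supremum over provers of the acceptance probability equals $1$ (member) or is bounded away from $1$ (nonmember, in fact $\le \epsilon$) is decidable by analyzing this finite system rather than by running the verifier. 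Encoding and solving this system within an alternating machine — quantifying over prover responses with existential states and over verifier coin branches with universal states, while carrying the configuration-distribution data — is where the $2^{2^{O(s(n))}}$ space bound is consumed, and getting the alternation structure to faithfully capture ``$\exists$ prover forcing probability exactly $1$'' versus ``$\forall$ provers acceptance $\le \epsilon$'' is the delicate part that requires care.
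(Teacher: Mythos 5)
Your first inclusion is exactly the paper's (and is indeed immediate): weak soundness is a relaxation of strong soundness, so any $\perIP{s(n)}$ protocol is already a $\perIPw{s(n)}$ protocol. Your key insight for the second inclusion also matches the paper's: perfect completeness turns membership into the qualitative question ``can some prover force acceptance probability exactly $1$?'' versus ``every prover achieves probability $<1$'', so no probabilities need to be tracked.

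However, your decisive step has a genuine gap. You claim the forcible acceptance probability ``is determined by a finite system of linear (or min/max) equations over the $2^{O(s(n))}$ configurations.'' For a \emph{private-coin} IPS this is false: the prover never sees the verifier's configuration, only the communication history, so its strategy cannot be conditioned on the configuration, and the game value is not a configuration-indexed quantity satisfying local min/max recursions --- this is a game of partial information. The correct state space is the collection of \emph{subsets} of the verifier's configuration set $\mathcal{C}_V(x)$ (the configurations consistent with a given communication history), of which there are $2^{2^{O(s(n))}}$; this is precisely how the paper's proof is organized. Its tree $\mathcal{T}_V(x)$ has nodes that are subsets of $\mathcal{C}_V(x)$, with universal branching for the verifier's coin flips and for splitting a node whose configurations write different communication symbols, and existential branching for the prover's replies. (You do mention ``reachable sets of configurations'' earlier, but your finite decision procedure is not formulated over them; this is also why your space accounting wavers --- a single subset needs only $2^{O(s(n))}$ bits, while the doubly exponential cost comes from following paths of length up to $2^{|\mathcal{C}_V(x)|}$ in the subset graph.)

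A second, related hole is exactly the part you defer as ``the delicate part'': with non-halting verifiers one must distinguish loops traversed with probability shrinking to $0$ (compatible with acceptance probability $1$, hence harmless for members) from loops retaining constant probability (which cap acceptance strictly below $1$ and must count against the prover). The paper resolves this with a three-valued evaluation of the finite tree, with values \textsf{true}, \textsf{false}, and \textsf{loop[depth]}, explicit combination rules under ``$\wedge$'' and ``$\vee$'', and a depth cutoff of $2^{|\mathcal{C}_V(x)|}$ at which a node must repeat an ancestor. It then finishes not by a direct ATM construction but by running this construction on a deterministic machine in time $2^{2^{2^{O(s(n))}}}$ and invoking the translation $\DTIME{2^{O(f(n))}} = \ASPACE{f(n)}$ of \cite{CKS81}. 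Without the support-based state space and a loop-aware evaluation (or an equivalent almost-sure-reachability fixed point), your sketch does not yet yield the claimed bound.
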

\begin{proof}
	See Appendix \ref{app:proof-of-char-of-perfectIP}.
\end{proof}

Note that Theorem \ref{thm:char-of-perfectIP} improves the previously known upper bound (Fact \ref{fact:IPupperbound})
for space-bounded IPS with perfect-completeness.

%%%%%%%%%%%%%%%%%%%%%%%%%%%%%%%%%%%%%%%%%%%%%%%%%%%%%%%%%%%%%%%%%%%%%%%%%%%%%%%%%%%%%%%%%%%
%%%%%%%%%%%%%%%%%%%%%%%%%%%%%%%%%%%%%%%%%%%%%%%%%%%%%%%%%%%%%%%%%%%%%%%%%%%%%%%%%%%%%%%%%%%
%%%%%%%%%%%%%%%%%%%%%%%%%%%%%%%%%%%%%%%%%%%%%%%%%%%%%%%%%%%%%%%%%%%%%%%%%%%%%%%%%%%%%%%%%%%
\section{qAM}
\label{sec:qAM}
%%%%%%%%%%%%%%%%%%%%%%%%%%%%%%%%%%%%%%%%%%%%%%%%%%%%%%%%%%%%%%%%%%%%%%%%%%%%%%%%%%%%%%%%%%%
%%%%%%%%%%%%%%%%%%%%%%%%%%%%%%%%%%%%%%%%%%%%%%%%%%%%%%%%%%%%%%%%%%%%%%%%%%%%%%%%%%%%%%%%%%%
%%%%%%%%%%%%%%%%%%%%%%%%%%%%%%%%%%%%%%%%%%%%%%%%%%%%%%%%%%%%%%%%%%%%%%%%%%%%%%%%%%%%%%%%%%%

In this section, we give the definition of our new AM system (qAM) 
and present our results on qAM proof systems.

A qAM (\textit{q}uantum Arthur-Merlin)\footnote{The small ``q" indicates that the verifier has 
a ``very small" (possible the smallest) quantum resource.} 
proof system is an AM system where the verifier additionally has a fixed size quantum register. 
The reading state of the new verifier is as follows:
\begin{center}
	\begin{minipage}{0.9\textwidth}
		A superoperator, determined by the current state and the symbol(s) under the tape head(s),
		is applied to the quantum register, 
		and the outcome of the operator is automatically written on the communication cell
		in order to satisfy \textit{the complete information requirement}.
		Then, the next configuration is determined based on 
		the the current state, the symbol(s) under the tape head(s),
		and the observed outcome.
		For any deterministic transition, 
		the verifier applies an identity operator on the register.
		We refer the reader to Figure \ref{fig:superoperators} for the details of superoperators.
	\end{minipage}
\end{center}
Note that the verifier no longer needs a classical random source.\footnote{ 
	For example, the superoperator 
	$
		\mathcal{E} = \left\lbrace E_{h_{1}} = \frac{1}{2} I, E_{h_{2}} = \frac{1}{2} I,
		E_{t_{1}} = \frac{1}{2} I, E_{t_{2}} = \frac{1}{2}I \right\rbrace
	$
	always produces the outcomes \textit{head} (``$ h_1 $" or ``$ h_2 $)
	and \textit{tail} (``$ t_1 $" or ``$ t_2 $") with probability $ \frac{1}{2} $.
}
We will use $ \qAM{\cdot} $ and $ \perqAM{\cdot} $ to represent 
qAM counterparts of $ \AM{\cdot} $ and $ \perAM{\cdot} $, respectively.
Prefix ``$ \weak $" is also applicable to $ \qAM{\cdot} $ and $ \perqAM{\cdot} $.
We give a simple qAM protocol for 
the well-known NP-complete language $ \subsetsum $ in Appendix \ref{app:qAM-for-subsetsum}.

\begin{figure}[!ht]
	\centering
	\footnotesize
	\fbox{
	\begin{minipage}{0.97\textwidth}
		The most general quantum operator is a superoperator,
		which generalizes stochastic and unitary operators and also includes measurement.
		Formally, a superoperator $ \mathcal{E} $ is composed by a finite number of operation elements,
		$ \mathcal{E} = \{ E_{1}, \ldots, E_{k} \} $, satisfying that
		\begin{equation}
		\label{eq:completeness}
			\sum_{i=1}^{k} E_{i}^{\dagger} E_{i} = I,
		\end{equation}
		where $ k \in \mathbb{Z}^{+} $ and the indices are the measurement outcomes.
		When a superoperator, say $\mathcal{E}$, is applied to 
		the quantum register in state $\ket{\psi} $, i.e. $ \mathcal{E}(\ket{\psi}) $,
		we obtain the measurement outcome $i$ with probability 
		$ p_{i} = \braket{\widetilde{\psi_{i}}}{\widetilde{\psi_{i}}} $,
		where $\ket{\widetilde{\psi_{i}}}$, \textit{the unconditional state vector}, 
		is calculated as $ \ket{\widetilde{\psi}_{i}} = E_{i} \ket{\psi} $ and $1 \leq i \leq k$.
		(Note that using unconditional state vector simplifies calculations in many cases.)
		If the outcome $i$ is observed ($p_{i} > 0 $), the new state of the system 
		is obtained by normalizing $ \ket{\widetilde{\psi}_{i}} $, 
		which is $ \ket{\psi_{i}} = \frac{\ket{\widetilde{\psi_{i}}}}{\sqrt{p_{i}}} $.
		Moreover, as a special operator, the quantum register can be initialized to a predefined quantum state.
		This initialize operator, which has only one outcome, is denoted $ \acute{\mathcal{E}} $.
		In this paper, the entries of quantum operators are defined by rational numbers.
		Thus the probabilities of the outcomes are always rational numbers.
	\end{minipage}
	}
	\caption{The details of superoperators}
	\label{fig:superoperators}
\end{figure}

Knowledgeable readers will 
have noticed that, when the verifier is restricted to use constant
space, the qAM system  is actually
the quantum counterpart of the finite automaton with both
nondeterministic and probabilistic states
of Condon et. al. \cite{CHPW98}, and that if we remove the communication
with the prover as well we end up with a finite automaton
with quantum and classical states (2QCFA) of Ambainis and Watrous \cite{AW02}.

In our qAM protocols (and later in our q-alternation simulations),
we use some non-unitary transformations to implement our main tasks.
We define our superoperators based on these transformations.
Let $ E_1, \ldots, E_k $ be some of these transformations.
We can obtain a superoperator $ \mathcal{E} $ based on them 
by defining some additional transformations $ E_{k+1}, \ldots, E_{k+k'} $
such that 
\[
	\mathcal{E} = \left\lbrace 
		\frac{1}{d} E_1, \ldots, \frac{1}{d} E_k, \frac{1}{d} E_{k+1}, \ldots, \frac{1}{d} E_{k+k'}  
		\right\rbrace
\]
satisfies Condition \ref{eq:completeness} in Figure \ref{fig:superoperators} 
for a convenient $ d > 1 $, where $ k' > 0 $.\footnote{
	We refer the reader \cite{YS10A,YS11A} for similar procedures.
}
We call $ \left\lbrace \frac{1}{d} E_1, \ldots, \frac{1}{d} E_k \right\rbrace $
\textit{the main operation elements}
and $ \left\lbrace \frac{1}{d} E_{k+1}, \ldots, \frac{1}{d} E_{k+k'} \right\rbrace $
\textit{the auxiliary operation elements}.
Moreover, in our protocols,
the computation continues on the quantum register only when 
the outcomes of some main operation elements are observed.
On the other hand, the current computation on the quantum register is always terminated/restarted with discarding
the current content of the register when the outcome of an auxiliary operation element is observed.
Therefore, the details of the auxiliary operation elements can be omitted from the description of the protocols.

In the following part, we present our qAM protocols.
We begin with a constant-space weak-qAM protocol having perfect completeness
for any given Turing-recognizable language.
We present our result by giving \textit{a new public protocol} simulating a given DTM.
Contrary to the classical case,\footnote{
	In classical case, to show universality of constant-space weak-IP systems,
	a simulation of two-way finite automaton with two-counters was given \cite{CL89}.
	Since the complexity classes are defined by Turing machines,
	this technique does not seem to be applicable to the case of strong-soundness.
	(See also Appendix \ref{app:review-private-protocols})
} our simulation technique can also be applicable to the case of strong-soundness.
Therefore, after making certain modifications, we present our other space-bounded qAM protocols.
Note that, all of our qAM protocols have \textit{perfect-completeness}.\footnote{
	Two-sided bounded error is necessary for the universality of constant-space weak-IP systems
	due to Theorem \ref{thm:char-of-perfectIP}.
}
Interestingly, this property allows us to use the same techniques for q-alternation 
after making some restrictions and modifications.

\begin{theorem}
	\label{thm:weak-qAM}
	Any Turing recognizable language is in $ \perqAMw{1}  $.
\end{theorem}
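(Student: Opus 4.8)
The plan is to fix a DTM $M$ recognizing $L$ and design a constant-space verifier $V$ equipped with a fixed-size quantum register that, aided by the prover, checks an accepting computation of $M$ on the input $x$. Since the honest computation of a deterministic machine is unique, a public (coin- and outcome-revealing) interaction is not a handicap: for $x \in L$ the prover has a single correct transcript to send and nothing to hide, which is exactly what will let us achieve perfect completeness. This is the quantum analogue of the classical weak-$\IP{}$ universality of Fact \ref{fact:weakIPconst} (\cite{CL89}), but we replace the private verification mechanism by a quantum one.

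Concretely, I would ask the prover to stream the computation history $C_0 \# C_1 \# \cdots \# C_t$, where each $C_i$ encodes the tape contents together with a marked head position and the current state. With its finite control alone, $V$ can verify all the ``local'' requirements: that each $C_i$ is syntactically well-formed, that $C_0$ is the initial configuration for $x$ (read off $x$ with the two-way input head), that $C_t$ is accepting, and that, inside the constant-width window around the head, the step $C_i \to C_{i+1}$ respects $M$'s transition function. None of these needs more than a fixed number of states.

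The one property finite control cannot certify is the ``global'' one: that the part of the tape away from the head is copied verbatim from $C_i$ to $C_{i+1}$ and that the head moves by exactly one cell. This is where the quantum register does the work, and the idea is to reduce every such string-consistency requirement to an equality of two integers and then to test integer equality with an irrational rotation, in the spirit of the 2QCFA recognizer for $\{a^n b^n\}$ of Ambainis and Watrous \cite{AW02}. To test $A = B$, where the units of $A$ are seen before those of $B$ in the stream, $V$ rotates a qubit of the register by $+\theta$ per unit of $A$ and by $-\theta$ per unit of $B$, with $\theta/\pi$ irrational; these rotations are the main operation elements of superoperators (the auxiliary elements only ensure Condition (\ref{eq:completeness}) and trigger a restart). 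The net angle is $0$ exactly when $A = B$, so an honest transcript returns the register to $\ket{0}$ and is accepted with probability $1$ --- giving perfect completeness --- while $A \neq B$ deflects a nonzero amplitude onto $\ket{1}$, detected with positive probability because irrationality keeps the net angle bounded away from $0 \pmod \pi$.

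Because a measurement collapses the register and destroys the encoded count, $V$ cannot maintain one verified count across the whole history. I would therefore have $V$, in a single pass, use its measurement outcomes (which are public coins here) to select just one consistency test to carry out on the register --- one copied position, one head displacement, or the initial/final check --- while applying the identity everywhere else. An honest prover passes every possible selection with certainty, so perfect completeness is preserved; a cheating prover must corrupt the transcript somewhere, and that corruption is caught by at least one selectable test with some probability $p > 0$. Finally $V$ repeats the whole interaction, reinitializing the register with the initialize operator $\acute{\mathcal{E}}$ between runs, to push the accepting probability of every cheating prover below any target $\epsilon$; for $x \notin L$ there is simply no accepting history, so $V$ is free never to halt, which is all weak-soundness demands. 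I expect the main obstacle to be twofold: engineering the reduction so that \emph{every} kind of inconsistency (a changed faraway cell, a wrong displacement, a length or alignment mismatch) becomes a detectable integer inequality, and securing a single positive lower bound on the catch probability $p$ that is uniform over all cheating strategies and all input lengths, i.e. independent of the unbounded magnitudes being compared --- this uniformity is exactly what the irrationality of $\theta/\pi$ together with the restart-and-amplify loop is there to provide.
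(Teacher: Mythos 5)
Your overall plan (the prover streams the computation history of a fixed DTM, the finite control checks syntax and the initial/halting configurations, and the quantum register certifies the successor relation) matches the paper's, but the mechanism you propose for the successor check fails in the public setting. You have the verifier use its measurement outcomes ``as public coins'' to select \emph{one} consistency test per pass, applying the identity everywhere else. In a qAM system every outcome is automatically written on the communication cell, so the prover learns, online, exactly which test the verifier has committed to and which steps will never be examined. A cheating prover is adaptive: it streams correct configurations until the unique selected test has been consumed (or is known to lie elsewhere), and only then inserts its single defect, jumping toward an accepting configuration. That defect is never examined, so for $x \notin \mathtt{L}$ the prover is accepted with probability close to $1$ per pass, and restarting cannot repair this. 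Your soundness claim (``the corruption is caught by at least one selectable test with $p>0$'') implicitly assumes the transcript is fixed before the selection is revealed, i.e., a non-adaptive prover; that is an oracle/PCP-style assumption, not the IPS model. Randomly selecting one check and hiding the choice is precisely the trick that requires \emph{privacy} (it is how the protocols of \cite{CL89,DS92} work), and the paper stresses that its protocol survives publicly only because the register is genuinely in superposition: \emph{all} successor checks ride along coherently, with no measured selection for the prover to observe and dodge.

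Second, even against a non-adaptive prover your quantitative soundness does not close. In your scheme a defect-free-looking pass is accepted with probability $\approx 1$, while the detection probability of a defect is $\sin^{2}$ of a net rotation angle, which the prover can drive arbitrarily close to $0$ by controlling the (unbounded) magnitudes being compared --- irrationality of $\theta/\pi$ gives positivity, not a uniform lower bound. Weak soundness needs the per-round ratio of reject probability to accept probability to be bounded below by a constant; if that ratio vanishes, the restart-and-repeat loop converges to accepting the cheater. The paper's encoding is built exactly to pin this ratio: every main operation element carries the factor $1/d$, so after $l$ symbols both the would-be accept amplitude and the would-be reject amplitude are scaled by $(1/d)^{l}$; the strings $\Next(c_i)$ and $c_{i+1}$ are encoded as base-$m$ integers in two amplitudes of a $4$-state register, and a single operation element \emph{subtracts} them, so rejection fires with probability $(1/d)^{2l_{i+1}}\left( \Next(c_i)-c_{i+1} \right)^{2} \geq m^{2}(1/d)^{2l_{i+1}}$ whenever a defect exists --- at least $m^{2}$ times \emph{any} accept probability of the same round, uniformly over all prover strategies and all lengths, giving the $\frac{1}{m^{2}+1}$ bound. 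It is this interference (amplitude subtraction of whole encoded configurations), not an equality test by rotation plus public random selection, that simultaneously yields perfect completeness, a public protocol, and a uniform catch-to-accept gap.
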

\begin{proof}
	Let $ \mathtt{L} $ be a Turing recognizable language and $ \mathcal{D} $ be a single-tape DTM
	recognizing $ \mathtt{L} $.
	We will construct a weak-qAM proof system $ (P,V) $  for $ \mathtt{L} $
	with perfect completeness, where $ V $ is a finite state verifier.
	
	We begin with some details of $ \mathcal{D} $.
	$ Q $ containing $ q_{1} $ (the initial state), $ q_{a} $ (the accepting state), 
	and $ q_{r} $ (the rejecting state) is the set of states, 
	$ \Gamma $ containing $ \blank $ is the tape alphabet, and
	$ \Gamma' = Q \cup \Gamma $, called configuration alphabet, 
	where $ Q $ and $ \Gamma $ are disjoint sets and $ \dollar \notin \Gamma $.
	Note that $ \Gamma' $ contains at least 5 elements.
	Any configuration of $ \mathcal{D} $ is of the form
	$ uqv $ ($ \mathcal{D} $ is in $ q $ and the tape head is on the leftmost symbol of $ v $),
	where $ q \in Q $ and $ uv \in \blank (\Gamma)^{*} \blank $.
	The unnecessary blank symbols are always dropped from the descriptions of configurations. 
	For a given input string $ x $,
	the initial configuration is represented as $ q_{1} \blank x \blank $.
	% We assume that the initial configuration does not lead to any halting configuration.
	
	The main protocol is executed in an infinite loop and each iteration (round) is composed by the following:
	(i) The verifier requests the computation (a sequence of configurations starting 
	from the initial configuration) of $ \mathcal{D} $ on the given input, say $ x $, from the prover.
	(ii) Against the cheating provers, the verifier checks the correctness of the computation and rejects 
	$ x $ if it detects a defect in the computation.
	(iii) When it encounters a halting configuration, the verifier mimics the decision of this (halting) configuration.
	
	Let $ w $ be the string obtained from the prover in a single round. 
	The verifier expects $ w $ as $ c_1 \dollar \dollar c_2 \dollar \dollar c_3 \dollar \dollar \cdots $, 
	where 
	(P1) $ c_i $'s ($ i>0 $) are some configurations of $ \mathcal{D} $,
	(P2) $ c_1 $ is the initial configuration, and
	(P3) $ c_{i+1} $ is 
	the successor of $ c_i $ 
	in one step for any $ i>0 $.
	(We use double $ \dollar $ for pedagogical reasons.)
	Note that $ w $ can be an infinite string.
	The verifier can check P1 and P2 deterministically, and 
	$ x $ is rejected immediately if one of them fails.
	Therefore, in the following part, we assume that $ w $ satisfies both P1 and P2 and
	each configuration ends with ``$ \dollar \dollar $".
	
	The non-trivial part is to check P3 for each $ i>0 $, 
	i.e. whether $ c_{i+1} $ is identical to $ \Next(c_{i}) $,
	where $ \Next(c_i) $ 
	is the single-step successor of $ c_i $.
	This is where the quantum register comes into play.
	The idea behind is to encode $ \Next(c_i) $ and $ c_{i+1} $ into the amplitudes of two states on the register, 
	and then to subtract them, and 
	to reject $ x $ with the resulting amplitude.\footnote{In fact,
	the encoding part can also be implemented by a probabilistic system but 
	the aforementioned \textit{subtraction} is not possible in classical systems!}
	We call this procedure \textit{successor-check}. 
	The $ i^{th} $ successor-check compares $ \Next(c_i) $ and $ c_{i+1} $.
	As will be detailed below, 
	whenever $ c_{i+1} = \Next(c_i) $ (for all defined $ i>0 $), 
	the input is never rejected by a successor-check.
	Thus, once a halting configuration is obtained from the prover, the \textit{only decision} is made based on it.
	Otherwise, i.e. $ \exists i>0 $ s.t. $ c_{i+1} \neq \Next(c_i) $,
	$ x $ is rejected by the $ i^{th} $ successor-check.
	We show that such a reject probability is sufficiently greater than 
	any accept probability in a single round. 
	
	In the remaining part, we will give the details of a single round and the analyses of the protocol.
	The verifier requests the computation of $ \mathcal{D} $ on $ x $ symbol by symbol from the prover, and
	it uses a 3-symbol buffer to parallelly encode the legal successor of
	the presently scanned configuration.	
	The verifier uses a 4-state quantum register, $ q_1,\ldots,q_4 $,
	which is set to $ \ket{ \psi_{1,0} } = (1 ~~ 0 ~~ 0 ~~ 0)^{T} $ at the beginning of each round.	
	The configurations are encoded in base-$ m $, where $ m = |\Gamma'| +1 $.
	Each symbol of $ \Gamma' $ is associated with a different positive integer.
	Thus, any configuration $ c_{i} $ ($ i>0 $) can be represented by a $ |c_i| $-length number in base-$ m $.
	We use the same symbol for both the encoded string/symbol and its encoding.
	The verifier applies one superoperator per symbol.
	When the outcome of an auxiliary operation elements is observed,
	the verifier terminates the current round and initiates a new round.
	The tasks implemented by the main operation elements reduce the amplitudes with $ \frac{1}{d} < 1 $.
	Therefore, after applying each superoperator, a new round is initiated with some probability.
	In other words, a round can continue only with a small probability.
	The complete details of the superoperators and the related operations are given at the end of the proof
	due to their technicalities.
	
	Let $ l_{i} $ be the length of 
	$ c_1 \dollar \dollar c_2  \dollar \dollar \cdots  c_{i} \dollar \dollar $ ($ i>0 $).
	By processing  $ c_1 \dollar \dollar $, 
	i.e.  a series of superoperators $ \mathcal{E}_{1,1}, \ldots,\mathcal{E}_{1,|c_1 \dollar \dollar|} $
	are applied to the register, 
	$ \Next(c_1) $ is encoded into the amplitudes of $ \ket{q_2} $.
	Then, the quantum state becomes\footnote{
		Note that, unconditional state vectors facilitate the calculations.
		The probabilities can be calculated directly.
	} 	
	\[
	 	\ket{ \widetilde{ \psi_{2,0} } } = \left( \frac{1}{d} \right)^{l_1} 
		\left( \begin{array}{c} 1 \\ \Next(c_1) \\ 0 \\ 0 \end{array}		 \right).
	\]
	Similarly, by processing $ c_2 \dollar $, $ c_2 $ and $ \Next(c_2) $ are encoded into the amplitudes of 
	$ \ket{q_3} $ and $ \ket{q_4} $, respectively:
	
	\[
		\ket{ \widetilde{ \psi_{2,|c_2 \dollar|} } } =
		\left( \frac{1}{d} \right)^{l_2-1} 
		\left( \begin{array}{c} 1 \\ \Next(c_1) \\ c_2 \\ \Next(c_2) \end{array}		 \right).
	\]
	After processing one more $ \dollar $, the first successor-check is finalized:
	The corresponding superoperator has two main operation elements.
	The first one is responsible for comparing $ \Next(c_1) $ and $ c_2 $,
	and subtracts the amplitudes of $ \ket{q_2} $ and $ \ket{q_3} $.
	Its outcome is observed with probability
	\[
		\left( \frac{1}{d} \right)^{2 l_2} (\Next(c_1)-c_2)^{2},
	\]
	which is also the rejecting probability of the first successor-check.
	The second main operation element is determined conditionally.
	If $ \Next(c_2) $ is an accepting (a rejecting) configuration,
	then it selects only the amplitude of $ \ket{q_1} $,
	the outcome of which is observed with probability $ \left( \frac{1}{d} \right)^{2 l_2} $.
	This probability is also the accepting (rejecting) probability of the round.
	Since the computation is terminated due to the outcomes of both operation elements,
	the round does not continue in this case.
	If $ \Next(c_2) $ is not a halting configuration,
	the round continues with the next successor-check.
	Thus the quantum state becomes
	\[
		\ket{ \widetilde{ \psi_{3,0} } } =
		 \left( \frac{1}{d} \right)^{l_2} 
		\left( \begin{array}{c} 1 \\ \Next(c_2) \\ 0 \\ 0 \end{array}		 \right).
	\]
	One can easily verify that if the prover is cheating about $ c_2 $, the input is rejected 
	with a probability at least $ m^{2} \left( \frac{1}{d} \right)^{2 l_2} $ 
	since both $ \Next(c_1) $ and $ c_2 $ end with a $ \# $ and
	they must be disagree on at least one digit.	
	If the prover is honest, the decision is given only if $ \Next(c_2) $ is a halting configuration,
	whose probability is at least $ m^{2} $ smaller than the above rejecting probability.
	The other successor-checks are executed exactly in the same way (see the end of the proof).
	
	The analysis of the protocol is as follows.
	If $ x \in L $, then $ P $ always sends the correct computation of $ \mathcal{D} $ on $ x $ to $ V $,
	say $ c_1 \dollar\dollar c_2 \dollar\dollar \cdots \dollar\dollar c_{t} \dollar\dollar $
	such that $ \Next(c_t) $ is an accepting configuration.
	Then $ V $ never rejects but accepts $ x $ with probability $ \left( \frac{1}{d} \right)^{2l_{t}} $ in each round.
	So, $ x $ is accepted exactly by $ V $.
	
	If $ x \notin L $, then the only case in which $ V $ accepts $ x $ is that $ P^{*} $
	send a computation of some nonhalting configurations 
	$ c_1 \dollar\dollar c_2 \dollar\dollar \cdots \dollar \dollar c_{t'} \dollar\dollar $
	such that $ \Next(c_{t'}) $ is an accepting configuration, where $ t'>1 $.
	Since $ x \notin L $, there must be an $ i ~ (1 \leq i < t') $ such that 
	$ \Next(c_i) \neq c_{i+1} $. 
	Therefore, $ x $ is rejected with probability at least $ m^{2} $ times greater than the accepting probability
	in a single round.
	In other words, the overall accepting probability can be bounded above by $ \frac{1}{m^{2}+1} $.
	This bound can be easily reduced to any desired value by selecting a greater $ m $ value.
	
	Now, we give the omitted details of the superoperators and the related operations below.
	Remember that	
	$ l_i $ is the length of $ c_1 \dollar \dollar c_2 \dollar \dollar \cdots c_i \dollar \dollar  $
	and the quantum state is set to 
	\[
		\ket{\psi_{1,0}} = \left( \begin{array}{c} 1 \\ 0 \\ 0 \\ 0 \end{array}	 \right)
	\]
	at the beginning of each round.
	As described before, we omit the details of each auxiliary operation element,
	and a new round is initiated when the outcome of such an operation element is observed.
	
	For each symbol of $ w_{1} = c_1 \dollar \dollar $,
	the verifier applies $ | w_{1} | $ superoperators,
	$ \mathcal{E}_{1,1},\ldots,\mathcal{E}_{1,|w_1|} $, to the register, some of the the same.
	The aim is to encode $ \Next(c_1) $ into the amplitudes of $ \ket{q_2} $.
	For each $ j \in \{ 1, \ldots, |c_1|-1 \} $, the main operation element of 
	$ \mathcal{E}_{1,j} $ is as follows:
	\[
		\frac{1}{d} \left( 
			\begin{array}{cccr}
				1 & ~~0~~ & ~~0~~ & ~~0 \\ 
				\Next(c_{1})[j] & m & 0 & 0 \\ 
				0 & 0 & 0 & 0 \\ 
				0 & 0 & 0 & 0
			\end{array} 
		\right).
	\]
	For $ \mathcal{E}_{1,|c_1|} $ and $ \mathcal{E}_{1,|c_1 \dollar|} $, we have the following cases,
	each of which can be deterministically determined and handled by using 3-symbol buffer:
	\begin{itemize}
		\item If $ |\Next(c_1)| = |c_1|-1 $, the main operation elements of 
		$ \mathcal{E}_{1,|c_1|} $ and $ \mathcal{E}_{1,|c_1 \dollar|} $ are as follows:
			\[
				\frac{1}{d} \left( 
				\begin{array}{cccr}
					1 & ~~0~~ & ~~0~~ & ~~0 \\ 
					0 & 1 & 0 & 0 \\ 
					0 & 0 & 0 & 0 \\ 
					0 & 0 & 0 & 0
				\end{array} 
				\right) \mbox{ and }
				\frac{1}{d} \left( 
				\begin{array}{cccr}
					1 & ~~0~~ & ~~0~~ & ~~0 \\ 
					0 & 1 & 0 & 0 \\ 
					0 & 0 & 0 & 0 \\ 
					0 & 0 & 0 & 0
				\end{array} 
				\right),
			\]
			respectively,
			since the encoding of $ \Next(c_1) $ is finished by superoperator 
			$ \mathcal{E}_{1,|c_1|-1} $. 
		\item  If $ |\Next(c_1)| = |c_1| $, the main operation elements of 
			$ \mathcal{E}_{1,|c_1|} $ and $ \mathcal{E}_{1,|c_1 \dollar|} $ are as follows:
			\[
				\frac{1}{d} \left( 
					\begin{array}{cccr}
						1 & ~~0~~ & ~~0~~ & ~~0 \\ 
						\# & m & 0 & 0 \\ 
						0 & 0 & 0 & 0 \\ 
						0 & 0 & 0 & 0
					\end{array} 
				\right)
				\mbox{ and }
				\frac{1}{d} \left( 
				\begin{array}{cccr}
					1 & ~~0~~ & ~~0~~ & ~~0 \\ 
					0 & 1 & 0 & 0 \\ 
					0 & 0 & 0 & 0 \\ 
					0 & 0 & 0 & 0
				\end{array} 
				\right),
			\]
			respectively,
			since the encoding of $ \Next(c_1) $ is finished by superoperator 
			$ \mathcal{E}_{1,|c_1|} $. 
			(Note that the last symbol of any configuration is a blank symbol.)
		\item If $ |\Next(c_1)| = |c_1|+1 $, the main operation elements of 
			$ \mathcal{E}_{1,|c_1|} $ and $ \mathcal{E}_{1,|c_1 \dollar|} $ are as follows:
			\[
				\frac{1}{d} \left( 
					\begin{array}{cccr}
						1 & ~~0~~ & ~~0~~ & ~~0 \\ 
						\Next(c_1)[|c_1|] & m & 0 & 0 \\ 
						0 & 0 & 0 & 0 \\ 
						0 & 0 & 0 & 0
					\end{array} 
				\right)
				\mbox{ and }
				\frac{1}{d} \left( 
				\begin{array}{cccr}
					1 & ~~0~~ & ~~0~~ & ~~0 \\ 
					\# & m & 0 & 0 \\ 
					0 & 0 & 0 & 0 \\ 
					0 & 0 & 0 & 0
				\end{array} 
				\right),
			\]
			respectively,
			since the encoding of $ \Next(c_1) $ is finished by  superoperator 
			$ \mathcal{E}_{1,|c_1|+1} $. 
	\end{itemize}
	The main operation elements of $ \mathcal{E}_{1,|c_1 \dollar \dollar|} $ is as follows:
	\[
		\frac{1}{d} \left( 
			\begin{array}{cccr}
				1 & ~~0~~ & ~~0~~ & ~~0 \\ 
				0 & 1 & 0 & 0 \\ 
				0 & 0 & 0 & 0 \\ 
				0 & 0 & 0 & 0
			\end{array} 
		\right).
	\]
	Thus, after applying superoperators $ \mathcal{E}_{1,j} $'s ($ 1 \leq j \leq |w_{1}| $),
	the state vector of the register becomes
	\begin{equation}
		\label{eq:state-vector-before-2}
		\ket{ \widetilde{ \psi_{2,0} } } = \left( \frac{1}{d} \right)^{l_1} 
		\left( \begin{array}{c} 1 \\ \Next(c_1) \\ 0 \\ 0 \end{array}	 \right).
	\end{equation}
	
	We continue with the block $ w_2 = c_2 \dollar \dollar $.
	In fact, the tasks implemented in this part are the same for any other block $ w_i = c_i \dollar \dollar $ ($ i >2 $).
	Similar to the above case,
	for each symbol of $ w_2 $, the verifier applies $ |w_2| $ superoperators,
	$ \mathcal{E}_{2,1}, \ldots, \mathcal{E}_{2,|w_2|} $, to the register, some of which can be the same.
	Remember that before starting to apply any operator,
	the unconditional state vector is $ \ket{ \widetilde{ \psi_{2,0} } } $ (Equation \ref{eq:state-vector-before-2}).
	The aims in this block ($ w_2 $) are as follows:
	
	\begin{enumerate}
		\item By processing $ c_2 \dollar $,
		\begin{enumerate}
			\item to encode $ c_2 $ into the amplitudes of $ \ket{q_3} $ and
			\item to encode $ \Next(c_2) $ into the amplitudes of $ \ket{q_4} $.
		\end{enumerate}
		\item By processing the second $ \dollar $,
		\begin{enumerate}
			\item to finalize the $ 1^{st} $ successor-check,
			\item to accept or reject the input if $ \Next(c_2) $ is an accepting or a rejecting configuration, 
				respectively; and,
				to start $ 3^{rd} $ successor-check if $ \Next(c_2) $ is not a halting configuration.
		\end{enumerate}	
	\end{enumerate}
	The details of superoperators to encode $ c_2 $ and $ \Next(c_2) $ are similar to the ones given above.
	For each $ j \in \{ 1, \ldots, |c_2|-1 \} $, the main operation element of 
	$ \mathcal{E}_{2,j} $ is as follows:
	\[
		\frac{1}{d} \left( 
			\begin{array}{cccr}
				1 & ~~0~~ & ~~0~~ & ~~0 \\ 
				0 & 1 & 0 & 0 \\ 
				c_2[j] & 0 & m & 0 \\ 
				\Next(c_2)[j] & 0 & 0 & m
			\end{array} 
		\right).
	\]
	For $ \mathcal{E}_{2,|c_2|} $ and $ \mathcal{E}_{2,|c_2 \dollar|} $, we have the following cases:
	\begin{itemize}
		\item If $ |\Next(c_2)| = |c_2|-1 $, the main operation elements of 
		$ \mathcal{E}_{2,|c_2|} $ and $ \mathcal{E}_{2,|c_2 \dollar|} $ are as follows:
			\[
				\frac{1}{d} \left( 
				\begin{array}{cccr}
					1 & ~~0~~ & ~~0~~ & ~~0 \\ 
					0 & 1 & 0 & 0 \\ 
					\# & 0 & m & 0 \\ 
					0 & 0 & 0 & 1
				\end{array} 
				\right) \mbox{ and }
				\frac{1}{d} \left( 
				\begin{array}{cccr}
					1 & ~~0~~ & ~~0~~ & ~~0 \\ 
					0 & 1 & 0 & 0 \\ 
					0 & 0 & 1 & 0 \\ 
					0 & 0 & 0 & 1
				\end{array} 
				\right),
			\]
			respectively,
			since the encoding of $ \Next(c_2) $ is finished by the superoperator $ \mathcal{E}_{2,|c_2|-1} $.
		\item  If $ |\Next(c_2)| = |c_2| $, the main operation elements of 
			$ \mathcal{E}_{2,|c_2|} $ and $ \mathcal{E}_{2,|c_2 \dollar|} $ are as follows:
			\[
				\frac{1}{d} \left( 
					\begin{array}{cccr}
						1 & ~~0~~ & ~~0~~ & ~~0 \\ 
						0 & 1 & 0 & 0 \\ 
						\# & 0 & m & 0 \\ 
						\# & 0 & 0 & m
					\end{array} 
				\right)
				\mbox{ and }
				\frac{1}{d} \left( 
				\begin{array}{cccr}
					1 & ~~0~~ & ~~0~~ & ~~0 \\ 
					0 & 1 & 0 & 0 \\ 
					0 & 0 & 1 & 0 \\ 
					0 & 0 & 0 & 1
				\end{array} 
				\right),
			\]
			respectively,
			since the encoding of $ \Next(c_2) $ is finished by the superoperator $ \mathcal{E}_{2,|c_2|} $. 
		\item If $ |\Next(c_2)| = |c_2|+1 $, the main operation elements of 
			$ \mathcal{E}_{2,|c_2|} $ and $ \mathcal{E}_{2,|c_2 \dollar|} $ are as follows:
			\[
				\frac{1}{d} \left( 
					\begin{array}{cccr}
						1 & ~~0~~ & ~~0~~ & ~~0 \\ 
						0 & 1 & 0 & 0 \\ 
						\# & 0 & m & 0 \\ 
						\Next(c_2)[|c_2|] & 0 & 0 & m
					\end{array} 
				\right)
				\mbox{ and }
				\frac{1}{d} \left( 
				\begin{array}{cccr}
					1 & ~~0~~ & ~~0~~ & ~~0 \\ 
					0 & 1 & 0 & 0 \\ 
					0 & 0 & 1 & 0 \\ 
					\# & 0 & 0 & m
				\end{array} 
				\right),
			\]
			respectively,
			since the encoding of $ \Next(c_2) $ is finished by superoperator $ \mathcal{E}_{2,|c_2|+1} $. 
	\end{itemize}
	Thus, before applying $ \mathcal{E}_{2,|c_2\dollar\dollar|} $, the state vector becomes as follows:
	\begin{equation}
		\label{eq:state-vector-end-2}
		\ket{ \widetilde{ \psi_{2,|c_2 \dollar|} } } = \left( \frac{1}{d} \right)^{l_2-1} 
		\left( \begin{array}{c} 1 \\ \Next(c_1) \\ c_2 \\ \Next(c_2) \end{array}	 \right).
	\end{equation}
	Operator $ \mathcal{E}_{2,|c_2\dollar\dollar|} $ has two main operation elements.
	This first one is responsible to finalize the $ 1^{th} $ successor-check:
	\[
		\frac{1}{d} \left( 
					\begin{array}{ccrr}
						0 & ~~0~~ & 0 & ~~0 \\ 
						0 & 1 & -1 & 0 \\ 
						0 & 0 & 0 & 0 \\ 
						0 & 0 & 0 & 0
					\end{array} 
				\right).
	\]
	The associated action of this operation element is \textit{to reject the input}.
	Therefore, the input is rejected with probability 
	\[
		\left( \frac{1}{d} \right)^{2l_2} \left( \Next(c_1) - c_2 \right)^{2},
	\]
	which is zero if the check succeeds ($ \Next(c_1) = c_2 $) and
	is at least 
	\[
		\left( \frac{1}{d} \right)^{2l_2} m^{2}
	\]
	if the check fails ($ \Next(c_1) \neq c_2 $).
	Since the last symbol of $ \Next(c_1) $ and $ c_2 $ are the identical,
	the value of $ | \Next(c_1) - c_2 | $ can be at least $ m $.
	
	The second main operation element is determined by the type of $ \Next(c_2) $:
	\begin{itemize}
		\item If it is an accepting (a rejecting) configuration, then the following operation element is applied:
			\[
				\frac{1}{d} \left( 
					\begin{array}{ccrr}
						1 & ~~0~~ & 0 & ~~0 \\ 
						0 & 0 & 0 & 0 \\ 
						0 & 0 & 0 & 0 \\ 
						0 & 0 & 0 & 0
					\end{array} 
				\right).
			\]
			The associated action of this  is \textit{to accept (reject) the input}.
			Therefore, the input is accepted (rejected) with probability 
			\[
				\left( \frac{1}{d} \right)^{2l_2}.
			\]
			Note that the round is certainly terminated in this case.
		\item If it is not a halting configuration, then the following operation element is applied:
			\[
				\frac{1}{d} \left( 
					\begin{array}{ccrr}
						1 & ~~0~~ & 0 & ~~0 \\ 
						0 & 0 & 0 & 1 \\ 
						0 & 0 & 0 & 0 \\ 
						0 & 0 & 0 & 0
					\end{array} 
				\right).
			\]
			The associated action of this is to continue the round, and so \textit{to initiate} the $ 3^{rd} $ successor-check.
			The state vector becomes
			\[
				\ket{ \widetilde{ \psi_{3,0} } } = \left( \frac{1}{d} \right)^{l_2} 
				\left( \begin{array}{c} 1 \\ \Next(c_2) \\ 0 \\ 0 \end{array}	 \right).
			\]
	\end{itemize}
	Note that if the prover never sends $ \dollar\dollar $ symbols, then no decision is given.
	
	The tasks for block $ w_3 = c_3 \dollar \dollar $ is exactly the same as for block $ w_2 = c_2 \dollar \dollar $,
	and the tasks for block $ w_4 = c_4 \dollar \dollar $ is exactly the same as for block $ w_3 = c_3 \dollar \dollar $,
	and so on.
	Therefore we can generalize it for a generic block $ w_i = c_i \dollar \dollar $, where $ i \geq 2 $.
	The state vector is 
	\[
		\ket{ \widetilde{ \psi_{i,0} } } = \left( \frac{1}{d} \right)^{l_{i-1}} 
				\left( \begin{array}{c} 1 \\ \Next(c_{i-1}) \\ 0 \\ 0 \end{array}	 \right)
	\]
	at the beginning.
	The aims are as follows:
	\begin{enumerate}
		\item By processing $ c_i \dollar $,
		\begin{enumerate}
			\item to encode $ c_i $ into the amplitudes of $ \ket{q_3} $ and
			\item to encode $ \Next(c_i) $ into the amplitudes of $ \ket{q_4} $.
		\end{enumerate}
		\item By processing the second $ \dollar $,
		\begin{enumerate}
			\item to finalize the $ (i-1)^{st} $ successor-check,
			\item to accept or reject the input if $ \Next(c_{i}) $ is an accepting or a rejecting configuration, 
				respectively; and, 
				to start the $ (i+1)^{st} $ successor-check if $ \Next(c_i) $ is not a halting configuration.
		\end{enumerate}	
	\end{enumerate}
	When the $ (i-1)^{st} $ successor-check is finalized,
	the input is rejected with probability 
	\[
		\left( \frac{1}{d} \right)^{2l_i} ( \Next(c_{i-1}) - c_i )^{2},
	\]
	which can be at least
	\[
		\left( \frac{1}{d} \right)^{2l_i} m^{2}
	\]
	if $  \Next(c_{i-1}) \neq c_i  $.
	If $ \Next(c_{i}) $ is an accepting (a rejecting) configuration,
	then the input is accepted (rejected) with probability
	\[
		\left( \frac{1}{d} \right)^{2l_i}.
	\]
	Otherwise, the $ (i+1)^{st} $ successor-check initialized with the state vector
	\[
		\ket{ \widetilde{ \psi_{i,0} } } = \left( \frac{1}{d} \right)^{l_{i}} 
				\left( \begin{array}{c} 1 \\ \Next(c_{i}) \\ 0 \\ 0 \end{array}	 \right).
	\]
\end{proof}

One of the remarkable properties of our protocol is that 
the quantum register can be in a \textit{superposition}  of two successor-checks,
which is one of the fundamental and distinctive properties of quantum computation.
In fact, classical private protocols can also ``imitate" this phenomenon:
The verifier privately selects odd- or even-numbered successor-checks, and so,
\textit{from the viewpoint of the prover}, the verifier seems to be in a superposition of two successor-checks
(such as, with probability $ \frac{1}{2} $).
However, in our protocol, the verifier \textit{really} is in a superposition and it is independent from any prover. 
This is indeed why our protocol works in a public setting.
Therefore, our protocol can be seen as a new and elegant evidence on 
how the superposition phenomenon can become useful 
in terms of complexity theory.

\begin{comment}
The second one is that a single successor-check can be implemented by four-state quantum register
with perfect-completeness. (We refer the reader to \cite{AW02,YS10A,YS10B} for primitive works.)
We know that classical finite state verifiers can recognize only regular languages with perfect-completeness.\footnote{
	A classical verifier needs at least logarithmic space in the length of a configuration
	to implement a similar successor-check \cite{FK94}. The bound 
	is tight since a straightforward deterministic check can use logarithmic space.
}
Otherwise, Fact \ref{fact:weakIPconst} would have been shown with perfect completeness.
Note that, in case of two-sided error, classical finite state verifiers can recognize some non-regular language \cite{Fr81},
e.g. the strings having equal number of $ a $'s and $ b $'s, 
which is actually primitive work of Fact \ref{fact:weakIPconst} (Chapter 22 of \cite{Li10}).
\end{comment}

The main consequence of Theorem \ref{thm:weak-qAM} is that qAM systems can be more powerful than IP systems:
\begin{corollary}
	For any space bound $ s(n) $,
	$ \perIPw{s(n)} \subsetneq \perqAMw{1}. $
\end{corollary}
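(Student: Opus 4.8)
The plan is to obtain the inclusion and its strictness from the two theorems already established, namely Theorem~\ref{thm:weak-qAM} (every Turing-recognizable language is in $\perqAMw{1}$) and Theorem~\ref{thm:char-of-perfectIP} (the upper bound $\perIPw{s(n)} \subseteq \ASPACE{2^{2^{O(s(n))}}}$). The guiding observation is that the left-hand class contains only decidable languages while the right-hand class contains \emph{all} Turing-recognizable languages, so a single undecidable example separates them.

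For the inclusion $\perIPw{s(n)} \subseteq \perqAMw{1}$, I would first show that every $L \in \perIPw{s(n)}$ is decidable, and hence Turing-recognizable. When $s(n)$ is space-constructible and $\Omega(\log n)$, this is immediate from Theorem~\ref{thm:char-of-perfectIP}, since every language in $\ASPACE{2^{2^{O(s(n))}}}$ is decidable. For an arbitrary space bound I would argue directly: on each fixed input the verifier's reachable configuration set is finite, so the supremal acceptance probability over all provers is the value of a finite one-player stochastic game and is therefore a computable rational; comparing this value against the completeness and weak-soundness thresholds (which are separated by a constant gap) then decides membership. Having established that $L$ is Turing-recognizable, Theorem~\ref{thm:weak-qAM} places $L$ directly in $\perqAMw{1}$, completing the inclusion.

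For strictness I would exhibit one separating language. Let $H$ be the halting problem, which is Turing-recognizable but undecidable. By Theorem~\ref{thm:weak-qAM} we have $H \in \perqAMw{1}$, whereas the decidability argument above shows that $\perIPw{s(n)}$ consists only of decidable languages, so $H \notin \perIPw{s(n)}$ for every space bound $s(n)$. This yields $\perIPw{s(n)} \subsetneq \perqAMw{1}$.

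The main obstacle is making the decidability of $\perIPw{s(n)}$ robust for \emph{all} space bounds, not only the space-constructible $\Omega(\log n)$ regime covered by Theorem~\ref{thm:char-of-perfectIP}. I would address this through the stochastic-game reduction sketched above: the finiteness of the per-input configuration space guarantees that the optimal prover, although possibly non-computable as a uniform strategy, achieves a value that is computable input-by-input, and the perfect-completeness/weak-soundness gap ensures the resulting acceptance test is exact. The remaining details --- bounding the configuration count in terms of $s(n)$ and $n$, and verifying that non-halting paths contribute no acceptance probability --- are routine.
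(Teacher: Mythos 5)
Your overall skeleton coincides with the paper's intended argument: Theorem~\ref{thm:char-of-perfectIP} confines $\perIPw{s(n)}$ to decidable languages, Theorem~\ref{thm:weak-qAM} places every Turing-recognizable language in $\perqAMw{1}$, and an undecidable but recognizable language such as the halting problem yields strictness. For space-constructible $s(n)\in\Omega(\log n)$ this is complete and correct.

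The genuine gap is in your extension to arbitrary space bounds. The claim that ``the supremal acceptance probability over all provers is the value of a finite one-player stochastic game and is therefore a computable rational'' is false. In an IP system the verifier's coins are \emph{private}: the prover observes only the communication history, not the verifier's configuration, so the prover's optimization problem is a partially observable one, and values of finite partially observable stochastic systems are not computable in general. The decisive objection is internal to the paper: your computability argument nowhere uses perfect completeness (you invoke the completeness/soundness gap only in the final threshold comparison), so it would apply verbatim to $\IPw{1}$ and show that every language in $\IPw{1}$ is decidable --- contradicting Fact~\ref{fact:weakIPconst}, which places every Turing-recognizable language, including undecidable ones, in $\IPw{1}$. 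What \emph{is} decidable, and only because of perfect completeness, is the question ``does some prover achieve acceptance probability exactly $1$'': this is settled by the subset-of-configurations tree construction in the proof of Lemma~\ref{lem:app:proof-of-char-of-perfectIP}, where each node collects the verifier configurations consistent with a common communication history. That construction can be carried out input by input for \emph{any} space bound, since the reachable configuration set on a fixed input is finite and can be enumerated by simulation without any constructibility assumption on $s(n)$; substituting it for your stochastic-game step repairs the proof.
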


Now, we turn our attention to the protocols in which the computation always halts with high probability.
(Note that, our weak-protocol may run forever in some cases, i.e.,
the simulated machine may run forever on the given input, 
a cheating prover never sends $ \dollar \dollar $ after sending a few valid configurations, etc.)
In our weak-protocol (above),
the input head of the verifier is never used after checking the first configuration.
In fact, it can be used to check whether 
the length of a configuration sent by the prover is linear.
Thus, the verifier can force the prover to send linear-size configurations.
So, it can be easily obtained that $ \DSPACE{n} \subseteq \perqAM{1} $,
i.e. the prover sends the computation of a linear-space DTM.
If the prover is additionally asked to provide nondeterministic choices,
this result is extended to that $ \NSPACE{n} \subseteq \perqAM{1} $,
i.e. the prover sends the computation of a linear-space NTM,
in which nondeterministic choices are determined by the prover.
Although it is not trivial as the above two results,
we can go one further step:
$ \ASPACE{n} \subseteq \perqAM{1} $,
i.e. the prover sends the computation of a linear-space ATM,
in which nondeterministic choices are determined by the prover
and universal choices are determined by the verifier.
This idea was firstly given by Reif \cite{Re84} for the simulation of a space-bounded ATM by
a private ATM, and then used by Condon and Ladner \cite{CL88}, Condon \cite{Co89}, 
and Dwork and Stockmeyer \cite{DS92} for similar simulations.
We follow the latest result by embedding the simulation idea of Dwork and Stockmeyer \cite{DS92}
into our weak-protocol by making some modifications.

\begin{theorem}
	\label{thm:strong-qAM-const}
	$ \DTIME{2^{O(n)}} = \ASPACE{n} \subseteq \perqAM{1} $.
\end{theorem}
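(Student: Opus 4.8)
The plan is to prove the inclusion $\ASPACE{n}\subseteq\perqAM{1}$; the equality $\DTIME{2^{O(n)}}=\ASPACE{n}$ is the theorem of Chandra, Kozen and Stockmeyer \cite{CKS81}. Fix a linear-space ATM $\mathcal{A}$ recognizing $L$; without loss of generality every configuration of $\mathcal{A}$ has length at most $\alpha n$ for a fixed constant $\alpha$, each nondeterministic or universal state has exactly two successors, and $\mathcal{A}$ halts on every branch. I would keep the verifier, the quantum register, the base-$m$ configuration encoding, and the amplitude-subtraction successor-check exactly as in the proof of Theorem \ref{thm:weak-qAM}, and obtain the result by two modifications of that protocol.

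\emph{First modification (termination and strong soundness).} In the weak protocol of Theorem \ref{thm:weak-qAM} the input head is idle after the first configuration; here I would use it as a finite-state counter that makes a constant number of sweeps over the input in order to verify that every configuration $c_i$ sent by the prover has length at most $\alpha n$, rejecting immediately otherwise. This confines the prover to the finite configuration space of $\mathcal{A}$, so that a prover producing only legal successors from the correct initial configuration is in fact producing an actual, hence halting, computation; combined with the successor-check this makes each round terminate with a decision with probability bounded below, so the verifier halts with probability $1$ and nonmembers are rejected with high probability. Applied to a linear-space DTM this already yields $\DSPACE{n}\subseteq\perqAM{1}$, and letting the prover additionally declare the nondeterministic transition used at each step (whose legality the successor-check verifies) yields $\NSPACE{n}\subseteq\perqAM{1}$.

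\emph{Second modification (alternation).} Following Reif \cite{Re84} and Dwork and Stockmeyer \cite{DS92}, existential transitions are declared by the prover while universal transitions are imposed by the verifier. The delicate point is that at a universal configuration the verifier must effectively test \emph{all} successors rather than a single prover-chosen one. I would realize this with the quantum register: at a universal node the verifier applies a single-outcome (unitary) operation that places the register into an equal superposition of the two universal branches, encodes $\Next_1(c_i)$ and $\Next_2(c_i)$ into the corresponding components, and then continues the coherent successor-check inside each component against the continuation supplied by the prover. Because the branching operation has one public outcome, no information about ``which branch is live'' is written on the communication cell, so the prover cannot tailor its answer to a single branch; this is exactly the superposition-of-successor-checks phenomenon noted after Theorem \ref{thm:weak-qAM}, and it is what lets a \emph{public} protocol imitate the private universal choice of \cite{Re84,DS92}. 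The fixed-size register suffices because, exactly as in Theorem \ref{thm:weak-qAM}, only a constant number of configurations (the scanned one and its immediate successors) are ever held in the amplitudes at once; the exponential computation tree is never stored.

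The accounting then mirrors Theorem \ref{thm:weak-qAM}. For perfect completeness, if $x\in L$ then $\mathcal{A}$ has an accepting computation tree in which every universal node has both children accepting and every existential node has an accepting child; the prover follows it, no successor-check ever fires, an accepting-leaf operation is reached, and so the input is accepted and never rejected in each round. For soundness, if $x\notin L$ then for every prover strategy some universal branch reaches a rejecting leaf, and by the base-$m$ encoding a faulty successor step or a rejecting leaf contributes rejection amplitude exceeding the accepting amplitude of the round by a factor at least $m^2$; hence the overall acceptance probability is at most $\frac{1}{m^2+1}$, driven below any $\epsilon$ by enlarging $m$, while perfect completeness is retained. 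I expect the main obstacle to be precisely this amplitude balancing in the universal case: one must ensure that a reject occurring in \emph{any} universal branch and at \emph{any} depth of the doubly-exponentially large tree still dominates the single accepting computation within one round, which requires carrying the depth-dependent $\left(\frac{1}{d}\right)^{2l}$ weighting of Theorem \ref{thm:weak-qAM} coherently through the superposed branches rather than diluting it by a naive random walk down the tree.
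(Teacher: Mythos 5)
Your first modification matches the paper, but your handling of universal branching contains a genuine gap, and it is exactly where the real difficulty lies. You propose that at a universal configuration the verifier apply a single-outcome operation putting the register into a coherent superposition encoding both successors $\Next_1(c_i)$ and $\Next_2(c_i)$, and then run the successor-check in each component against ``the continuation supplied by the prover.'' The prover, however, supplies only \emph{one} continuation $c_{i+1}$, and $c_{i+1}$ can agree with at most one of the two encoded successors. Hence in the mismatched component the subtraction is nonzero and the rejection outcome fires with positive probability \emph{even for an honest prover on a member input}: rejection probabilities are squared amplitudes of orthogonal components and cannot cancel, so perfect completeness is destroyed at the very first universal node. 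Worse, even if the prover were asked to send continuations for both branches, the two subtrees diverge and each contains further universal nodes, so the number of coherently live components would have to double at every universal level --- after $b$ branchings you would need $2^{b}$ orthogonal components, which a fixed-size register cannot hold. Your remark that ``only a constant number of configurations are ever held in the amplitudes at once'' is true of the paper's protocol precisely because it never attempts what you describe.

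The paper's actual solution is, ironically, the ``naive random walk down the tree'' that you dismiss in your last paragraph: in each round the verifier explores a \emph{single} root-to-leaf path, choosing each universal branch publicly and uniformly at random via a two-outcome superoperator $\left\lbrace \frac{1}{d}I, \frac{1}{d}I \right\rbrace$ (the prover sees the outcome and follows it; publicness is harmless here). The dilution problem you correctly identify --- one rejecting leaf must dominate up to $2^{2^{c|x|}}-1$ accepting leaves --- is then fixed by a third modification you are missing: the register gets an extra state $q_5$ whose amplitude is scaled by $\frac{1}{2d}$ per configuration (versus $\frac{1}{d}$ for $q_1$), and acceptance is performed from the amplitude of $q_5$ while rejection is still performed from $q_1$. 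After $b$ branchings the acceptance probability is thus deflated by a factor $4^{b}$ relative to any rejection probability, so a single rejecting path, even weighted by the $2^{-u}$ probability of the coin flips selecting it, dominates the cumulative acceptance over all other paths. Without this rebalancing (or your unworkable coherent alternative), the soundness accounting fails; with it, the random-walk protocol goes through and no superposition across universal branches is needed at all.
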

\begin{proof}
	Let $ \mathtt{L} $ be a language in $ \ASPACE{n} $.
	Then, there exists a single-tape ATM $ \mathcal{A} $ recognizing $ \mathtt{L} $.
	The components of $ \mathcal{A} $ is similar to $ \mathcal{D} $ 
	given in the proof of Theorem \ref{thm:weak-qAM}.
	(Note that, for an ATM,
	any nonhalting state is labelled as either existential or universal.)
	We make some additional assumptions on $ \mathcal{A} $
	as given in Dwork and Stockmeyer \cite{DS92}
	(See also Appendix \ref{app:review-private-protocols}).
	Tape alphabet $ \Gamma $, apart from $ \# $, contains another special symbol $ \cent $.
	The input, say $ x $, is given as $ \cent x \cent $, 
	$ \cent $ is only overwritten by $ \cent $,
	the head is not allowed to leave the area between the cent symbols,
	and
	any non-cent symbol is only overwritten by a non-cent symbol.	
	Thus, any configuration of $ \mathcal{A} $ is of the form $ uqv $
	such that $ q \in Q $ and $ uv \in \cent ( \Gamma \setminus \{\cent\} )^{|x|} \cent $.
	We assume that $ \mathcal{A} $ makes an existential move after a universal one and vice versa.
	Moreover, each such a transition leads to exactly two branches.
	In order to fix the running time in each branch, 
	we assume that $ \mathcal{A} $ has some additional deterministic states 
	(i.e. universal states with no branching)
	to keep a counter to guarantee that
	the decision is always given after making $ 2^{c|x|} $ branching transitions.
	So, we can group non-halting states of $ \mathcal{A} $ into three groups:
	\begin{enumerate}
		\item existential states,
		\item universal states leading to two transitions, and 
		\item universal states leading to one transition.
	\end{enumerate}
	The third ones are called deterministic states to prevent confusion.
	So, the second ones are called just universal states.
	We also assume that each counter operation takes the same amount of steps, which is only dependent on 
	the length of input ($ |x| $).
	So the computation tree of $ \mathcal{A'} $ on $ x $, denoted by $ \mathcal{T}_{\mathcal{A}}(x) $,
	has $ 2^{2^{c|x|}} $ leafs,
	and each path from the root to a leaf has the same depth,
	where $ c $ is an appropriate constant.
	If the leaf is an accepting (rejecting) configuration, then the path is called accepting (rejecting) path.
	
	Now we will describe a qAM proof system $ (P,V) $ for $ \mathtt{L} $
	with perfect completeness based the qAM system given in the proof of Theorem \ref{thm:weak-qAM}
	after making some modifications, where $ V $ is a finite state verifier.
	
	The first modification is that
	the verifier requests a computation path of $ \mathcal{T}_{\mathcal{A}}(x) $,
	i.e. a path from the root to one of the leafs, from the prover in each round.
	The format of the computation is the same:
	\[
		c_1 \dollar \dollar c_2 \dollar \dollar \cdots c_i \dollar \dollar c_{i+1} \dollar \dollar \cdots .
	\]
	However, before a successor-check, say the $ i^{th} $ one,
	the verifier should know which branch it follows after $ c_i $
	since the verifier encodes $ \Next(c_i) $ during processing $ c_i \dollar \dollar $.
	Therefore, the verifier should know the follow-up branch before obtaining $ c_i \dollar \dollar $.
	Similarly, the prover should be aware of this branch before sending $ c_{i+1} $.
	As mentioned earlier, any existential branch is determined by the prover and 
	any universal branch is determined by the verifier.
	One of the convenient way is that
	before communicating on $ c_i \dollar \dollar $,
	both parties exchange their decisions \textit{about which branch is followed after $ c_i $}.
	Remark that if $ c_i $ is a deterministic branch, then both choices become useless,
	if it is an existential (a universal) one, the choice of the prover (the verifier) becomes useless.
	The verifier makes its choice with equal probability.
	Therefore, it applies a superoperator having the following two main operation elements
	\[
		\left\lbrace  \underbrace{ \frac{1}{d} I }_{l} , \underbrace{ \frac{1}{d} I }_{r} \right\rbrace,
	\]
	where outcome ``l" (``r") represents the left-branch (right-branch) and
	$ I $ is the identity operator.
	
	The second modification is that 
	each configuration length is \textit{deterministically} checked 
	by the verifier to be equal to $ |x|+2 $ by help of the input head.
	(We denote this property as P4.)
	If not, the computation is terminated and the input is rejected immediately.
	
	The third modification is about the acceptance probability,
	which is dramatically made smaller when compared to the one in the original protocol.
	We describe why the original strategy does not work with an example.
	\begin{center}
	\begin{minipage}{0.9\textwidth}
		Remember that 
		if a round ends with an accepting (a rejecting) configuration in the original protocol,
		then the input is accepted (rejected) with a probability
		calculated by the amplitude of $ \ket{q_1} $.
		Suppose that the prover sends a computation of $ \mathcal{A} $ satisfying P1-P4.
		Then if the verifier follows the original strategy,
		the input is rejected (accepted) with the same probability at the end of each round
		since the length of each path is equal in this case.
		If the prover follows a nondeterministic strategy of $ \mathcal{A} $ that leads to
		exactly one rejecting leaf, then the input is accepted with a high probability
		although the input must be rejected with high probability with respect to this subtree.
	\end{minipage}
	\end{center}
	Our new acceptance strategy is as follows.
	The verifier uses an additional state, $ q_5 $, on the register.
	When a new round is initiated,
	the state vector is immediately set to
	\[
		 \left( \frac{1}{d} \right) \left( \begin{array}{c} 1 \\ 0 \\ 0 \\ 0 \\ 1 \end{array}	 \right).
	\]
	Then, the amplitude of $ \ket{q_5} $ is multiplied by
	\[
		\left( \frac{1}{2d} \right)
	\]
	for each configuration in the computation.
	The input is accepted similar to the original protocol but by the amplitudes of $ \ket{q_5} $.
	Note that after making $ b $ branches, 
	the ratio of the amplitude of $ \ket{q_1} $ to the amplitude of $ \ket{q_5} $
	is $ 2^{b} $, and so,
	any rejecting probability calculated by the amplitude of $ \ket{q_1} $
	is $ 4^{b} $ times greater than 
	any accepting probability calculated by the amplitude of $ \ket{q_1} $.

	Now, we can analyse our modified protocol.
	If $ x \in L $, then (honest) $ P $ always sends a valid computation of $ \mathcal{A} $ on $ x $,
	and the input is always accepted with a (constant) non-zero probability in each round.
	Therefore, $ x $ is accepted by $ V $ exactly.
	
	If $ x \notin L $, there exists always at least 
	one rejecting path whichever nondeterministic strategy is followed.
	As known form the original protocol,
	if the prover ($ P^{*} $) sends some configurations violating any of P1-P4, 
	then the rejecting probability is always sufficiently greater than the accepting one in a single round.
	Therefore, suppose that the prover sends the configurations satisfying P1-P4.
	Let $ p $ be the probability of rejecting $ x $ at the end of a rejecting path.
	So, the accepting probability of $ x $ at the end of an accepting path
	is equal to 
	\[
		p(\frac{1}{4})^{2^{c|x|}}.
	\]
	Since there can be at most $ 2^{2^{c|x|}}-1 $ accepting paths, a single rejecting probability 
	is sufficiently greater than the overall accepting probability.
\end{proof}

\begin{theorem}
	\label{thm:strong-qAM-s}
	For any space-constructible $ s(n) \in \Omega(\log(n)) $,
	\[ \DTIME{2^{2^{O(s(n))}}} = \ASPACE{2^{O(s(n))}} \subseteq \perqAM{s(n)}. \]
\end{theorem}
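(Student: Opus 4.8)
The plan is to lift the construction of Theorem~\ref{thm:strong-qAM-const} from the linear-space setting to space bound $ s(n) $, so that the verifier simulates an ATM running in space $ 2^{O(s(n))} $ rather than in linear space. Since the equality $ \ASPACE{2^{O(s(n))}} = \DTIME{2^{2^{O(s(n))}}} $ is classical (it also underlies Fact~\ref{fact:perfectIP}), it suffices to build, for a language $ \mathtt{L} \in \ASPACE{2^{O(s(n))}} $, a perfect-completeness $ s(n) $-space qAM protocol $ (P,V) $. I would take a single-tape ATM $ \mathcal{A} $ for $ \mathtt{L} $ working in space $ 2^{O(s(n))} $ and impose exactly the normalizations of Theorem~\ref{thm:strong-qAM-const}: a $ \cent $-delimited work area, alternation of existential and universal moves with binary branching, and an internal step-counter forcing every root-to-leaf path of $ \mathcal{T}_{\mathcal{A}}(x) $ to the same depth. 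The protocol is then run essentially verbatim: in each round the prover streams a path $ c_1 \dollar\dollar c_2 \dollar\dollar \cdots $, the two parties exchange the branch taken after each $ c_i $ (existential branches chosen by $ P $, universal ones by $ V $ via the superoperator $ \{ \frac{1}{d} I, \frac{1}{d} I \} $), the four-state register performs the successor-checks as before, and the dampening state $ \ket{q_5} $ is multiplied by $ \frac{1}{2d} $ per configuration.

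The one substantive change is the configuration-length test (property P4), and this is the main obstacle. In Theorem~\ref{thm:strong-qAM-const} each $ c_i $ must have length $ |x|+2 $, which the finite-state verifier checks by sweeping its input head; here each $ c_i $ must have length equal to the space bound $ 2^{O(s(n))} $, which is exponential in the verifier's own resource and can no longer be tracked by the input head. I would resolve this using the $ s(n) $ work tape: since $ s(n) \in \Omega(\log n) $ is space-constructible, $ V $ first marks off $ \Theta(s(n)) $ cells and then maintains a binary counter on them that runs up to $ 2^{O(s(n))} $, incrementing once per incoming symbol and verifying that every $ c_i $ has exactly the prescribed length before its successor-check is finalized; any deviation rejects $ x $ immediately (the input head is still used only to confirm that $ c_1 $ embeds $ x $ correctly, i.e.\ property P2). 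No other part of the construction needs more than the fixed-size register and the $ O(1) $-symbol successor buffer, so $ V $ stays within $ s(n) $ space. The astronomically large base-$ m $ numbers now carried in the amplitudes (over configurations of length $ 2^{O(s(n))} $) cause no difficulty, since the per-symbol factor $ \frac{1}{d} $ with $ d \ge m $ keeps each unconditional state vector bounded regardless of length.

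The soundness analysis then carries over unchanged. For $ x \notin \mathtt{L} $, if $ P^{*} $ violates any of P1--P4 the round rejects with probability at least $ m^2 $ times its accepting probability, exactly as in Theorem~\ref{thm:weak-qAM}; otherwise every nondeterministic strategy still exposes at least one rejecting leaf, which contributes rejection probability $ p $. The decisive point is that each binary branch at most doubles the number of accepting leaves, while the $ \ket{q_5} $ dampening shrinks the accepting probability of each leaf by a factor $ 4 $ per branch, so the total accepting probability in a round is at most $ 2^{b} \cdot p \cdot 4^{-b} = p \cdot 2^{-b} \le \frac{p}{2} $, where $ b $ is the common number of branches on a path. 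This bound is independent of the depth $ 2^{2^{O(s(n))}} $ of $ \mathcal{T}_{\mathcal{A}}(x) $ and can be pushed below any desired $ \epsilon $ by enlarging $ m $. Perfect completeness is immediate: for $ x \in \mathtt{L} $ the honest $ P $ sends an accepting subtree, no successor-check ever fires, and $ V $ accepts with a fixed nonzero probability in every round. Combined with the equality above, this yields $ \DTIME{2^{2^{O(s(n))}}} = \ASPACE{2^{O(s(n))}} \subseteq \perqAM{s(n)} $.
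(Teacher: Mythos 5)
Your proposal is correct and follows essentially the same route as the paper: the paper's own proof consists precisely of the observation that the verifier can use its classical work tape (rather than the input head) to force the prover to send $2^{O(s(n))}$-size configurations, with everything else carried over verbatim from the proof of Theorem~\ref{thm:strong-qAM-const}. Your write-up simply makes explicit the binary-counter implementation of that length check and re-verifies the dampening/soundness analysis, both of which match the paper's argument.
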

\begin{proof}
	In this case, the verifier can force the prover to send $ 2^{O(s(n))} $-size configurations
	by using its classical work tape. 	
	The remainder follows from the proof of Theorem \ref{thm:strong-qAM-const}.
\end{proof}
\begin{corollary}
	$ \mathsf{EXPTIME} \subseteq \perqAM{log} $.
\end{corollary}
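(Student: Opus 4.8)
The plan is simply to instantiate Theorem \ref{thm:strong-qAM-s} at the space bound $s(n) = \log(n)$ and to simplify the resulting double-exponential bounds. First I would verify the hypotheses: $\log(n)$ is space-constructible and lies in $\Omega(\log(n))$, so the theorem applies directly and gives
\[
	\DTIME{2^{2^{O(\log n)}}} = \ASPACE{2^{O(\log n)}} \subseteq \perqAM{log},
\]
where I have used that $\perqAM{s(n)}$ with $s(n) = \log(n)$ is exactly $\perqAM{log}$ in the notation of Section \ref{sec:pre}.

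The only remaining task is to identify the left-hand class with $\exptime$. Since $2^{O(\log n)} = n^{O(1)}$ ranges over precisely the polynomials, the inner exponent $2^{O(\log n)}$ is polynomial, whence $2^{2^{O(\log n)}} = 2^{n^{O(1)}}$. By the definition of $\exptime$ recalled in Section \ref{sec:pre}, namely $\exptime = \cup_{k>0}\DTIME{2^{O(n^k)}} = \DTIME{2^{n^{O(1)}}}$, the class $\DTIME{2^{2^{O(\log n)}}}$ is exactly $\exptime$. Chaining these identities with the inclusion furnished by the theorem then yields $\exptime \subseteq \perqAM{log}$, which is the claim.

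I do not expect any genuine obstacle here: the corollary is a direct specialization of the preceding theorem. The one step worth stating explicitly is the collapse $\DTIME{2^{2^{O(\log n)}}} = \exptime$ --- that a double exponential of a logarithm is the single exponential of a polynomial --- but this is a routine unwinding of the definitions rather than a real difficulty, so the proof is essentially a one-line substitution into Theorem \ref{thm:strong-qAM-s}.
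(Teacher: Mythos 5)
Your proof is correct and is exactly the intended argument: the paper states this corollary without proof as an immediate specialization of Theorem \ref{thm:strong-qAM-s} at $s(n) = \log(n)$, and your only substantive step --- the identification $\DTIME{2^{2^{O(\log n)}}} = \exptime$ via $2^{O(\log n)} = n^{O(1)}$ --- is the routine unwinding the paper leaves implicit.
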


\begin{comment}
The main corollaries of Theorem \ref{thm:strong-qAM-s} are as follows:
Under the same space bounds,
qAM systems with perfect-completeness can simulate IP systems with perfect-completeness and so
qAM systems are \textit{strictly} more powerful than AM systems.

The main consequence of Theorem \ref{thm:strong-qAM-s} is that 
since qAM systems with perfect-completeness can simulate the known IP protocols 
with perfect-completeness
\end{comment}

Due to Fact \ref{fact:AM} and Theorem \ref{thm:strong-qAM-s}, we can follow that
qAM systems (having perfect-completeness) are \textit{strictly} more powerful than any AM system under the same
space bound.
\begin{corollary}
	For any space bound $ s(n) $,
	\[
		\AM{s(n)} \subseteq \AMw{s(n)}  \subsetneq  \perqAM{s(n)}.
	\]
\end{corollary}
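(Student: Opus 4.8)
The plan is to assemble the chain from Fact~\ref{fact:AM}, which characterizes the weak-AM classes, and Theorems~\ref{thm:strong-qAM-const} and~\ref{thm:strong-qAM-s}, which place large alternating-space classes inside $ \perqAM{\cdot} $, certifying the strict step with the deterministic time hierarchy theorem. The left inclusion $ \AM{s(n)} \subseteq \AMw{s(n)} $ is immediate from the definitions: weak-soundness (condition~(2$'$)) is a relaxation of soundness (condition~(2)), since for $ x \notin L $ a rejecting probability exceeding $ 1-\epsilon $ forces the accepting probability below $ \epsilon $, so every strong AM system is already a weak-AM system with the same space bound. The substance is the strict inclusion $ \AMw{s(n)} \subsetneq \perqAM{s(n)} $, which I would split according to the two regimes in which the cited results are stated.

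For space-constructible $ s(n) = \Omega(\log(n)) $, Fact~\ref{fact:AM} gives $ \AMw{s(n)} = \ASPACE{s(n)} $ and Theorem~\ref{thm:strong-qAM-s} gives $ \ASPACE{2^{O(s(n))}} \subseteq \perqAM{s(n)} $. I would then insert the strict separation $ \ASPACE{s(n)} \subsetneq \ASPACE{2^{O(s(n))}} $, obtained from the Chandra--Kozen--Stockmeyer identities $ \ASPACE{s(n)} = \DTIME{2^{O(s(n))}} $ and $ \ASPACE{2^{O(s(n))}} = \DTIME{2^{2^{O(s(n))}}} $ together with the deterministic time hierarchy theorem (applicable because $ 2^{O(s(n))} $ is time-constructible and the time gap is exponential). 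Chaining
\[
	\AMw{s(n)} = \ASPACE{s(n)} \subsetneq \ASPACE{2^{O(s(n))}} \subseteq \perqAM{s(n)}
\]
delivers the containment and its strictness at once.

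For $ s(n) \in O(1) $, Fact~\ref{fact:AM} gives $ \AMw{1} \subsetneq \ptime $ and Theorem~\ref{thm:strong-qAM-const} gives $ \DTIME{2^{O(n)}} = \ASPACE{n} \subseteq \perqAM{1} $. The deterministic time hierarchy theorem supplies $ \ptime \subsetneq \DTIME{2^{O(n)}} $, so chaining
\[
	\AMw{1} \subseteq \ptime \subsetneq \DTIME{2^{O(n)}} \subseteq \perqAM{1}
\]
yields $ \AMw{1} \subsetneq \perqAM{1} $.

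I expect the containments themselves to be routine, read directly off the cited results; the only point needing care is the middle strict separation. One must invoke the time hierarchy theorem with the correct space-constructible resource bounds and the correct $ \ASPACE $--$ \DTIME $ translation, so that the witnessing language provably sits inside $ \perqAM{s(n)} $ via Theorems~\ref{thm:strong-qAM-const}/\ref{thm:strong-qAM-s} while remaining outside $ \AMw{s(n)} $ via Fact~\ref{fact:AM}.
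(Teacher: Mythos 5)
Your proposal is correct and follows essentially the same route as the paper: the paper derives the corollary in one line from Fact~\ref{fact:AM} and Theorem~\ref{thm:strong-qAM-s}, with the strictness resting implicitly on exactly the hierarchy-theorem separation $\ASPACE{s(n)} = \DTIME{2^{O(s(n))}} \subsetneq \DTIME{2^{2^{O(s(n))}}} = \ASPACE{2^{O(s(n))}} \subseteq \perqAM{s(n)}$ that you spell out. Your explicit handling of the constant-space regime via Theorem~\ref{thm:strong-qAM-const} is a natural completion of what the paper leaves implicit, not a different argument.
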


%%%%%%%%%%%%%%%%%%%%%%%%%%%%%%%%%%%%%%%%%%%%%%%%%%%%%%%%%%%%%%%%%%%%%%%%%%%%%%%%%%%%%%%%%%%
%%%%%%%%%%%%%%%%%%%%%%%%%%%%%%%%%%%%%%%%%%%%%%%%%%%%%%%%%%%%%%%%%%%%%%%%%%%%%%%%%%%%%%%%%%%
%%%%%%%%%%%%%%%%%%%%%%%%%%%%%%%%%%%%%%%%%%%%%%%%%%%%%%%%%%%%%%%%%%%%%%%%%%%%%%%%%%%%%%%%%%%
\section{q-Alternation}
\label{sec:q-alternation}
%%%%%%%%%%%%%%%%%%%%%%%%%%%%%%%%%%%%%%%%%%%%%%%%%%%%%%%%%%%%%%%%%%%%%%%%%%%%%%%%%%%%%%%%%%%
%%%%%%%%%%%%%%%%%%%%%%%%%%%%%%%%%%%%%%%%%%%%%%%%%%%%%%%%%%%%%%%%%%%%%%%%%%%%%%%%%%%%%%%%%%%
%%%%%%%%%%%%%%%%%%%%%%%%%%%%%%%%%%%%%%%%%%%%%%%%%%%%%%%%%%%%%%%%%%%%%%%%%%%%%%%%%%%%%%%%%%%

As mentioned earlier, alternation and AM proof systems are games with \textit{complete information}.
Moreover, they could be obviously related to each other, e.g.
alternation can be ``inherited" from AM systems as follows:
The verifier is replaced by a universal player and
all provers are represented by an existential player.\footnote{We refer the reader to Condon \cite{Co89}
for the technical details.}

In this section, 
we introduce the notion of \textit{quantum alternation for the first time}.
We define quantum alternation similar to our qAM system,
i.e., its quantum part is only a fixed-size quantum register and 
this register can only be accessible by the universal states.
We call the model \textit{q-alternation} due to its ``very" limited quantum part,
and give the definition based on Turing machine, \textit{q-alternating Turing Machine} (qATM).

A qATM is an ATM augmented with a fixed-size quantum register, 
based on which universal branches are determined.
Any configuration of a qATM can be represented by a pair $ (c,\ket{\psi}) $,
where $ c $ represent the classical configuration of the machine and 
$ \ket{\psi} $ is the state of quantum register.
Let $ \{c_1,\ldots,c_{k_c}\} $ be the transitions determined by the classical transition function of the machine
with respect to the classical state and the symbol(s) under the tape head(s) in configuration $ c $,
where $ k_c $ is the total number branches.
If $ c $ is an existential configuration, then
only the classical part of the qATM is changed,
and the following transition(s) is (are) implemented:
\[
	(c,\ket{\psi}) \rightarrow \{ (c_i,\ket{\psi}) \mid 1 \leq i \leq k_c\}.
\]
\normalsize
If $ c $ is a universal configuration,
then both the classical part and the quantum state of the qATM are changed:
The machine applies a superoperator determined by 
the classical state and the symbol(s) under the tape head(s) in configuration $ c $, 
$ \mathcal{E}_c = \{ E_{c,1},\ldots,E_{c,k_c} \} $,
to the register, i.e. 
\[
	\ket{\psi_i} = \frac{\ket{\widetilde{\psi_i}}}{\sqrt{p_i}}
	\mbox{ if } 
	p_i \neq 0,
	\mbox{ where }
	p_i = \braket{ \widetilde{\psi_i} }{ \widetilde{\psi_i} },
	~
	 \ket{ \widetilde{ \psi_i } } = E_{c,i} \ket{\psi},
 	\mbox{ and }
 	1 \leq i \leq k_c,
\]
\normalsize
and then the following transition(s) is (are) implemented:
\[
	(c,\ket{\psi}) \rightarrow \{ (c_i,\ket{\psi_i}) \mid p_i > 0, 1 \leq i \leq k_c \}.
\]
\normalsize
Note that, the transitions having zero probability $ (p_i = 0) $ are not implemented.\footnote{
	In terms of two-person games \cite{Co89},
	we can say that \textit{the player who makes the universal choices uses a quantum register to make its choices,
	therefore any choice with zero probability can never be a part of that player's strategy}.
}
The computation starts when the machine is in the initial classical configuration and 
the initial quantum state. 
The computation is terminated with the decision of ``acceptance" (``rejection")
if the machine enters an accepting (rejecting) configuration.
The acceptance criteria of qATM is the same as ATM. 
For any given input, we have a computation tree representing all moves of the machine.
The input is accepted if and only if 
there exists a \textit{finite accepting subtree}\footnote{
	Each leaf of an accepting subtree is an accepting leaf, a leaf in which the decision of ``acceptance" is given.
} for a nondeterministic strategy in this computation tree.
If we remove the work tape of a qATM, and restrict the input head to one-way, 
we obtain a one-way q-alternating finite automaton (q-1AFA).

We begin with a q-1AFA simulation of the qAM protocol given in the proof of Theorem \ref{thm:weak-qAM}.
Thus, we obtain that q-alternation leads us to simulate any TM even 
the input head is restricted to one-way.

\begin{theorem}
	\label{thm:RE-q-1AFA}
	Any Turing-recognizable language can be recognized by a q-1AFA.
\end{theorem}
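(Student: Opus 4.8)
The plan is to inherit the q-alternation from the weak-qAM protocol of Theorem \ref{thm:weak-qAM} in the standard way: the existential player plays the role of the prover and the universal (quantum) moves play the role of the verifier's superoperators. Recall that in that protocol the verifier touches the input head only to check that the first configuration $c_1$ equals the initial configuration $q_1 \blank x \blank$; since $x$ sits inside $c_1$ in left-to-right order, this comparison is a single left-to-right sweep, so a one-way input head is enough, and after $c_1$ the head is parked on the blanks forever. This is exactly why the one-way restriction of a q-1AFA is not an obstacle here.

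Concretely, I would let the existential states guess, symbol by symbol, the string $c_1 \dollar \dollar c_2 \dollar \dollar c_3 \dollar \dollar \cdots$ that the prover formerly transmitted. Each guessed symbol is consumed immediately: the machine passes to a universal state that applies the very superoperator $\mathcal{E}_{i,j}$ from the proof of Theorem \ref{thm:weak-qAM} to the same four-state register, the finite control carrying the three-symbol buffer that computes $\Next(c_i)$ and remembering which operator is due. Properties P1 and P2 are verified deterministically (any violation routes to a rejecting leaf), so the only genuine work, the successor-checks P3, is again done on the register.

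The crux is translating the two-sided probabilistic decision into the one-sided alternation criterion, and here the stipulation that \emph{branches of probability zero are never implemented} does all the work. At the universal state finalizing the $i$-th successor-check, the subtraction operation element produces an amplitude proportional to $\Next(c_i)-c_{i+1}$; when the check succeeds this amplitude is exactly $0$, so no branch appears, whereas when it fails the amplitude is nonzero and a universal branch to a rejecting configuration is created. I would route the "decide" outcome to an accepting or rejecting leaf according to the type of the reached halting configuration, the "continue" outcome to the next existential guess, and every auxiliary outcome (which merely restarted a round in the qAM setting) to an accepting leaf. Since a universal node accepts only if all of its nonzero-probability children accept, an accepting auxiliary leaf can never rescue a cheating existential player: one failed successor-check plants a rejecting child and kills the node.

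Finally I would check both directions. If $x\in L$, the existential player guesses the genuine accepting computation of $\mathcal{D}$; every check succeeds, so no rejecting branch is ever planted, every universal node has only accepting children, and the path terminates at an accepting leaf, yielding a finite accepting subtree. If $x\notin L$, any guess reaching an accepting leaf must violate P3 somewhere, planting a rejecting branch that destroys the subtree, while a guess tracking the true (rejecting or nonhalting) computation either reaches a rejecting leaf or never halts; an infinite path cannot lie in a finite accepting subtree because the "continue" child must be included in full. Hence a finite accepting subtree exists precisely when $x\in L$. I expect the main obstacle to be exactly this last translation of acceptance: arguing rigorously that the unanimity requirement at universal nodes, together with the nonimplementation of zero-probability branches, reproduces the bounded-error behavior of the qAM protocol — in particular that auxiliary outcomes may be made accepting (sound because a failed check always supplies a rejecting sibling) and that the nonhalting case is handled by the finiteness of accepting subtrees.
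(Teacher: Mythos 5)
Your proposal is correct and follows essentially the same route as the paper's own proof: both take the weak-qAM protocol of Theorem \ref{thm:weak-qAM}, observe that its verifier's input head already works one-way (it only sweeps left-to-right once to check $c_1$), redefine the auxiliary ``restart'' outcomes as acceptance so that only a single round is executed, let existential states play the prover and universal states apply the verifier's superoperators, and conclude via the finite-accepting-subtree criterion that members (by perfect completeness and the non-implementation of zero-probability branches) admit such a subtree while nonmembers do not, since any halting cheat plants a nonzero-probability rejecting leaf and any honest path is infinite. The only difference is presentational: the paper factors the construction through an intermediate verifier $V'$ before building the automaton, whereas you fold both steps into one, and you spell out slightly more explicitly why an infinite ``continue'' path cannot sit inside a finite accepting subtree.
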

\begin{proof}
	Let $ \mathtt{L} $ be a Turing-recognizable language,
	$ \mathcal{D} $ be a single-tape DTM recognizing $ \mathtt{L} $, and
	$ (P,V) $ be the qAM system for $ L $, as described in the proof of Theorem \ref{thm:weak-qAM}.
	It is obvious that $ V $ never needs to move its input head to the left in a single round.
	That is, the input head is used only at the beginning of each round to check
	whether the prover sends the valid initial configuration, 
	which can be easily be implemented by moving the input head from the left to the right once.
	We define a new one-way finite state verifier $ V' $ based on $ V $.
	The only difference between $ V $ and $ V' $ is that
	when the outcome of an auxiliary operation element is observed,
	$ V' $ terminates the computation with decision of ``acceptance", instead of initiating a new round.
	Thus, $ V' $ executes only a single round 
	(and so $ V' $ never needs to move its input head to the left).
	
	The analysis of the proof system $ (P,V') $ is as follows.
	Let $ x $ be an input string. 
	If $ x \in \mathtt{L} $, the computation is terminated in every branch,
	and the input is accepted by $ V' $ with probability 1 by the help of $ P $.
	(Remember that $ (P,V) $ has perfect-completeness.)
	If $ x \notin \mathtt{L} $, 
	there are two cases depending on the prover ($ P^* $) strategy 
	and also the behaviour of $ \mathcal{D} $ on $ x $:
	(1) the computation may be run forever in some branches, and,
	(2) the computation is terminated in every branch and
	the input is rejected by $ V' $ with some non-zero probability .
	
	Now, based on $ V' $, we can easily construct a q-1AFA $ \mathcal{A} $ recognizing $ \mathtt{L} $.
	The universal states of $ \mathcal{A} $ simulate $ V' $ and
	the existential states of $ \mathcal{A} $ simulate the communications with all possible provers.
	If $ x \in \mathtt{L} $,
	there exists a finite \textit{accepting} subtree 
	whose existential moves correspond to the communication with $ P $.
	If $ x \notin \mathtt{L} $,
	the finite sub-tree(s) can only be possible in the second case given above.
	Obviously, at least one leaf of such a subtree is a reject.
	Therefore, there is no finite accepting subtree for the nonmembers.
\end{proof}

\begin{corollary}
	For any space bound $ s(n) $,
	q-1AFAs are strictly more powerful than any $ s(n) $ space-bounded private ATM.
\end{corollary}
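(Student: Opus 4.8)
The plan is to establish strict separation by proving one inclusion and then producing a single separating language. First I would show that every language recognized by an $ s(n) $ space-bounded private ATM is also recognized by some q-1AFA. Let $ L $ be recognized by such a machine. Since the machine is space-bounded, on any fixed input it has only finitely many distinct configurations, so its (finite) configuration graph can be analyzed effectively; consequently $ L $ is decidable and, in particular, Turing-recognizable. Theorem \ref{thm:RE-q-1AFA} then immediately supplies a q-1AFA recognizing $ L $. This gives the inclusion of the private-ATM class inside the q-1AFA class for every fixed $ s(n) $.

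For strictness I would exhibit a language recognized by a q-1AFA but by no $ s(n) $ space-bounded private ATM. By Theorem \ref{thm:RE-q-1AFA}, \emph{every} Turing-recognizable language is recognized by some q-1AFA, and this collection includes undecidable languages such as the halting problem. On the other hand, as recalled in the introduction following Reif \cite{Re84}, the languages recognized by any space-bounded (private) ATM form a proper subset of the decidable languages. Hence an undecidable Turing-recognizable language is recognized by a q-1AFA yet lies outside the reach of any $ s(n) $ space-bounded private ATM, which together with the inclusion above makes the containment proper. (Alternatively, one could invoke the stronger Reif statement directly and take a decidable language missed by the private-ATM class, since decidable languages are Turing-recognizable and hence captured by Theorem \ref{thm:RE-q-1AFA}.)

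The argument is essentially a repackaging of Theorem \ref{thm:RE-q-1AFA} together with the classical principle that a fixed space bound forces decidability. The only point I expect to require genuine care is the inclusion direction: I must argue that the \emph{privacy} feature does not allow a space-bounded ATM to escape decidability. This holds because the privacy restriction only limits the information the universal player reveals to the existential player and does not enlarge the finite configuration space, so the standard cycle-elimination analysis of the finite configuration graph still decides acceptance. I regard this as the main, though minor, obstacle, since it is the one step where the definition of private alternation must be inspected rather than a previously stated result merely quoted.
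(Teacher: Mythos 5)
Your proposal is correct and follows essentially the same route as the paper: the paper's proof is exactly the combination of Theorem \ref{thm:RE-q-1AFA} with Reif's result \cite{Re84} that the languages of space-bounded private ATMs form a proper subset of the decidable languages, which simultaneously yields the inclusion (decidable implies Turing-recognizable) and the strict separation (via an undecidable Turing-recognizable language, or equivalently via the properness of Reif's inclusion). Note that your ``main obstacle'' --- verifying that privacy does not break decidability --- needs no independent configuration-graph argument, since it is already contained in the cited fact from \cite{Re84}.
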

\begin{proof}
	This follows from Theorem \ref{thm:RE-q-1AFA} and the fact that
	the class of languages recognized by any space-bounded private ATM
	is a proper subset of decidable languages \cite{Re84},
\end{proof}
Since the entries of any operation element are rational numbers,
any space-bounded qATM can be simulated by a DTM in a straightforward way.
Due to this fact and Theorem \ref{thm:RE-q-1AFA},
we cannot mention a space hierarchy for q-alternation.
Moreover, the computation of qATMs may not be halted in some paths.
Therefore, we define a restricted version of q-alternation:
\textit{strong q-alternation}.
Any q-alternating machine is a strong one if it halts on every computational path.
We will denote the related space complexity classes by $ \qASPACE{\cdot} $,
i.e. $ \qASPACE{s(n)} $ is the class of languages recognized by $ s(n) $ space-bounded strong qATMs.
$ \qalogspace $ and $ \qapspace $ are strong q-alternating counterparts of 
$ \alogspace $ and $ \apspace $, respectively.
We show that 
strong q-alternation (similar to private alternation \cite{Re84}) shifts the deterministic space hierarchy
by exactly one level.

\begin{comment}
We know that classical alternation shifts the deterministic hierarchy 
\[
	\logspace \subseteq \ptime \subseteq \pspace \subseteq \exptime \subseteq \expspace
\]
by exactly one level \cite{CKS81}.

\[
	\logspace \subsetneq \pspace \subsetneq \expspace
\]
\end{comment}

\begin{theorem}
	\label{thm:qatm:qaspace-equal-exp-dspace}
	For any space-constructible $ s(n) \in \Omega(\log(s(n))) $,
	\begin{center}
$
	 \DSPACE{2^{O(s(n))}} = \qASPACE{s(n)}.
	$
\end{center}
\end{theorem}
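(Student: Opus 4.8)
The plan is to prove the equality by establishing both inclusions separately, mirroring the classical relationship $\ASPACE{s(n)} = \DTIME{2^{O(s(n))}}$ of Chandra-Kozen-Stockmeyer and the private-alternation shift of Reif, but now exploiting the quantum register to compress the simulated space exponentially.

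For the inclusion $\DSPACE{2^{O(s(n))}} \subseteq \qASPACE{s(n)}$, I would adapt the technique already developed for Theorem \ref{thm:strong-qAM-const} and Theorem \ref{thm:strong-qAM-s}. The idea is that a strong $s(n)$-space qATM can verify the computation of a $2^{O(s(n))}$-space DTM by requesting (via existential branching) a sequence of configurations, each of length $2^{O(s(n))}$, and using the quantum register of fixed size to perform the \emph{successor-check} that compares $\Next(c_i)$ with $c_{i+1}$ by encoding both into amplitudes and subtracting. The universal branches of the qATM play the role of the verifier's coin flips/superoperator outcomes in the qAM protocol, while existential branches supply the prover's messages. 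The classical work tape, bounded by $s(n)$, is used to count positions within a configuration up to $2^{O(s(n))}$ and to enforce property P4 (that each configuration has the correct length). The crucial point distinguishing this from the weak qAM case is that the machine must \emph{halt on every path}: this is guaranteed because the length of every simulated configuration is fixed at $2^{O(s(n))}$ and the DTM being simulated is deterministic, so each round terminates after finitely many steps regardless of which universal outcome is taken.

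For the reverse inclusion $\qASPACE{s(n)} \subseteq \DSPACE{2^{O(s(n))}}$, I would argue that a strong $s(n)$-space qATM can be simulated deterministically in $2^{O(s(n))}$ space. The key observation is that the number of distinct classical configurations of an $s(n)$-space qATM is $2^{O(s(n))}$, and since all amplitudes in the superoperators are rational numbers, the reachable quantum states lie in a space that can be tracked with $2^{O(s(n))}$ bits of precision. A deterministic machine can therefore evaluate the acceptance condition, namely the existence of a finite accepting subtree, by a recursive bottom-up evaluation over the computation tree: existential nodes accept if \emph{some} child accepts, universal nodes accept if \emph{all} children with positive probability accept. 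Because the machine is strong (halts on every path), this tree is finite along every branch, so the recursion terminates. The whole evaluation fits in $2^{O(s(n))}$ space since we store only the current configuration plus the rational description of the current quantum state along one root-to-node path.

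The main obstacle I expect is the backward inclusion, specifically arguing that the acceptance predicate of a strong qATM is decidable in $2^{O(s(n))}$ deterministic space despite the quantum register. One must verify that the branches with $p_i = 0$ (which are pruned and never explored, per the footnote about zero-probability choices) do not create subtle difficulties: a universal quantum configuration accepts only if every \emph{positive-probability} child accepts, so the deterministic simulator must correctly determine which $p_i$ vanish. Since amplitudes are rational and the register has fixed dimension, computing $p_i = \braket{\widetilde{\psi_i}}{\widetilde{\psi_i}}$ exactly and testing it against zero is feasible within the space bound, but care is needed to represent the accumulated rational amplitudes without blowup beyond $2^{O(s(n))}$ bits. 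The strong-halting hypothesis is essential here and precisely rules out the pathology (infinite paths) that made unrestricted q-alternation Turing-complete in Theorem \ref{thm:RE-q-1AFA}; establishing that halting guarantees a bounded-depth tree evaluation is the conceptual heart of the argument.
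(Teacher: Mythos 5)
Your two-inclusion decomposition matches the paper, and your backward inclusion is essentially the paper's route (the paper factors it through a nondeterministic simulation, Lemma \ref{lem:qatm:q-ATM-simulated-by-NTM}, plus Savitch's theorem, where you do a direct depth-first evaluation). But your forward inclusion has a genuine gap. You have the strong qATM simulate the $2^{O(s(n))}$-\emph{space} DTM directly, with existential branches supplying the configuration sequence. Strongness, however, quantifies over \emph{all} computation paths, including those where the existential player behaves adversarially: nothing forces the existential branches ever to produce a halting configuration, and on the path where every superoperator outcome is the ``continue the round'' outcome the machine runs forever. The only way to kill such paths is a deterministic cutoff, i.e.\ a counter bounding the number of configurations per round; but a halting $2^{O(s(n))}$-space DTM can run for $2^{2^{O(s(n))}}$ steps, so that counter needs $2^{O(s(n))}$ bits, which an $s(n)$-space machine does not have. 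Worse, even the \emph{honest} path is too long: the true computation of the DTM has doubly exponential length, while by Lemma \ref{lem:qatm:upper-bound} an $s(n)$-space strong qATM halts within $2^{O(s(n))}$ steps on every path --- so a machine that processes the honest computation in full cannot be strong, and a machine that truncates rounds at length $2^{O(s(n))}$ loses completeness. Your justification that ``each round terminates after finitely many steps regardless of which universal outcome is taken'' because the simulated DTM is deterministic conflates the honest existential strategy with arbitrary existential behavior, and in any case finiteness is not enough. The paper's fix is precisely to avoid simulating a space-bounded machine: it first converts $\DSPACE{2^{O(s(n))}}$ into alternating \emph{time} via \cite{CKS81}, obtaining $\DSPACE{2^{O(s(n))}} \subseteq \ATIME{2^{O(s(n))}}$, and then (Lemma \ref{lem:qatm:lower-bound}) simulates a $t(n)$-\emph{time}-bounded ATM by a strong qATM of space $O(\log t(n))$; there an $O(s(n))$-bit counter suffices to count to $t(n) = 2^{O(s(n))}$ and to reject deterministically any round exceeding the time bound, which is exactly what makes every path halt.

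A secondary point on your backward inclusion: ``the machine is strong, so the recursion terminates'' does not by itself bound the space. K\"onig's lemma gives a finite tree, but with no quantitative depth bound the precision of the rational amplitudes you must store --- which grows linearly with depth --- could exceed $2^{O(s(n))}$. You need the explicit running-time bound of Lemma \ref{lem:qatm:upper-bound}, proved in the paper by a linear-algebraic argument (lifting the mixed-state evolution to a single matrix $E$ acting on $N^2$-dimensional vectors and using that the chain $\ker(E) \subseteq \ker(E^2) \subseteq \cdots$ stabilizes by step $N^2$). This is the nontrivial ingredient you correctly flag as a concern (``care is needed to represent the accumulated rational amplitudes without blowup'') but do not actually supply.
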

\begin{proof}
	From \cite{CKS81} and Lemma \ref{lem:qatm:lower-bound} (see below), we can follow that
	\begin{center}
$
		\DSPACE{2^{O(s(n))}} \subseteq \ATIME{s(n)} \subseteq \qASPACE{s(n)}.
	$
\end{center}
	From Lemma \ref{lem:qatm:q-ATM-simulated-by-NTM} (see below) and Savitch's theorem \cite{Sav70}, we can follow that
	\begin{center}
$
		\qASPACE{s(n)} \subseteq \NSPACE{2^{O(s(n))}} \subseteq \DSPACE{2^{O(s(n))}}.
	$ 
\end{center} 
\end{proof}

\begin{corollary}
	$ \logspace \subsetneq \qalogspace = \pspace \subsetneq \qapspace = \expspace $.
\end{corollary}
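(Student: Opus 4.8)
The plan is to read off this corollary from Theorem~\ref{thm:qatm:qaspace-equal-exp-dspace} by instantiating the space bound $s(n)$ at two scales, and then to obtain the two strict separations from the deterministic space hierarchy theorem. Since the theorem already equates $\qASPACE{s(n)}$ with $\DSPACE{2^{O(s(n))}}$, the entire statement reduces to substituting the right $s(n)$ and unfolding the definitions of $\qalogspace$, $\qapspace$, $\pspace$, and $\expspace$.

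First I would establish $\qalogspace = \pspace$. Taking $s(n) = \log n$, which is space-constructible and satisfies the hypothesis $s(n) \in \Omega(\log(s(n)))$ because $\log n = \Omega(\log\log n)$, the theorem yields $\qASPACE{\log n} = \DSPACE{2^{O(\log n)}}$. As $2^{O(\log n)}$ ranges exactly over $n^{O(1)}$, the right-hand side equals $\cup_{k>0}\DSPACE{n^k} = \pspace$, and since $\qalogspace = \qASPACE{\log n}$ by definition, this gives the desired equality. For $\qapspace = \expspace$ I would apply the theorem with $s(n) = n^k$ for each fixed $k > 0$ (again space-constructible and satisfying $n^k = \Omega(\log(n^k))$), obtaining $\qASPACE{n^k} = \DSPACE{2^{O(n^k)}}$. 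Taking the union over all $k$ and combining $\qapspace = \cup_{k>0}\qASPACE{n^k}$ with the definition $\expspace = \cup_{k>0}\DSPACE{2^{O(n^k)}}$ yields $\qapspace = \expspace$.

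Finally, the strict inclusions $\logspace \subsetneq \pspace$ and $\pspace \subsetneq \expspace$ are standard: the deterministic space hierarchy theorem gives $\DSPACE{\log n} \subsetneq \DSPACE{n}$ and $\DSPACE{n^k} \subsetneq \DSPACE{2^n}$, which separate $\logspace$ from $\pspace$ and $\pspace$ from $\expspace$, respectively. There is no genuine obstacle here once Theorem~\ref{thm:qatm:qaspace-equal-exp-dspace} is in hand; the only points needing a moment's verification are that the two chosen bounds meet the theorem's hypotheses and that $2^{O(\log n)}$ collapses to polynomial space, both of which are immediate.
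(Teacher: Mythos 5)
Your proposal is correct and is essentially the paper's own (implicit) argument: the corollary appears immediately after Theorem~\ref{thm:qatm:qaspace-equal-exp-dspace} with no separate proof, precisely because it follows by instantiating $s(n)=\log n$ and $s(n)=n^{k}$ (taking the union over $k$) and invoking the standard deterministic space hierarchy separations $\logspace \subsetneq \pspace \subsetneq \expspace$. The only point to state precisely is that $\pspace \subsetneq \expspace$ requires a single language in $\DSPACE{2^{n}}$ lying outside every $\DSPACE{n^{k}}$ simultaneously --- which the hierarchy theorem's diagonalization does provide --- rather than just the fixed-$k$ separations as literally quoted.
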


In the remaining part, we give some technical lemmata used in the proof of 
Theorem \ref{thm:qatm:qaspace-equal-exp-dspace}.
We begin with showing an upper bound on the running time of a space-bounded strong qATM.
\begin{lemma}
	\label{lem:qatm:upper-bound}
	For any $ s(n) \in \Omega(\log (n)) $,
	the running time of a $ s(n) $ space-bounded strong qATM can be at most $ 2^{O(s(n))} $.
\end{lemma}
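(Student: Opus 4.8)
The plan is to read ``running time'' as the height $h$ of the computation tree and bound it by $2^{O(s(n))}$. First I would record that a strong qATM halts on every path, so every branch is finite; since the tree is finitely branching, König's lemma makes the whole tree finite and $h$ is well defined. The classical part uses space $s(n) = \Omega(\log n)$, so there are only $N = 2^{O(s(n))}$ distinct classical configurations. Hence it suffices to prove that there is a constant $\kappa$, depending only on the (fixed) dimension $d$ of the quantum register, such that \emph{no classical configuration recurs more than $\kappa$ times along any single path}; this immediately yields $h \le N\cdot\kappa = 2^{O(s(n))}$.

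The engine is a pumping argument, carried out on \emph{unnormalized} state vectors. The key observation is that the structure of the subtree below a node depends only on its classical configuration together with the \emph{projective} quantum state, because both the liveness condition $E_{c,i}\ket{\psi}=0$ and the successor direction $[E_{c,i}\ket{\psi}]$ are scale invariant. Consequently, if a pair (classical configuration, projective state) recurred on a path, the subtree below would repeat, producing an infinite path and contradicting strongness; so all such pairs on a path are distinct. More usefully, suppose a configuration $c$ recurs at positions $t_1 < t_2$ with incoming unnormalized state $v$, and let $M$ be the product of the operation elements applied on the segment from $t_1$ to $t_2$, so that $Mv\neq 0$. If $M$ were non-nilpotent on $v$, i.e. $M^{j}v\neq 0$ for all $j$, then repeating that loop would give an infinite traversable path: whenever the full product $M^{j}v$ is nonzero, every prefix of the $j$-th loop is nonzero as well, so no branch is pruned. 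This again contradicts strongness. Therefore \emph{every traversable loop at a recurring configuration is nilpotent on its incoming state}, and a single fixed loop can be traversed at most $d$ times before the state is annihilated.

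The hard part, and where I expect the main obstacle, is upgrading this per-loop statement to a bound on the \emph{total} number of recurrences of $c$: the loops through $c$ need not be powers of one matrix, since consecutive returns to $c$ may traverse different intermediate configurations and hence apply different operators $M_1, M_2, \dots$, each individually nilpotent on its own incoming state. A naive ``dies within $d$ steps'' therefore fails. The resolution I would pursue is to observe that the loop operators at a fixed recurring configuration generate a matrix semigroup, and strongness forces \emph{every} traversable product in that semigroup to be nilpotent on the reachable states (otherwise a composite loop pumps to an infinite path). A Levitzki/Burnside-type finiteness theorem for semigroups of nilpotent $d\times d$ matrices then bounds the nilpotency class by a function $\kappa(d)$ of $d$ alone, so the quantum state is killed after at most $\kappa(d)$ returns to $c$, giving the desired per-configuration recurrence bound. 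Combining this with the configuration count yields $h \le N\cdot\kappa(d) = 2^{O(s(n))}$. I would finally remark that, along such a path, each state is a product of at most $h = 2^{O(s(n))}$ rational operation elements applied to a fixed rational vector, so the entries have bit-length $2^{O(s(n))}$; this bounded bit-complexity is not needed for the height bound itself but is exactly what the subsequent simulation of a strong qATM by a nondeterministic machine will rely on.
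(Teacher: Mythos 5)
Your overall strategy---bounding, along each path, the number of recurrences of every classical configuration by a constant depending only on the register dimension $d$---is genuinely different from the paper's proof. The paper (Appendix D, to which the proof of this lemma defers) never examines individual paths: it represents the entire nonhalting portion of the computation, aggregated over all branches, as an unnormalized $N\times N$ positive matrix, vectorizes it into an $N^2$-dimensional vector evolving under a single fixed linear map, and applies the kernel-stabilization fact ($\ker A\subseteq\ker A^2\subseteq\cdots$ stabilizes by the dimension) once in dimension $N^2$: if the nonhalting part is nonzero after $N^2$ steps it is nonzero forever, contradicting strongness. Your preliminary observations are all correct: K\"onig's lemma, the scale-invariance of traversability, the fact that a nonzero full product forces all partial products to be nonzero, and the bound of $d$ consecutive traversals of a single fixed loop via the kernel chain.

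However, there is a genuine gap at exactly the step you flag as the hard part. Levitzki's theorem requires a semigroup of \emph{nilpotent matrices}, whereas your pumping argument only yields that each traversable loop operator $M$ eventually annihilates the particular state it is applied to, i.e.\ $M^k v=0$ for that specific $v$. This is strictly weaker than nilpotency of $M$: already for $d=3$, the operator
$M=\left(\begin{array}{ccc} 1 & 0 & 0\\ 0 & 0 & 1\\ 0 & 0 & 0\end{array}\right)$
satisfies $Me_3=e_2\neq 0$ and $M^2e_3=0$, so it is a perfectly legitimate once-traversable, eventually-annihilating loop on $v=e_3$, yet it has eigenvalue $1$ and is not nilpotent. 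So ``nilpotent on the reachable states,'' as written, does not place you in the hypotheses of Levitzki, and the sentence invoking it does not go through. The repair is an invariant-subspace argument that your sketch omits: let $W$ be the span of all quantum states reachable at the recurring configuration $c$. Every classical loop operator $M$ at $c$ maps $W$ into $W$ (if $Mz\neq 0$ for a reachable $z$ then all partial products are nonzero, so $Mz$ is again reachable), and strongness plus the kernel chain give $M^{d}z=0$ for \emph{every} reachable $z$, hence $M^{d}W=0$; thus each loop operator restricted to $W$ is genuinely nilpotent. Since a concatenation of loops is a loop, these restrictions form a semigroup of nilpotent matrices on $W$, Levitzki applies there, and any product of $\dim W\le d$ loop operators vanishes on $W$, bounding the visits to $c$ per path by $d$ and the height by $N\cdot d$. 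With this repair your argument is correct (and even sharper than the paper's $N^2$ bound), but the passage from per-vector annihilation to matrix nilpotency on an invariant subspace is essential, not cosmetic.
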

\begin{proof}
	The proof follows from Appendix \ref{app:time-bound-for-AH-QTMs}.
\end{proof}

Due to Lemma \ref{lem:qatm:upper-bound},
we can also provide a nondeterministic space simulation of a given space-bounded strong qATM
by exponential blow-up.

\begin{lemma}
	\label{lem:qatm:q-ATM-simulated-by-NTM}
	For any space-constructible $ s(n) $,
	if $ \mathtt{L} $ is recognized by a $ s(n) $ space-bounded strong qATM $ \mathcal{A} $,
	there exists a $ 2^{O(s(n))} $ space-bounded NTM $ \mathcal{N} $ recognizing $ \mathtt{L} $.
\end{lemma}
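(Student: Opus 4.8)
The plan is to turn the running-time bound from Lemma~\ref{lem:qatm:upper-bound} into a space budget and then perform a recursive evaluation of the (now finite) computation tree of $\mathcal{A}$. By Lemma~\ref{lem:qatm:upper-bound}, on an input of length $n$ every computation path of the $s(n)$ space-bounded strong qATM $\mathcal{A}$ has length at most $T = 2^{O(s(n))}$, so the computation tree $\mathcal{T}_{\mathcal{A}}(x)$ has depth at most $T$ and, since the classical transition function has constant fan-out, each node has only a constant number of children. First I would fix a representation of a single node of this tree, i.e. a pair $(c,\ket{\psi})$ consisting of a classical configuration $c$ (storable in $O(s(n))$ space) together with the quantum state reached along the root-to-node path.

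The first technical point is to store $\ket{\psi}$ exactly in $2^{O(s(n))}$ space. The crucial observation is that, for deciding which universal branches are present in the tree, only the support of the state matters: by the definition of a qATM a child $c_i$ of a universal node survives precisely when $p_i = \braket{\widetilde{\psi_i}}{\widetilde{\psi_i}} > 0$, i.e. when $E_{c,i}\ket{\psi}\neq 0$, and this test is invariant under rescaling $\ket{\psi}$. Hence I never renormalize; I propagate the \emph{unconditional} vector $\ket{\widetilde{\psi}}$, which avoids square roots entirely. Since the register has fixed (constant) dimension and all operation elements have rational entries, I clear denominators by a single constant $d$ (the lcm of all denominators occurring in $\mathcal{A}$) and keep each vector as an integer numerator vector together with a power of $d$. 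After $t\le T$ operators the numerators and the denominator $d^{t}$ have bit-length $O(t) = 2^{O(s(n))}$, so a node of $\mathcal{T}_{\mathcal{A}}(x)$ occupies $2^{O(s(n))}$ space.

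Next I would describe $\mathcal{N}$ as a depth-first evaluator of $\mathcal{T}_{\mathcal{A}}(x)$ that maintains, on a work-tape stack, the configurations along the current root-to-node path (at most $T$ frames, each of size $2^{O(s(n))}$, for a total of $2^{O(s(n))}$ since $T\cdot 2^{O(s(n))}=2^{O(s(n))}$). The evaluator returns \textbf{accept} at an accepting leaf and \textbf{reject} at a rejecting one; at an existential node it returns the disjunction over the children; at a universal node it applies $\mathcal{E}_c$ to the stored $\ket{\widetilde{\psi}}$, forms the surviving children $c_i$ (those with $E_{c,i}\ket{\widetilde{\psi}}\neq 0$), and returns the conjunction of their values, reusing the same stack space for each child in turn. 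By the acceptance criterion for qATMs, $\mathcal{N}$ accepts $x$ iff the root evaluates to \textbf{accept}, which holds iff $\mathcal{T}_{\mathcal{A}}(x)$ contains a finite accepting subtree, i.e. iff $\mathcal{A}$ accepts $x$. (Nondeterminism is used, if one wishes, only to guess the surviving child at an existential node.)

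The step I expect to be the main obstacle is the universal node: nondeterministic guessing alone cannot certify a \emph{for-all} over the children, so the simulation must verify every surviving universal branch, which is exactly why I route the whole argument through the space-reusing recursion rather than a one-shot guess of an entire accepting subtree (whose size can be $2^{2^{O(s(n))}}$ and hence far too large to write down). The second, more quantum-flavoured obstacle is keeping the amplitudes exact in bounded space; this is handled cleanly by the unconditional-vector / cleared-denominator bookkeeping above, which works only because the qATM's acceptance depends on the state solely through its support and because all operators are rational.
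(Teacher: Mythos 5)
Your proposal is correct and takes essentially the same route as the paper: both use the time bound of Lemma~\ref{lem:qatm:upper-bound} to bound the depth of the computation tree by $2^{O(s(n))}$, then traverse that tree depth-first (path-by-path) within $2^{O(s(n))}$ space, observing that the register's state can be tracked exactly because its representation grows by only a constant amount per step. The only difference is organizational rather than substantive --- the paper guesses the existential strategy nondeterministically and rejects iff some universal path hits a rejecting leaf, whereas you run a full AND/OR evaluation with an explicit stack; your unnormalized-vector, cleared-denominator bookkeeping is a welcome sharpening of the paper's one-line precision remark.
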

\begin{proof}	
	Let $ x $ be a given input.
	The length of any computation path of $ \mathcal{A} $ on $ x $ can be at most $ 2^{c_1 s(|x|)} $ 
	and any configuration length can be at most $ c_2s(|x|) $,
	for appropriate numbers $ c_1 $ and $ c_2 $.
	We know that if $ x \in \mathtt{L} $, there exists a nondeterministic strategy
	that leads to an accepting subtree, and,
	if $ x \notin \mathtt{L}  $, the subtree of each nondeterministic strategy
	has at least one rejecting leaf (a leaf in which the decision of ``rejection" is given).
	$ \mathcal{N} $ nondeterministically implements each strategy of $ \mathcal{A} $ on $ x $,
	and, for each strategy, traces the corresponding subtree path-by-path by using $ 2^{O(s(|x|))} $ space.
	Note that $ 2^{O(s(|x|))} $ space is also sufficient to trace the content of the quantum register in a path
	since in each step, the precision of any amplitude can be increased by at most a constant,
	and so the space to hold the state of the register increases at most by a constant in each step.
	
	If $ \mathcal{N} $ detects a rejecting leaf for a strategy, then it rejects the input.
	If there is no such a leaf for a strategy, then it accepts the input.
	Therefore, if $ x \in \mathtt{L} $, 
	$ \mathcal{N} $ accepts the input in at least one of its nondeterministic branch;
	and, if $ x \notin \mathtt{L} $, the input is rejected in all nondeterministic branches.
\end{proof}

Now, we show that the bound given in Lemma \ref{lem:qatm:q-ATM-simulated-by-NTM} is actually tight.

\begin{lemma}
	\label{lem:qatm:lower-bound}
	For any log-space constructible $ t(n) \in \Omega(n) $,
	if $ L $ is recognized by a ATM $ \mathcal{A} $ running in time $ t(n) $,
	then there exists a $ O(\log (t(n))) $ space-bounded strong qATM $ \mathcal{A'} $ recognizing $ L $.
\end{lemma}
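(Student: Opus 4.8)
The plan is to adapt the q-alternation construction of Theorem~\ref{thm:RE-q-1AFA} together with the step-counting idea of Theorem~\ref{thm:strong-qAM-const}, upgrading the bookkeeping so that the machine both stays within $O(\log(t(n)))$ space and halts on every path. Let $\mathcal{A}$ be the given time-$t(n)$ ATM. First I would normalize $\mathcal{A}$ as in the proof of Theorem~\ref{thm:strong-qAM-const}: alternate existential and universal moves, pad every configuration to a common length $\ell = O(t(n))$, and install internal counters in $\mathcal{A}$ so that every root-to-leaf path has the same length $O(t(n))$ and ends in a halting configuration. Since $\mathcal{A}$ runs in time $t(n)$, every configuration in its computation tree has length $O(t(n))$ and every path has length $O(t(n))$.

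The strong qATM $\mathcal{A}'$ will trace a computation path of $\mathcal{A}$ configuration-by-configuration, exactly as the q-1AFA of Theorem~\ref{thm:RE-q-1AFA} traces the prover's messages: its existential states guess the symbols of the next configuration $c_{i+1}$ (and resolve the existential branches of $\mathcal{A}$), while its universal states drive the fixed-size quantum register to perform the successor-check comparing $\Next(c_i)$ with $c_{i+1}$ (and resolve the universal branches of $\mathcal{A}$). As in Theorem~\ref{thm:weak-qAM}, each successor-check encodes $\Next(c_i)$ and $c_{i+1}$ into the amplitudes of two register states and subtracts them, so that the reject operation element carries amplitude proportional to $\Next(c_i)-c_{i+1}$. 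Under the q-alternation semantics, where zero-amplitude universal branches are pruned, this is exactly what is needed: a genuine successor produces an amplitude of exactly $0$ and hence no rejecting branch, whereas any defect produces a nonzero amplitude and hence a rejecting leaf that is forced into every subtree containing that transition.

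The new ingredient is a pair of classical counters on the work tape: a position counter, which forces each guessed configuration to have length exactly $\ell = O(t(n))$, and a depth counter, which forces the traced path to have length $O(t(n))$. Because $t(n)$ is log-space constructible and $t(n)\in\Omega(n)$, both counters can be initialized and maintained in $O(\log(t(n)))$ space, and together with the constant-size buffer used to compute $\Next$ and the fixed-size register this yields the claimed space bound. Crucially, the counters also guarantee that $\mathcal{A}'$ is a \emph{strong} qATM: every branch either reaches a halting configuration of $\mathcal{A}$, or is killed by a failed successor-check, or overflows a counter, so in all three cases the branch halts and no computation path runs forever.

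Correctness then follows the pattern of Theorem~\ref{thm:RE-q-1AFA}. If $x\in L$, an accepting subtree of $\mathcal{A}$ lifts to a finite accepting subtree of $\mathcal{A}'$: the existential guesses reproduce the true configurations, every successor-check passes with no rejecting branch, and the universal branches mirror those of $\mathcal{A}$. If $x\notin L$, then for every nondeterministic strategy of $\mathcal{A}'$ either some guessed configuration is not a legal successor, in which case a nonzero-amplitude rejecting leaf appears, or all configurations are legal, so the strategy traces an honest computation of $\mathcal{A}$, which necessarily contains a rejecting leaf; either way every subtree has a rejecting leaf and $\mathcal{A}'$ rejects. I expect the main obstacle to be the bookkeeping that ties the three pieces together: ensuring that the universal quantum successor-checks interleave correctly with the alternation structure of $\mathcal{A}$ and with the two counters, so that the exact-zero-versus-nonzero amplitude dichotomy of the comparison is preserved while, simultaneously, every path is forced to halt within the space budget.
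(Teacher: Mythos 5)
Your proposal is correct and takes essentially the same route as the paper: the paper obtains $\mathcal{A'}$ by taking the qAM protocol of Theorem~\ref{thm:strong-qAM-s}, letting the verifier count up to $t(|x|)$ (possible since $t$ is log-space constructible) so that it can reject if no halting configuration arrives in time, collapsing the protocol to a single round in which auxiliary outcomes terminate with acceptance (as in Theorem~\ref{thm:RE-q-1AFA}), and then converting prover moves to existential states and verifier moves to universal states --- which is exactly the construction you describe directly at the qATM level, with the same successor-check register, the same zero-versus-nonzero amplitude dichotomy, and the same counters. One small caution: your enumeration of halting cases (halting configuration, failed check, counter overflow) omits the branches spawned by the auxiliary operation elements of the superoperators, which must be defined to terminate (with acceptance, as in Theorem~\ref{thm:RE-q-1AFA}) rather than restart, or strongness fails; since you invoke that construction as your template, this is a presentational omission rather than a gap.
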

\begin{proof}
	We know that any $ s(n) $ space-bounded ATM can be simulated by a 
	$ O(\log (s(n))) $ space-bounded qAM proof system, say $ (P,V) $,
	with perfect-completeness (Theorem \ref{thm:strong-qAM-s}).
	A generic schema of this simulation is as follows:
	\begin{equation*}
		\label{eq:loop}
		\mbox{
			\begin{minipage}{0.9\textwidth}
				\small
				\texttt{BEGIN LOOP} 
				\\
				\hspace*{10pt}
				%\texttt{$ V $ starts with its initial configuration}
				%\\
				%\hspace*{10pt}
				\texttt{$ V $ obtains a computation path of the ATM on the given input from the prover}
				\\
				\hspace*{10pt}
				\texttt{$ V $ processes this computation and makes a decision with some probability}
				\\
				\hspace*{10pt}
				\texttt{IF $ V $ makes a decision, THEN the computation (LOOP) is terminated}
				\\
				\texttt{END LOOP} 
			\end{minipage}			
		}
	\end{equation*}
	Let $ x $ be an input.
	In this simulation, $ V $ can deterministically check weather the length of
	of a configuration sent by the prover is $ c s(|x|) $ by using its work tape, 
	where $ c $ is an appropriate number.
	On the other hand, the maximum length of a single-round is determined by the prover.
	For example, for a valid computation, a honest prover can send $ 2^{O( s(|x|) )} $ configurations,
	and $ V $ can only count until $ O( s(|x|) ) $ by using a ``standard" counter.
	
	If we replace the simulated ATM with our $ t(n) $ time-bounded ATM $ \mathcal{A} $, 
	then the same protocol can still work with space bound $ O(\log(t(n))  ) $.
	Now, the maximum length of a single-round can be determined by the verifier
	since it can count $ t(|x|) $ in this case and can terminate the computation with decision of ``rejection"
	if the prover does not sent a halting configuration of $ \mathcal{A} $ until then. 
	We denote this new proof system as $ (P',V') $.
	By using the idea given in the proof of Theorem \ref{thm:RE-q-1AFA},
	we can define a new verifier $ V'' $ based on $ V' $ such that
	it terminates the computation with the decision of ``acceptance" when
	it observes the outcome of an auxiliary operation element,
	and so it implements only a single-round of $ (P',V') $.
	
	The analysis of the proof system $ (P',V'') $ is as follows.
	The computation is terminated in every branch.
	If $ x \in \mathtt{L} $, it is accepted by $ V'' $ with probability 1 (due to perfect-completeness)
	by the help of $ P' $.
	If $ x \notin \mathtt{L} $, 
	the input is always rejected by $ V' $ with some non-zero probability.

	As described in the proof of Theorem \ref{thm:RE-q-1AFA},	
	we can easily construct a $ O( \log( t(|x|) )) $ space-bounded qATM $ \mathcal{A'} $  based on $ V'' $, 
	which recognizes $ \mathtt{L} $:
	The universal states of $ \mathcal{A'} $ simulate $ V'' $, and
	the existential states of $ \mathcal{A} $ simulate the communications with all possible provers.
	If $ x \in \mathtt{L} $,
	there exists a finite \textit{accepting} subtree 
	whose existential moves correspond to the communication with $ P' $.
	If $ x \notin \mathtt{L} $,
	any finite subtree contains at least one rejecting leaf.
\end{proof}

\textbf{Acknowledgements.}
We would like to thank Andris Ambainis and A. C. Cem Say for their 
many helpful comments on some drafts of this paper.

% \newpage

\appendix

%%%%%%%%%%%%%%%%%%%%%%%%%%%%%%%%%%%%%%%%%%%%%%%%%%%%%%%%%%%%%%%%%%%%%%%%%%%
%%%%%%%%%%%%%%%%%%%%%%%%%%%%%%%%%%%%%%%%%%%%%%%%%%%%%%%%%%%%%%%%%%%%%%%%%%%
\section{A review of previous private protocols}
\label{app:review-private-protocols}
%%%%%%%%%%%%%%%%%%%%%%%%%%%%%%%%%%%%%%%%%%%%%%%%%%%%%%%%%%%%%%%%%%%%%%%%%%%
%%%%%%%%%%%%%%%%%%%%%%%%%%%%%%%%%%%%%%%%%%%%%%%%%%%%%%%%%%%%%%%%%%%%%%%%%%%

In this part, we review the private protocols for obtaining the results given in Facts 
\ref{fact:weakIPconst} and \ref{fact:IPconst}, which we will refer as
\textit{the weak-protocol} and \textit{the strong-protocol}, respectively.
(Since Fact \ref{fact:perfectIP} is a generalization of Fact \ref{fact:IPconst} \cite{DS92}, we omit its details.)
The main idea behind both protocols for a given language is to simulate a fixed machine 
recognizing the given language:
The prover sends to the verifier the computation of the simulated machine on a given input,
which is a sequence of configurations starting from the initial configuration,
and the verifier tries to verify the correctness of this sequence
and then gives a decision with respect to the halting configuration sent by the prover.
During the verification procedure,
the following tasks can easily be checked deterministically:
(i) each configuration has a correct format, and
(ii) the sequence starts with the initial configuration and ends with a halting configuration.
In the latter protocol,
the length of each configuration is also checked to be at most linear.
On the other hand, 
to check whether the prover sends a valid next configuration after each one is a non-trivial task.
We will call this task \textit{successor-check} and this is indeed 
where the private coin-flips come into play.

Since a single computation requires many adjunctive successor-checks and 
each of them contains many (private) coin-flips,
a decision on the input can only be given with a very small probability after passing a single computation.
Therefore, the prover sends the computation repeatedly in an infinite loop.
Depending on the machine and resources of the the verifier, 
either the verifier can always halt the computation with high probability
or the protocol can run forever in some cases.
For example, if the simulated machine never halts on the input and the prover honestly sends the corresponding computation, the verifier can never halt and give a decision.

Now, we give some protocol specific details.
We begin with the weak-protocol of Condon and Lipton \cite{CL89}.
In his seminal paper \cite{Fr81}, Freivalds presented a two-way probabilistic finite automaton (pfa)
recognizing language
\[
	\mathtt{FRE} = \{ a^{n_1}b^{n_1} a^{n_2}b^{n_2} \cdots a^{n_k}b^{n_k} 
		\mid n_1,\ldots,n_k >0, k>0  \}
\]
with bounded error.
Based on Freivalds' algorithm, 
Condon and Lipton proposed a private protocol such that
if a prover sends a member of $ \mathtt{FRE} $ repeatedly to a one-way pfa verifier,
then the verifier detects the memberships of the input with high probability.
If the prover sends some nonmembers of $ \mathtt{FRE} $ repeatedly to the same verifier,
then the verifier gives a decision of rejection with high probability.
Two-way finite automata with two-counters (2D2CA) are Turing-equivalent \cite{Mi67},
and their main configuration elements are the contents of the counters, which can be encoded unary.
Then successor-checks on a computation of a 2D2CA can be implemented by the protocol given by Condon and Lipton.
Let $ \mathtt{L} $ be a Turing-recognizable language and $ \mathcal{D} $ be the 2D2CA recognizing it.
For the members of $ \mathtt{L} $, 
the (honest) verifier sends finite valid configurations, and so the verifier accepts the input 
them with high probability.
For the nonmembers of $ \mathtt{L} $,
the verifier can only accept if the last configuration of the computation is an accepting one.
This means that the computation contains at least one defect, 
and so the probability of rejection is greater than the probability of acceptance due to the defect.
Since a prover can never sends a halting configuration, the protocol has a weak-soundness.

In the case of the strong-protocol of Dwork and Stockmeyer \cite{DS92}, 
the simulated machine is an $ O(n) $ space-bounded ATM, say $ \mathcal{A} $.
In order to simplify the proof, some inessential assumptions are made on $ \mathcal{A} $:
Roughly, each existential or universal transition leads to exactly two branches,
there is no consecutive two existential or universal branching,
$ \mathcal{A} $ uses only $ |x|+2 $ space, 
$ \mathcal{A} $ always halts exactly after $ 2^{c|x|} $ branching steps by keeping a counter, and so
$ \mathcal{A} $ never enters a loop, where $ x $ is a given input and $ c $ is an appropriate constant.
Note that the computation tree of $ \mathcal{A} $ on $ x $ has $ 2^{2^{c|x|}} $ leafs.
The prover repeatedly sends the computation of some paths of this tree, 
in which the nondeterministic choices are made by the prover 
and the universal choices are made by the verifier, i.e. 
the verifier flips a coin and sends the outcome to the verifier.
The configurations on a computation are separated by $ \dollar $'s as
\[
	\cdots \dollar c_{i} \dollar c_{i+1} \dollar c_{i+2} \dollar \cdots,
\]
where $ i>0 $.
Since each configuration of $ \mathcal{A} $ on $ x $ is fixed length ($ |x|+3 $), 
the verifier can implement the successor-check by deterministically comparing the symbols 
at distance $ |x|+4 $ by using its input head.\footnote{If the simulated machine is 
$ s(n) \in \omega(n) $ space-bounded, then
$ \log(s(n)) $ space-bounded verifier is sufficient for a similar check,
where $ s $ is a space-constructible function.}
If there is not exactly one $ \dollar $ between two symbols, 
then the input is rejected by the verifier.
The private part of the successor-check is that 
the verifier always selects the first symbol randomly and
repeats this selection after each comparison.
That is, if $ l>0 $ is the index of a selected symbol,
then it is compared with $ (l+|x|+4)^{th} $ symbol, and then
a new symbol with an index between $ (l+|x|+4) + 1 $ and $ (l+|x|+4) + |x|+4 $ is selected.
Since the protocol is private, the prover can never know which two symbols are compared.
Thus, any defect on the computation can always be detected by the verifier with some probability,
which is sufficiently greater than any accept probability.
If there is no defect, the accepting probability is still very small so that 
a single rejecting probability on a leaf 
can always dominate the cumulative accepting probabilities from all the other leafs.
(We omit the details here
but the same issue is also detailed in the proof of Theorem \ref{thm:strong-qAM-const}.)
Therefore, for the strings accepted by $ \mathcal{A} $,
none of successor-check produces a rejecting probability, and so
the verifier accepts the input exactly by help of a honest prover.
For the strings rejected by $ \mathcal{A} $, 
if the input is not rejected by any successor-check,
the input is rejected at least one path which is sufficient to dominate all the accepting decisions.
Note that, if the prover never sends a halting configuration,
the verifier detects infinitely many defects, which are sufficient to reject the input with probability 1,
by using its input head.

As seen from the details, the private methods used by the protocols are different.
In the former one, the verifier privately collects statistical evidence from over the computations 
to decide their correctness. 
Moreover, two-sided error is necessary in this case due to Theorem \ref{thm:char-of-perfectIP}.
In the latter case, the verifier can force the prover to send linear size configurations and then
detects the defects directly.
The latter protocol has also perfect-completeness.

%%%%%%%%%%%%%%%%%%%%%%%%%%%%%%%%%%%%%%%%%%%%%%%%%%%%%%%%%%%%%%%%%%%%%%%%%%%
%%%%%%%%%%%%%%%%%%%%%%%%%%%%%%%%%%%%%%%%%%%%%%%%%%%%%%%%%%%%%%%%%%%%%%%%%%%
\section{The proof of Theorem \ref{thm:char-of-perfectIP}}
\label{app:proof-of-char-of-perfectIP}
%%%%%%%%%%%%%%%%%%%%%%%%%%%%%%%%%%%%%%%%%%%%%%%%%%%%%%%%%%%%%%%%%%%%%%%%%%%
%%%%%%%%%%%%%%%%%%%%%%%%%%%%%%%%%%%%%%%%%%%%%%%%%%%%%%%%%%%%%%%%%%%%%%%%%%%

The proof of 
$
	\perIP{s(n)} \subseteq \perIPw{s(n)} \subseteq \ASPACE{2^{2^{O(s(n))}}}
$
follows from Lemma \ref{lem:app:proof-of-char-of-perfectIP} (see below)
and \cite{CKS81}, where $ s(n) \in \Omega(\log(n)) $ is space-constructible.

We begin with a definition:
The prover-verifier pair $ (P,V) $ is an \textit{unbounded-error IPS having perfect completeness} for $ L $
if it has a perfect-completeness, i.e. satisfying condition ($ 1' $), and 
has the following soundness condition:
\begin{enumerate}
	\item[2$ '' $.] for all $ x \notin L $, and all provers $ P^* $, the probability that $ (P^*,V) $ accepts $ x $
		is less than 1.
\end{enumerate}
The classes defined by unbounded-error IPS having perfect completeness will be shown by $ \perunIP{\cdot} $.

\begin{lemma}
	\label{lem:app:proof-of-char-of-perfectIP}
	For any space-constructible $ s(n) \in \Omega(\log(n)) $,
	\[
		\perunIP{s(n)} \subseteq \DSPACE{2^{2^{2^{O(s(n))}}}}.
	\]
\end{lemma}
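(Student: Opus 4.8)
The plan is to reduce membership to a qualitative, almost-sure reachability question on a finite (though enormous) combinatorial object, and then to solve that question in deterministic space. Fix a verifier $V$ witnessing $L \in \perunIP{s(n)}$ and an input $x$ with $|x| = n$. Since $V$ uses $O(s(n))$ work-tape space and $s(n) = \Omega(\log n)$, the number of surface configurations of $V$ on $x$ (classical state, work-tape contents, both head positions, and the current communication-cell symbol) is $N = 2^{O(s(n))}$, the input-head position being absorbed into this count. By perfect completeness together with the soundness condition $2''$, we have $x \in L$ if and only if there is a prover $P^{*}$ with $\Pr[(P^{*},V) \text{ accepts } x] = 1$; that is, $x \in L$ exactly when some prover can steer the interaction so that an accepting configuration is reached with probability exactly $1$ (equivalently, the total probability of rejecting or of running forever is $0$).

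First I would model the interaction as a partially observable process controlled by the prover. The hidden state is the verifier's surface configuration; the verifier's private coin flips and internal moves are unobservable, while the symbol the verifier writes to the communication cell is the only thing the prover sees before responding. From the prover's viewpoint this is a partially observable Markov decision process whose objective is to reach the set of accepting configurations almost surely. The crucial structural fact is that, for an almost-sure (probability-exactly-$1$) reachability objective, whether the prover wins depends only on the \emph{support} of its belief — the set of configurations consistent with the observed communication history — and not on the exact probabilities: a configuration matters only insofar as it is reachable with positive probability. This replaces the continuum of belief states by the $2^{N} = 2^{2^{O(s(n))}}$ possible supports, turning the problem into a two-player reachability-under-safety game on the belief-support graph, where one player is the prover (choosing responses) and the other adversarially resolves the observed symbols and the positive-probability branchings.

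Finally I would solve this qualitative game. Starting from the singleton support containing the initial configuration, I compute, by the standard least/greatest fixed-point iteration for reachability-under-safety (Büchi-type) games, the set of supports from which the prover can force almost-sure acceptance while never being driven with positive probability into a permanently non-accepting region; then $x \in L$ iff the initial support is winning. Each vertex of this graph is a subset of the $N$ configurations, describable in $N = 2^{O(s(n))}$ bits, and the whole fixed-point computation can be carried out in space polynomial in the number of vertices, which sits comfortably inside $\DSPACE{2^{2^{2^{O(s(n))}}}}$ (a sharper accounting even gives $\DSPACE{2^{2^{O(s(n))}}}$). The main obstacle is the reduction in the second step: one must justify rigorously that support-based (finite-memory) prover strategies are sufficient to realize an acceptance probability of exactly $1$, that the verifier's genuine randomness may be abstracted to supports for this qualitative objective, and — the delicate point specific to unbounded error and possibly nonhalting computations — that almost-sure \emph{acceptance} correctly rules out strategies that reach acceptance only with probability strictly below $1$ on account of positive-probability infinite (looping) paths.
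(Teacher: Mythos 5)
Your proposal is correct and is essentially the paper's own argument in modern dress: the paper's finite tree $\mathcal{T}_V(x)$, whose nodes are \emph{subsets} of verifier configurations consistent with the same communication history, is exactly your belief-support construction, and its three-valued ($\mathsf{true}$/$\mathsf{false}$/$\mathsf{loop[depth]}$) evaluation with ``$\wedge$'' at read/observation nodes and ``$\vee$'' at prover-response nodes is exactly your qualitative almost-sure-reachability game, the $\mathsf{loop}$ values implementing the B\"uchi-type treatment of positive-probability infinite paths. The only real differences are that the paper directly proves, via this explicit construction, the support-sufficiency fact you invoke and flag as the main obstacle, and that your fixed-point computation on the support graph yields the sharper double-exponential space bound you mention, whereas the paper unrolls the graph into a tree of depth $2^{|\mathcal{C}_V(x)|}$ and settles for triple-exponential time, which still suffices for the stated bound.
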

\begin{proof}
	Let $ \mathtt{L} \in \perunIP{s(n)} $.
	Then there exists an unbounded-error IPS having perfect-completeness $ (P,V) $ for $ \mathtt{L} $.
	We will show that a DTM can recognize $ \mathtt{L} $
	by using triple-exponential time in $ s(n) $.
	
	Let $ x $ be an input string.
	Without loss of generality, we assume that
	the communication alphabet contains exactly two symbols $ \{0,1\} $.
	We represent the configuration set of $ V $ on $ x $ as  $ \mathcal{C}_{V}(x) $,
	whose size is exponential in $ s(|x|) $.
	We classify the configurations into five groups:
	\begin{enumerate}
		\item \texttt{read}: the ones in a reading state
		\item \texttt{comm-0}: the ones in a communication state ready to write 0 on the communication cell
		\item \texttt{comm-1}: the ones in a communication state ready to write 1 on the communication cell
		\item \texttt{acc}: the ones in an accepting state
		\item \texttt{rej}: the ones in a rejecting state
	\end{enumerate}	
	
	The computation tree of $ (P',V) $ on $ x $ can be infinite for a prover $ P' $.
	On the other hand, \textit{having perfect completeness} allow us to build a finite computation tree
	that concisely represents the computation of $ V $ on $ x $ and 
	its communications with all possible provers ($ P^{*} $).
	First of all, we do not need to keep the probabilities of the configurations
	since the input is either accepted with probability 1 or with a probability less than 1.
	(We do not need the cumulative sums of the accepting and rejecting probabilities.)
	Secondly, the length of any halting path must be bounded by a certain number steps.
	If the computation does not halt, then $ V $ must enter an infinite loop. 
	An infinite loop, \textit{in our case}, can either contributes to a halting path 
	or independent from the other parts of the computation. 
	We visualize both cases in Figure \ref{fig:looping}.
	The former case can be seen as a part of the halting path 
	since the probability of being in the loop approaches to zero and the halting path is re-traversed with 
	the decreasing probability after each cycle.
	However, the probability of being in the loop remains the same in latter case,
	and so it should be replaced with a rejecting path if detected.	
	
	\begin{figure}[!ht]
	\centering
	\includegraphics[height=30mm]{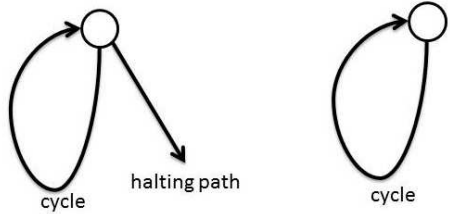}
	\caption{Two cases of infinite loops}
	\label{fig:looping}
	\end{figure}
	
	We denote our finite tree $ \mathcal{T}_{V}(x) $.
	(Note that we do not specify any prover since this tree represent all possible communication scenarios.)
	The structure and evaluation of $ \mathcal{T}_{V}(x) $ is similar to the computation tree of an ATM.
	The main difference of $ \mathcal{T}_{V}(x) $ is that 
	a node can take three different values instead of two values, 
	which are originally \textsf{true} and \textsf{false}.
	We give the details of how $ \mathcal{T}_{V}(x) $ can be constructed below.
	
	Since the protocol is private, we keep the configurations that follow the same communication strategy together.
	Therefore, each node of $ \mathcal{T}_{V}(x) $ represents a subset of $ \mathcal{C}_{V}(x) $.
	The root represents the initial configuration.
	We have four different types of inner nodes, i.e.
	\begin{center}
		\textsc{read-comm}, \textsc{comm-01}, \textsc{comm-0}, and \textsc{comm-1},
	\end{center}
	and
	three different types of leafs, i.e.
	\begin{center}
		\textsc{acc}, \textsc{rej}, and \textsc{loop}.
	\end{center}	
	A \textsc{read-comm} node, say \textsc{RC}, is a node that contains 
	at least one \texttt{read} configuration and may contain 
	some \texttt{comm-0} or \texttt{comm-1} configurations.
	The child node(s) of \textsc{RC} is (are) determined as follows:
	Each \texttt{read} configuration in \textsc{RC} divides into
	at most two child configurations in a single step.
	\begin{itemize}
		\item If the child is an \texttt{acc} (a \texttt{rej}) configuration, 
			then an \textsc{acc} (a \textsc{rej}) leaf is created and connected to the \textsc{RC}.
		\item If the child is a \texttt{read} configuration which is identical to 
			a \texttt{read} configuration in \textsc{RC},
			then a \textsc{loop} leaf is created and connected to \textsc{RC}.
		\item In any other case, each child should be one of 
			a \texttt{read} (not in \textsc{RC}), 
			a \texttt{comm-0}, or a \texttt{comm-1} configuration.
			All these children with the previous \texttt{comm-0} or \texttt{comm-1}
			configurations
			form a new node and connected to \textsc{RC}.
	\end{itemize}	
	The type of the new node given in the last item is determined conditionally.
	If its depth (in $ \mathcal{T}_{V}(x) $) 
	exceeds a certain number ($ 2^{| \mathcal{C}_V(x) |} $ -- the total number of all subsets of $ \mathcal{C}_V(x) $),
	it becomes a \textsc{loop} leaf since it must be a repetition of a previous
	node along the same path.
	
	Suppose that the depth of the new node does not exceed this number.
	If it contains at least one \texttt{read} configuration, 
	then it becomes a \textsc{read-comm} node again.
	If it contains both \texttt{comm-0} and \texttt{comm-1} configurations,
	then it becomes a \textsc{comm-01} node.
	If it contains only some  \texttt{comm-0} (\texttt{comm-1}) configurations,
	then it becomes a \textsc{comm-0} (\textsc{comm-1}) node.

	For each \textsc{comm-01} node, say \textsc{C01}, two new nodes are created and connect to \textsc{C01}.
	One of them becomes a \textsc{comm-0} node that contains all \texttt{comm-0} configurations of \textsc{C01}.
	The other one becomes a \textsc{comm-1} node that contains all \texttt{comm-1} configurations of \textsc{C01}.
	Both \textsc{comm-0} and \textsc{comm-1} are the communication nodes.
	We give the details for a \textsc{comm-0}, say \textsc{C0}, node.
	(The situation is exactly the same for a \textsc{comm-1} node.)
	Let $ c $ be a configuration in \textsc{C0}.
	In $ c $, the verifier writes $ 0 $ on the communication cell, and then
	receives $ 0 $ or $ 1 $. 
	Since we consider all possible communications,
	there are two next configurations evolved from $ c $ in a single step, 
	say $ c_0 $ and $ c_1 $.
	If  $ c_0 $ ($ c_1 $) is an \texttt{acc} or a \texttt{rej} configuration,
	then an \textsc{acc} or a \textsc{ref} leaf is created, respectively, and connected to the \textsc{C0}; or
	if $ c_0 $ ($ c_1 $) is a \texttt{comm-0} configuration which is identical to 
	a \texttt{comm-0} configuration in \textsc{C0},
	then a \textsc{loop} leaf is created and connected to \textsc{C0}.
	Otherwise, all $ c_0 $'s and $ c_1 $'s are collected into two new nodes and then
	connected to \textsc{C0}. The types of the new nodes are determined as explained above.
	We completed how $ \mathcal{T}_{V}(x) $ can be constructed.
	
	Now, we describe how $ \mathcal{T}_{V}(x) $ can be evaluated.
	We associate each inner node with ``$ \wedge $" or ``$ \vee $" operator:
	\textsc{read-comm} and \textsc{comm-01} are associated with ``$ \wedge $" (universal choice),
	and \textsc{comm-0} and \textsc{comm-1} are associated with ``$ \vee $" (nondeterministic choice).
	These operators determines the value of a node from the values of its children.
	There are three types of values: \textsf{true}, \textsf{false}, and \textsf{loop[depth]},
	where \textsf{depth} is a numeric value.
	Obviously, any \textsc{acc} (\textsc{rej}) leaf takes the value of \textsf{true} (\textsf{false}).
	Any \textsc{loop} leaf takes the value of \textsf{loop} with a \textsf{depth} value
	that represent the smallest depth of the repeated node.	
	
	If the computation tree just contains \textsf{true} and \textsf{false} values,
	then its evaluation becomes trivial (exactly the same as alternation).
	The non-trivial part is how to include the \textsf{loop} into the evaluation.
	\begin{comment}
	We begin with a basic observation on infinite loops in the computation:
	An infinite loop in the computation can either contributes to a halting path 
	or independent the other part of the computation. 
	We visualize both cases in Figure \ref{fig:looping}.
	The former case can be seen as a part of the halting path 
	since this infinite loop only contributes to the rejecting path.
	However, the latter case can be replaced with a rejecting path if detected.
	Since we can detect the infinite loops in our tree, 
	we can also specify their types.
	This is indeed why we keep the (smallest) depth of a loop.
	\end{comment}
	Suppose that a  node associated with ``$ \wedge $" has $ k>0 $ child/children, whose values are represented
	by $ v_1, \ldots, v_k $.
	The value of the node can be calculated as follows:
	\[
		v_1 \wedge ( v_2 \wedge \cdots (v_{k-2} \wedge ( v_{k-1} \wedge v_k ) ) ),
	\]
	where binary relation ``$ \wedge $" is defined as follows:
	\[
		\begin{array}{|l|c|c|c|}
			\hline
			\multicolumn{1}{|c|}{ \wedge } & \mathsf{~~~~true~~~~} & \mathsf{~~~~false~~~~} & \mathsf{loop[d_1]}
			\\
			\hline
			\mathsf{true} & \mathsf{true} & \mathsf{false} & \mathsf{true}
			\\
			\hline
			\mathsf{false} & \mathsf{false} & \mathsf{false} & \mathsf{false}
			\\
			\hline
			\mathsf{loop[d_2]} & \mathsf{true} & \mathsf{false} & \mathsf{loop[\mathit{min}\{d_1,d_2\}]}
			\\
			\hline
		\end{array}
	\] 
	Since it is a universal choice, any \textsf{false} value beats all the other values.
	Any \textsf{true} value beats any \textsf{loop} value, since this loop contributes the accepting path.
	Between two \textsf{loop} values, we select the one having smaller depth.
	The value of a ``$ \vee $" node can be calculated in a similar way with its specific rules:
	\[
		\begin{array}{|l|c|c|c|}
			\hline
			\multicolumn{1}{|c|}{ \vee } & \mathsf{~~~~true~~~~} & \mathsf{~~~~false~~~~} & \mathsf{loop[d_1]}
			\\
			\hline
			\mathsf{true} & \mathsf{true} & \mathsf{true} & \mathsf{true}
			\\
			\hline
			\mathsf{false} & \mathsf{true} & \mathsf{false} & \mathsf{loop[d_1]}
			\\
			\hline
			\mathsf{loop[d_2]} & \mathsf{true} & \mathsf{loop[d_2]} & \mathsf{loop[\mathit{min}\{d_1,d_2\}]}
			\\
			\hline
		\end{array}
	\] 
	Since it is an existential (nondeterministic) choice, any \textsf{true} value beats all the other values.
	Any \textsf{loop} value beats any \textsf{false} value,
	since the corresponding loop may contribute an accepting path in a lower depth.
	Between two \textsf{loop} values, we again select the one having smaller depth.
	The last thing about the evaluation is that 
	if a node takes value of \textsf{loop} and the depth of the loop refers to this node,
	then the value of the node is changed to \textsf{false}
	since this loop does not contribute any accepting path.
	The value of the root is the value of the tree.
	
	In case of $ x \in \mathtt{L} $, we know that there exists a nondeterministic strategy 
	(corresponding to the communication with $ P $) on the tree
	such that it does not lead to any rejecting path, and any infinite loop must contribute some accepting paths.
	Therefore, the value of the root is set to \textsf{true}.
	
	In case of $ x \notin \mathtt{L} $, we know that for every nondeterministic strategy
	(corresponding to the communication with $ P^{*} $),
	there must be a nonzero rejecting path or a nonzero looping path (not contributing any accepting path)
	that dominates all the other opponent values during the evaluation.
	Therefore, the value of the root is set to \textsf{false}.
	
	A DTM can construct and evaluate $ \mathcal{T}_V(x) $ in a straightforward way.
	Since the depth of the tree is $ 2^{| \mathcal{C}_V(x) |} $,
	the total running time of the DTM is double-exponential in $ |\mathcal{C}_V(x)| $.
	Since $ |\mathcal{C}_V(x)| $ is exponential in $ s(|x|) $,
	then the total running time becomes triple-exponential in $ s(|x|) $.
\end{proof}

%%%%%%%%%%%%%%%%%%%%%%%%%%%%%%%%%%%%%%%%%%%%%%%%%%%%%%%%%%%%%%%%%%%%%%%%%%%
%%%%%%%%%%%%%%%%%%%%%%%%%%%%%%%%%%%%%%%%%%%%%%%%%%%%%%%%%%%%%%%%%%%%%%%%%%%
\section{A constant-space qAM protocol for $ \subsetsum$}
\label{app:qAM-for-subsetsum}
%%%%%%%%%%%%%%%%%%%%%%%%%%%%%%%%%%%%%%%%%%%%%%%%%%%%%%%%%%%%%%%%%%%%%%%%%%%
%%%%%%%%%%%%%%%%%%%%%%%%%%%%%%%%%%%%%%%%%%%%%%%%%%%%%%%%%%%%%%%%%%%%%%%%%%%

In this appendix, we present a qAM system having a finite-state verifier for the well-known NP-complete language 
$ \subsetsum $, which is the collection of all strings of the form 
$ S \dollar a_{1} \dollar  \ldots \dollar a_{n} \dollar $
such that $ S $ and the $ a_{i} $'s are numbers in binary ($ 1 \le i \le n $),
and there exists a set $ I \subseteq \{1,\ldots,n\} $ satisfying $ \sum_{i \in I} a_{i} = S $, where $ n>0 $.

\begin{lemma}
	\label{lem:subset-sum}
	$ \subsetsum \in \qAM{1} $.
\end{lemma}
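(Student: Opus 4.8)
The plan is to give a finite-state qAM protocol for $ \subsetsum $ that reuses the central \emph{subtraction trick} of Theorem \ref{thm:weak-qAM}: encode two candidate quantities into the amplitudes of two register states and then reject with a probability proportional to the square of their difference, so that equality is certified by a zero rejection probability (perfect-completeness) while inequality produces a detectable rejection. Given an input $ S \dollar a_1 \dollar \cdots \dollar a_n \dollar $, the prover is asked to announce a subset $ I \subseteq \{1,\ldots,n\} $ by sending, for each $ a_i $, one bit indicating whether $ i \in I $. The verifier's task splits into two independent checks, each run in its own branch (selected by a fair superoperator outcome) so that a single finite-state pass suffices.

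First I would have the verifier, in one branch, accumulate $ \sum_{i \in I} a_i $ into the amplitude of one register state as the $ a_i $'s stream by, exactly as $ \Next(c_i) $ was built up digit-by-digit in the main proof: each selected $ a_i $ is appended to a running base-$ m $ number held in an amplitude, with the $ \frac{1}{d} $ scaling per symbol. Simultaneously it encodes $ S $ (read at the front of the input) into a second amplitude. A final comparison superoperator subtracts the two amplitudes and rejects with probability proportional to $ (\sum_{i \in I} a_i - S)^2 $, which is $ 0 $ iff the announced subset sums to $ S $, and otherwise bounded below by a fixed multiple of $ \left(\frac{1}{d}\right)^{2\ell} $ where $ \ell $ is the relevant length. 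This is the place where the quantum register is essential: a finite-state classical verifier cannot compare two arbitrarily long binary numbers, but the amplitude-subtraction does it in constant space, just as footnoted in the main proof.

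Second, because a cheating prover could announce an \emph{inconsistent} or ill-formatted subset, the verifier must in another branch deterministically check the syntactic well-formedness of the input and of the prover's bit-stream (one bit per $ a_i $, correct delimiters), rejecting immediately on any defect; these are finite-state checks using the single input head moving left-to-right, so no work tape is needed. The acceptance amplitude (from the state playing the role of $ \ket{q_1} $) is made small enough relative to any rejection so that, for $ x \notin \subsetsum $, where \emph{every} announced subset either fails the sum-check or the format-check, the cumulative rejection probability dominates the cumulative acceptance probability across all prover strategies. As in Theorem \ref{thm:weak-qAM}, running the protocol in an infinite loop and choosing $ m $ (hence $ d $) large drives the one-sided error below any desired constant, and since a genuine witness $ I $ yields exactly zero rejection in the sum-branch, the protocol retains perfect-completeness, giving $ \subsetsum \in \perqAM{1} \subseteq \qAM{1} $.

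The main obstacle I expect is the bookkeeping that keeps everything finite-state: the running subset-sum must be folded into an amplitude incrementally (the verifier cannot store the partial sum on a tape), so the superoperators must shift the accumulated base-$ m $ value by the correct number of digits as each new $ a_i $ arrives, handling the variable bit-lengths of the $ a_i $'s with a fixed buffer exactly as the $ |c_1| $-versus-$ |\Next(c_1)| $ length cases were handled in the main proof. Verifying that the error analysis still closes — that the smallest nonzero rejection probability in a single round outweighs the total accepting probability summed over all (finitely many, but exponentially numerous) bad subsets — is the delicate quantitative step, and it is resolved by the same $ \frac{1}{m^2+1} $ bound argument and amplitude-ratio reasoning used in Theorems \ref{thm:weak-qAM} and \ref{thm:strong-qAM-const}.
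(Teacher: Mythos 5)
Your proposal is essentially the paper's own protocol: the prover announces the subset one bit per $ \dollar $ delimiter, the verifier encodes the relevant quantities into amplitudes by the digit-by-digit base encoding with uniform $ \frac{1}{d} $ scaling, and a subtraction operator rejects with probability proportional to the square of the difference, giving perfect completeness for members and a fixed per-round reject-to-accept ratio (the paper gets reject $ \geq 9\times $ accept, hence error $ \leq \frac{1}{10} $, amplifiable) for nonmembers. Two mechanical points deserve correction, though. First, the step you flag as the delicate one is the one that would fail as you describe it: ``appending'' each selected $ a_i $ to a running base-$m$ number, or ``shifting the accumulated value by the correct number of digits,'' computes a \emph{concatenation}, not the sum $ \sum_{i \in I} a_i $. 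The paper's fix is that no shifting is ever needed: each $ a_i $ is encoded into a separate scratch amplitude ($ \ket{q_3} $) while the accumulator amplitude is left alone (relative to the reference amplitude $ \ket{q_1} $, which absorbs the same $ \frac{1}{d} $ scaling, so variable bit-lengths are automatically handled), and at the delimiter a linear operator folds the scratch value into the accumulator and zeroes the scratch --- in fact the paper subtracts each selected $ a_i $ from the amplitude holding $ S $, maintaining $ S - T $ directly, so the final test is just whether that single amplitude is zero. Second, your worry about the rejection probability having to dominate an acceptance probability ``summed over all exponentially numerous bad subsets'' is misplaced here: in each round the prover commits to one subset, and the comparison is purely per-round; that cumulative-over-leaves concern is the issue addressed in Theorem \ref{thm:strong-qAM-const}'s ATM simulation, not in this lemma.
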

\begin{proof}
	We assume that the input to be of the form  
	$ S \dollar a_{1} \dollar  \ldots \dollar a_{n} \dollar, $
	where $ S $, the $ a_{i} $'s are numbers in binary ($ 1 \le i \le n $), and $ n>0 $.
	(If not, the input is immediately rejected.)
	The input is written between two $ \# $ symbols on the input tape 
	and its head is not allowed to cross these boundaries. 
	
	The main idea is that the verifier scans the input from left to right in an infinite loop
	and firstly encodes $ S $, and then subtracts the encoding of each of the  $ a_i $'s selected by the prover, 
	in some amplitudes of the states on the quantum register.
	And at the end of the loop (round), the verifier tests whether the result is zero or not (described later).
	Since our encoding procedure works by reducing the amplitude with a constant in each step,
	the process can successfully be ended 
	with an exponentially small probability depending on the length of the input.
	Therefore, a new round is initiated with remaining huge probability.
	
	The register has 3 states, $ \{ q_{1}, q_2, q_{3} \} $.
	Each round is composed by five parts, described below.
	The details of superoperators applied in each round
	are given in Figure \ref{fig:superoperator-for-subsetsum},
	where each outcome is given under the operation element.
	The actions associated to each outcome are as follows:
	(i) The input head is moved forward if outcome ``$ f $" is observed,
	(ii) the input is accepted (rejected) if outcome ``$ a $" (``$ r $") is observed, and
	(iii) a new round is initiated if outcome ``$ i $" is observed.
	\begin{enumerate} 
		\item The finite register is initialized on symbol $ \# $:
			$ \ket{\psi_{0}} = (1 ~~ 0 ~~ 0)^{T} $.
		\item $ S $ is encoded into the amplitudes of $ \ket{q_{2}} $:
			$ \mathcal{E}_{\sigma} $ is applied on the quantum register when reading $ \sigma \in \{0,1\} $.
			Then, $ \mathcal{E}_{\dollar} $ is applied on the quantum register when reading $ \dollar $.
		\item Each $ a_{i} $  ($ 1 \leq i \leq n $) is encoded into the amplitude of $ \ket{q_{3}} $:
			$ \mathcal{E}_{\sigma}' $ is applied on the quantum register when reading $ \sigma \in \{0,1\} $.
		\item If an $ a_{i} $ ($ 1 \leq i \leq n $) is \textit{selected} by the prover on symbol $ \dollar $, 
			it is subtracted from the number represented by the amplitude of $ \ket{q_{2}} $:
			$ \mathcal{E'}_{\dollar} $ is applied on the quantum register.
			If it is not selected, $ \mathcal{E}_{\dollar}'' $
			is applied on the quantum register.
			Note that, the amplitude of $ \ket{q_{3}} $ is set to 0 after each of these transformations.
		\item The decision is given on $ \# $:
		$ \mathcal{E}_{\#} $ is applied on the quantum register when reading $ \# $.
	\end{enumerate}
	\begin{figure}[!ht]
		\centering
		\footnotesize
		\begin{tabular}{|ll|l|}
			\hline
			Part 2: 
			& & ~
				$ \mathcal{E}_{0} \mspace{-6mu} = \mspace{-6mu} $
				{\tiny
				$ 
					\left\lbrace
						\underbrace{
						\mspace{-2mu}
						\frac{1}{3}
						\mspace{-2mu}
						\left( 
						\mspace{-10mu}
						\begin{array}{rrr}
							1 & 0 & 0 \\
							0 & 2 & 0 \\
							0 & 0 & 1  \\
						\end{array}
						\mspace{-10mu}
						\right)
						}_{f},
						\underbrace{
						\frac{1}{3}\left( 
						\mspace{-10mu}
						\begin{array}{rrr}				
							2 & 0 & \mspace{-15mu} -2  \\
							2 & 0 & 2 \\
							0 & 2 & 0 \\
						\end{array}
						\mspace{-10mu}
						\right)
						}_{i},
						\underbrace{
						\frac{1}{3}\left( 
						\mspace{-10mu}
						\begin{array}{rrr}	
							0 & 1 & 0 \\
							0 & 0 & 0 \\
							0 & 0 & 0 \\
						\end{array}
						\mspace{-10mu}
					\right)
					}_{i}
					\right\rbrace ,
				$
				}
				$ \mathcal{E}_{1} \mspace{-6mu} = \mspace{-10mu}  $
				{ \tiny
				$ 					
					\left\lbrace
						\underbrace{
						\frac{1}{3}\left( 
						\mspace{-10mu}
						\begin{array}{rrr}
							1 & 0 & 0 \\
							1 & 2 & 0  \\
							0 & 0 & 1  \\
						\end{array}
						\mspace{-10mu}
						\right)
						}_{f},
						\underbrace{
						\frac{1}{3}\left( 
						\mspace{-10mu}
						\begin{array}{rrr}				
							2 & \mspace{-15mu} -1 & 0 \\
							1 & 0 & 2 \\
							1 & 0 & \mspace{-15mu} -2 \\
						\end{array}
						\mspace{-10mu}
						\right)
						}_{i},
						\underbrace{
						\frac{1}{3}\left( 
						\mspace{-10mu}
						\begin{array}{rrr}	
							1 & 0 & 0 \\
							0 & 2 & 0 \\
							0 & 0 & 0 \\
						\end{array}
						\mspace{-10mu}
					\right)
					}_{i}
					\right\rbrace
				$
				}
			\\
			\hline
			Part 2: & & ~
			$ \mathcal{E}_{\mspace{-3mu}\dollar} \mspace{-6mu} = \mspace{-10mu} $
			{ \tiny
				$
				\left\lbrace
					\underbrace{\frac{1}{3} \mathcal{I}}_{f}, 
					\underbrace{\frac{1}{3} 2\mathcal{I}}_{i}, 
					\underbrace{\frac{1}{3} 2\mathcal{I}}_{i}
				\right\rbrace 
				$}
			\\
			\hline
			Part 3: & & ~
				$ \mathcal{E}_{0}' = $
			{\tiny
				$ 
					\left\lbrace
						\underbrace{
						\frac{1}{3}\left( 
						\mspace{-10mu}
						\begin{array}{rrr}
							1 & 0 & 0 \\
							0 & 1 & 0 \\
							0 & 0 & 2 \\
						\end{array}
						\mspace{-10mu}
						\right)
						}_{f},
						\underbrace{
						\frac{1}{3}\left( 
						\mspace{-10mu}
						\begin{array}{rrr}				
							2 & 2 & 0 \\
							2 & \mspace{-10mu} -2 & 0 \\
							0 & 0 & 2 \\
						\end{array}
						\mspace{-10mu}
						\right)
						}_{i},
						\underbrace{
						\frac{1}{3}\left( 
						\mspace{-10mu}
						\begin{array}{rrr}	
							0 & 0 & 1 \\
							0 & 0 & 0 \\
							0 & 0 & 0 \\
						\end{array}
						\mspace{-10mu}
					\right)
					}_{i}
					\right\rbrace,
				$
			}
			$ \mathcal{E}_{1}' = $
			{ \tiny
				$ 
					\left\lbrace
						\underbrace{
						\frac{1}{3}\left( 
						\mspace{-10mu}
						\begin{array}{rrr}
							1 & 0 & 0 \\
							0 & 1 & 0 \\
							1 & 0 & 2 \\
						\end{array}
						\mspace{-10mu}
						\right)
						}_{f},
						\underbrace{
						\frac{1}{3}\left( 
						\mspace{-10mu}
						\begin{array}{rrr}				
							2 & 0 & \mspace{-10mu} -1 \\
							1 & 2 & 0 \\
							1 & \mspace{-10mu} -2 & 0 \\
						\end{array}
						\mspace{-10mu}
						\right)
						}_{i},
						\underbrace{
						\frac{1}{3}\left( 
						\mspace{-10mu}
						\begin{array}{rrr}	
							1 & 0 & 0 \\
							0 & 0 & 2 \\
							0 & 0 & 0 \\
						\end{array}
						\mspace{-10mu}
					\right)
					}_{i}
					\right\rbrace
				$
			}
			\\
			\hline
			Part 4: & & ~
				$ \mathcal{E}_{\dollar}' =  $
				{\tiny
				$ 
					\left\lbrace
					\underbrace{
					\frac{1}{3}\left( 
					\mspace{-10mu}
					\begin{array}{rrr}
						1 & 0 & 0 \\
						0 & 1 & \mspace{-10mu} -1 \\
						0 & 0 & 0 \\
					\end{array}
					\mspace{-10mu}
					\right)
					}_{f},
					\underbrace{
					\frac{1}{3}\left(
					\mspace{-10mu} 
					\begin{array}{rrr}				
						0 & \mspace{-10mu} -1 & 1 \\
						2 & 1 & \mspace{-10mu} -1 \\
						2 & -1 & 1 \\
					\end{array}
					\mspace{-10mu}
					\right)
					}_{i},
					\underbrace{
					\frac{1}{3}\left( 
					\mspace{-10mu}
					\begin{array}{rrr}	
						0 & 2 & 2 \\
						0 & 0 & 0 \\
						0 & 0 & 0 \\
					\end{array}
					\mspace{-10mu}
					\right)
					}_{i},
					\underbrace{
					\frac{1}{3}\left( 
					\mspace{-10mu}
					\begin{array}{rrr}	
						0 & 1 & 0 \\
						0 & 0 & 1 \\
						0 & 0 & 0 \\
					\end{array}
					\mspace{-10mu}
					\right)
					}_{i}			
				\right\rbrace,
				$
			}
			$ \mathcal{E}_{\dollar}'' =  $
			{\tiny
				$
					\left\lbrace
					\underbrace{
					\frac{1}{3}\left( 
					\mspace{-10mu}
					\begin{array}{rrr}
						1 & 0 & 0 \\
						0 & 1 & 0 \\
						0 & 0 & 0 \\
					\end{array}
					\mspace{-10mu}
					\right)
					}_{f},
					\underbrace{
					\frac{1}{3}\left( 
					\begin{array}{rrr}				
						2 & \mspace{-10mu} -2 & 0 \\
						\mspace{-10mu} 2 & 2 & 0 \\
						0 & 0 & 3 \\
					\end{array}
					\mspace{-10mu}
					\right)
					}_{i}			
				\right\rbrace
				$
			}
		\\
		\hline
		Part 5: & & ~
			$ \mathcal{E}_{\#} =  $
		{\tiny
			$
				\left\lbrace
					\underbrace{
					\frac{1}{3}\left( 
					\mspace{-10mu}
					\begin{array}{rrr}
						1 & 0 & 0 \\
						0 & 0 & 0 \\
						0 & 0 & 0 \\
					\end{array}
					\mspace{-10mu}
					\right)
					}_{a},
					\underbrace{
					\frac{1}{3}\left( 
					\mspace{-10mu}
					\begin{array}{rrr}				
						0 & 0 & 0 \\
						0 & 3 & 0 \\
						0 & 0 & 0 \\
					\end{array}
					\mspace{-10mu}
					\right)
					}_{r},
					\underbrace{
					\frac{1}{3}\left( 
					\mspace{-10mu}
					\begin{array}{rrr}	
						2 & 0 & 0 \\
						2 & 0 & 0 \\
						0 & 0 & 3 \\
					\end{array}
					\mspace{-10mu}
				\right)
				}_{i}
				\right\rbrace
			$
		}
		\\ 
		\hline
		\end{tabular}	
		\label{fig:superoperator-for-subsetsum}
		\caption{The details of superoperators used by the qAM system for $ \subsetsum $}
	\end{figure}		
	
	Let $ w $ be the input and $ T $ be the cumulative sum of selected $ a_{i} $'s  by the prover.
	Then, the state of the register before reading $ \# $ becomes
	{\footnotesize
	\[
		\ket{ \widetilde{ \psi_{|w|} } } = \left( \frac{1}{3} \right)^{|w|} 
		\left( \mspace{-10mu} \begin{array}{c}
						1 \\ S-T \\ 0
					\end{array} \mspace{-10mu} \right).
	\]
	}
	After applying $ \mathcal{E}_{\#} $, the input is rejected with probability
	\footnotesize
	\[
		\left( \frac{1}{3} \right)^{2|w|+2} 
		\left( 3 S -3 T \right)^{2},
	\]
	\normalsize
	which is at least $ 9 \left( \frac{1}{3} \right)^{2|w|+2} $ if $ S \neq T $
	and is exactly equal to 0 if $ S=T $.
	On the other hand, the input is always accepted with probability $ \left( \frac{1}{3} \right)^{2|w|+2} $.
	Therefore, if $ w \in \subsetsum  $, there exists a prover such that it is accepted exactly, and
	if $ w \notin \subsetsum  $,
	whatever the prover says, it is rejected with a probability at least $ \frac{9}{10} $.
	The error bound can be reduced to any desired value by using probability amplification.
\end{proof}

%%%%%%%%%%%%%%%%%%%%%%%%%%%%%%%%%%%%%%%%%%%%%%%%%%%%%%%%%%%%%%%%%%%%%%%%%%%
%%%%%%%%%%%%%%%%%%%%%%%%%%%%%%%%%%%%%%%%%%%%%%%%%%%%%%%%%%%%%%%%%%%%%%%%%%%
\section{A time-bound for absolutely-halting space-bounded quantum Turing machines}
\label{app:time-bound-for-AH-QTMs}
%%%%%%%%%%%%%%%%%%%%%%%%%%%%%%%%%%%%%%%%%%%%%%%%%%%%%%%%%%%%%%%%%%%%%%%%%%%
%%%%%%%%%%%%%%%%%%%%%%%%%%%%%%%%%%%%%%%%%%%%%%%%%%%%%%%%%%%%%%%%%%%%%%%%%%%

The first space-bounded QTM model was introduced  by Watrous \cite{Wa98,Wat99A}, 
and he showed that any such $ s(n) $ space-bounded QTM
that always halts on every input can run at most $ 2^{O(s(n))} $ steps,
where $ s(n) \in \Omega(\log(n)) $.
Since the nonhalting part of such a model can always be in a single pure (quantum) state,
the same result cannot be directly applicable to the QTMs 
whose halting part can be in a mixture of some pure states (mixed-state).
The qATM introduced in Section \ref{sec:q-alternation}
and the models introduced in \cite{Wa03,YS11A,vMW12} are some examples for the latter case.

On the other hand, since any mixed-state and the operator(s) applied to it can be represented
by a single vector and a single matrix, respectively, the result given by Watrous can be extended to general case.
We will provide an explicit proof of this result below.

A QTM can have both classical and quantum parts.
Let $ \mathcal{M} $ be such a space-bounded QTM and $ x $ be an input.
A standard configuration of $ \mathcal{M} $ on $ x $ is a pair of $ (c,\ket{d}) $,
where $ c $ is a configuration of the classical part and
$ \ket{d} $ is a (standard) basis vector of the quantum part.
%We denote all standard configurations of $ \mathcal{M} $ on $ x $ by $ \mathcal{C}_{\mathcal{M}}(x) $.
During the computation, $ \mathcal{M} $ can be in some mixture of $ (c,\ket{\psi}) $'s,
where each $ \ket{\psi} $ can be either a basis vector or a superposition of some basis vectors.

\begin{theorem}
	\label{thm:time-bound-for-AH-QTMs}
	Let $ N $ be the number of the standard configurations of 
	an absolutely-halting space-bounded QTM $ \mathcal{M} $ on a given input $ x $.
	Then, $ \mathcal{M} $ can run at most $ N^2 $ steps on $ x $.
\end{theorem}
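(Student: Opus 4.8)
The plan is to linearize the evolution of the (mixed) global state of $\mathcal{M}$ on $x$ and then invoke the standard fact that the kernels of the powers of a matrix stabilize within a number of steps equal to the dimension. Throughout, I fix an ordering of the $N$ standard configurations $(c,\ket{d})$ and identify the non-halting part of the machine's state at time $t$ with a (sub-normalized) density operator $\rho_t$ on the $N$-dimensional space they span. Since each halting configuration is simply dropped once reached, one transition of $\mathcal{M}$ acts on $\rho_t$ as follows: apply the superoperator dictated by each classical configuration and the scanned symbols, then project away the coordinates that have entered halting configurations. Because $x$ is fixed and the standard-configuration set is finite, this is a single \emph{fixed} (time-independent) linear map $\mathcal{T}$ on the space of $N\times N$ matrices.

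First I would pass to the natural (Liouville) representation: writing $v=\mathrm{vec}(\rho_0)$ for the vectorization of the initial state and letting $A$ be the $N^2\times N^2$ matrix determined by $\mathrm{vec}(\mathcal{T}(\rho))=A\,\mathrm{vec}(\rho)$, we obtain $\mathrm{vec}(\rho_t)=A^t v$. This is exactly the point flagged just before the statement: whereas the surviving state in Watrous's model is a single pure vector (dimension $N$), here the halting measurements can leave the continuing state in a \emph{mixture}, so the correct linear object is the density matrix, living in a space of dimension $N^2$. By definition $\mathcal{M}$ has surely halted by step $t$ iff $\rho_t=0$, i.e. iff $A^t v=0$; absolute halting on $x$ is precisely the statement that $A^{t_0}v=0$ for some finite $t_0$.

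The key step is the elementary lemma that for any $n\times n$ matrix $A$ the ascending chain $\ker A\subseteq\ker A^2\subseteq\cdots$ can grow strictly only until it stabilizes, and since each strict inclusion raises the dimension by at least one it must stabilize by index $n$; hence $\ker A^{n}=\ker A^{n+j}$ for all $j\ge 0$, so that $\bigcup_{m}\ker A^{m}=\ker A^{n}$. (Equivalently, in the form needed for Lemma~\ref{lem:qatm:upper-bound}: $A^{n}w\neq 0$ forces $A^{n+1}w\neq 0$.) Applying this with $n=N^2$ and $w=v$: the relation $A^{t_0}v=0$ places $v$ in $\bigcup_{m}\ker A^{m}=\ker A^{N^2}$, whence $A^{N^2}v=0$, i.e. $\rho_{N^2}=0$. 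Therefore every computation of $\mathcal{M}$ on $x$ has halted by step $N^2$.

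I expect the main obstacle to lie in making the first two paragraphs honest rather than in the final dimension count: one must verify that the one-step evolution of the surviving sub-normalized state really is a single fixed linear map on an $N^2$-dimensional space (so that the $A^t$ description is valid across classical mixtures and quantum superpositions alike), and that ``absolutely halting'' is faithfully captured by the \emph{exact} vanishing $A^{t_0}v=0$ rather than merely $\mathrm{tr}(\rho_t)\to 0$. Once these points are settled, the passage to $n=N^2$ and the kernel-stabilization lemma yield the bound of $N^2$ steps immediately.
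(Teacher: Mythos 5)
Your proposal is correct and takes essentially the same route as the paper's own proof: both represent the surviving non-halting part of the computation as a sub-normalized density matrix whose normalization can be ignored, vectorize the fixed one-step superoperator into a single $N^2\times N^2$ matrix acting on an $N^2$-dimensional vector, and invoke the kernel-stabilization fact $\ker A\subseteq\ker A^2\subseteq\cdots$ stabilizing by index $N^2$ to conclude exact vanishing by step $N^2$. The only difference is presentational: the paper states the final step contrapositively (if the vector at time $N^2$ is nonzero, no later one vanishes, so the machine is not absolutely halting), whereas you argue directly that $A^{t_0}v=0$ forces $v\in\ker A^{N^2}$.
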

\begin{proof}
	In space-bounded quantum computation, the computation is regularly observed 
	whether it is terminated or continued,
	and then the observed part is normalized.
	The nonhalting part of $ \mathcal{M} $ on $ x $ can be represented by an 
	$ N \times N $-dimensional density matrix.
	Since we consider whether this matrix is equal to zero matrix or not, 
	we can omit the normalization part.
	Based on the local transitions of $ \mathcal{M} $, we can defined some finite, say $ k $, $ N \times N $ matrices
	$ \{ E_1,\ldots,E_k \} $
	that represent one step transformation of the nonhalting part.
	Thus, we obtain the following matrix sequence that represent the nonhalting part for each step:
	\begin{equation}
		\label{eq:nonhalting-part}
		\nu_0,\nu_1,\nu_2,\ldots ~,
	\end{equation}
	where $ \nu_0 $ is the initial one and $ \nu_i $ represents the $ i^{th} $ ($ i>0 $) one obtained after $ i^{th} $  steps,
	which is calculated as:
	\[
		\nu_i = \sum_{j=1}^{k}  E_j \nu_{i-1} E_j^{\dagger}.
	\]
	If $ \mathcal{M} $ halts on every input absolutely, there must be an index $ i' $ such that $ \nu_{i'} = 0 $.
	As pointed out above, the sequence given in Eq. \ref{eq:nonhalting-part} can be represented by vectors, and
	each of them (except the initial one) can be obtained by applying a single operator (matrix) 
	to the previous vector in the sequence.
	That is, based on $ \nu_0 $ and $ \{ E_1,\ldots,E_k \} $, 
	we define an $ N^2 $-dimensional vector, say $ v_0 $,  and  $ N^2 \times N^2 $-dimensional matrix, say $ E $,
	respectively, and then Eq. \ref{eq:nonhalting-part} turns out be as follows:
	\[
		v_0,v_1,v_2,\ldots~,
	\]
	where 
	\[
		v_i = E v_{i-1} ~~ (i > 0).
	\]
	We refer the reader to Page 73 of \cite{Wa03} for the details of this conversion.
	Now, we will show that $ i' $ cannot be bigger than $ N^2 $ due to the following fact.
	
	\begin{fact}
	For any $ m $-dimensional vector $ u $, if $ A^{m}u \neq 0  $, then $ A^{m+j}u \neq 0 $,
	where $ A $ is an $ m \times m $-dimensional matrix and $ j>0 $.\footnote{Observe 
	that one can easily find an example of $ A $ and $ u $ such that $ A^{m-1}u \neq 0 $ but $ A^{m}u = 0 $.}
	\end{fact}
	\begin{proof}
		The proof can be easily obtained from the following well-known relation:
		\[
			ker(A) \subseteq ker(A^2) \subseteq \cdots \subseteq ker(A^m) = ker(A^{m+1}) = ker(A^{m+2}) = \cdots
		\]	
		That is, if $ u $ is not in $ ker(A^m) $, then $ u $ cannot be in $ ker(A^{m+j}) $ for any $ j>0 $.
	\end{proof}
	Thus we can say that if $ E v_{N^2} \neq 0 $, then $ v_{N^2+j} $ cannot be equal to zero for any $ j>1 $,
	i.e. $ \mathcal{M} $ cannot halt absolutely on $ x $.
	Therefore, $ i' $ cannot be bigger than $ N^2 $, 
	which is a quadratic bound in terms of the number of configurations.
\end{proof}

\begin{corollary}
	Any $ s(n) \in \Omega(\log(n)) $ space-bounded QTM that always halt on every input
	can run at most $ 2^{O(s(n))} $ steps.
\end{corollary}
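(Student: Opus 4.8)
The plan is to derive this immediately from Theorem~\ref{thm:time-bound-for-AH-QTMs}, which already bounds the running time of an absolutely-halting space-bounded QTM by $ N^2 $, where $ N $ is the number of its standard configurations on the given input. The only work that remains is to verify that $ N = 2^{O(s(n))} $ whenever the space bound is $ s(n) \in \Omega(\log(n)) $.

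First I would count the standard configurations. A standard configuration is a pair $ (c,\ket{d}) $, where $ c $ is a configuration of the classical part and $ \ket{d} $ is a standard basis vector of the quantum part. The classical configuration is determined by the finite control state, the positions of the input and work heads, and the contents of the $ O(s(n)) $-size work tape; since $ s(n) = \Omega(\log(n)) $, the input-head position (one of at most $ n+2 $ cells) contributes only a factor $ n = 2^{O(\log(n))} = 2^{O(s(n))} $, which is absorbed into the overall bound. Hence the number of classical configurations is $ 2^{O(s(n))} $. Likewise, the quantum workspace occupies $ O(s(n)) $ cells and therefore has $ 2^{O(s(n))} $ standard basis vectors. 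Multiplying the two counts yields $ N = 2^{O(s(n))} \cdot 2^{O(s(n))} = 2^{O(s(n))} $.

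Finally I would combine this count with Theorem~\ref{thm:time-bound-for-AH-QTMs}: since $ \mathcal{M} $ halts absolutely, it runs for at most $ N^2 = \left( 2^{O(s(n))} \right)^2 = 2^{O(s(n))} $ steps, which is exactly the claim. There is no genuine obstacle here, as the statement is a direct corollary of the quadratic bound already established; the only point warranting a moment's care is confirming that the $ n $-fold choice of input-head position does not inflate the estimate beyond $ 2^{O(s(n))} $, which holds precisely because $ s(n) = \Omega(\log(n)) $.
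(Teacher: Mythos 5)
Your proposal is correct and matches the paper's intended argument: the corollary is stated as an immediate consequence of Theorem~\ref{thm:time-bound-for-AH-QTMs}, with the (implicit) observation that the number of standard configurations is $N = 2^{O(s(n))}$, so the quadratic bound $N^2$ stays within $2^{O(s(n))}$. Your explicit accounting of the input-head position factor being absorbed precisely because $s(n) \in \Omega(\log(n))$ is exactly the detail the paper leaves to the reader.
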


\bibliographystyle{alpha}
\bibliography{YakaryilmazSay}

\end{document}